\documentclass[12pt]{article}
\usepackage{cite}
\usepackage{amsmath,amssymb,amsfonts,mathtools}
\usepackage{algorithmic}
\usepackage{graphicx}
\usepackage{textcomp}
\usepackage{xcolor}
\usepackage{amsthm}
\usepackage{mathabx}
\usepackage{xurl}
\usepackage{footmisc}
\usepackage{hyperref}
\usepackage{caption}
\usepackage{tablefootnote}
\usepackage{array}
\usepackage{float}
\usepackage{enumerate}
\usepackage{longtable}
\usepackage{wasysym}
\usepackage{indentfirst}

\usepackage{xpatch}
\makeatletter
\AtBeginDocument{\xpatchcmd{\@thm}{\thm@headpunct{.}}{\thm@headpunct{}}{}{}}
\makeatother

\usepackage{soul}

\newcommand{\ovariant}{\text{\st{$\mathtt{o}$}}}
\newcommand\ChangeRT[1]{\noalign{\hrule height #1}}  
\newcommand{\thickhline}{\noalign{\global\arrayrulewidth0.03cm}\hline
                      \noalign{\global\arrayrulewidth0.01cm}}

\DeclareCaptionType{equ}[][]
\theoremstyle{plain}

\newtheorem{theorem}{Theorem}[section]
\newtheorem{lemma}[theorem]{Lemma}
\newtheorem{prop}[theorem]{Proposition}
\newtheorem{cor}[theorem]{Corollary}

\theoremstyle{definition}

\newtheorem{example}[theorem]{Example}
\newtheorem{remark}[theorem]{Remark}

\begin{document}

\title{Construction of self-orthogonal codes over a commutative non-unitary ring  of order 25\footnotemark[1]  \footnotemark[2]}

\author{Jon-Lark Kim   \\ Department of Mathematics and \\ Institute for Mathematical and Data Sciences  \\ Sogang University, Seoul, Korea \\
		{\tt jlkim@sogang.ac.kr } \\
			\\ Marvin Olavides\footnotemark[3]  \\ Department of Mathematics and \\ Institute for Mathematical and Data Sciences  \\ Sogang University, Seoul, Korea \\
		{\tt mmolavides@gmail.com}   \\
		\\ Young Gun Roe \\ KNU Research Institute for Mathematical Sciences \\ Kangwon National University, Chuncheon, Korea  \\
		{\tt ygroe@naver.com}
	}
\date{}
	
\maketitle

\footnotetext[1]{J.-L. Kim was supported in part by the BK21 FOUR (Fostering Outstanding Universities for Research) funded by the Ministry of Education (MOE, Korea) and National Research Foundation of Korea (NRF) under Grant No. 4120240415042 and by Basic Science Research Program through the National Research Foundation of Korea (NRF) funded by the Ministry of Science and ICT under Grant No. RS-2025-24534992.}

\footnotetext[2]{Y.G. Roe was supported by Basic Science Research Program through the National Research Foundation of Korea (NRF) funded by the Ministry of Education (RS-2025-25415913).}

\footnotetext[3]{Corresponding author}

\begin{abstract}
\noindent Codes over non-unitary rings have been studied recently. In particular, codes over the commutative non-unitary ring $I_p$ (in the classification of Fine) of order $p^2$ where $p$ is a prime are being considered.  For $p=2$ (resp. $p=3$), three categories of codes over $I_p$ have been studied: self-orthogonal codes, quasi self-dual codes, and self-dual codes over $I_p$. Using some related mass formulas and building-up constructions, classifications of these codes have been done up to the permutation equivalence (resp. the monomial equivalence) for certain small lengths. In this paper, we take the prime $p=5$ and consider the ring $I_5$. We introduce the notion of linear codes over $I_5$. We also define the same three categories of linear $I_5$-codes, study the structures of these $I_5$-codes and relate them to their associated residue and torsion codes. 
We classify the three categories of codes completely in lengths at most $4$ up to the monomial equivalence for a given type $\{  k_1 , k_2 \}$. Moreover, in the paper of Alahmadi et al. regarding the mass formula for self-orthogonal codes over $I_p$, mistakes in the classification of quasi self-dual codes over $I_5$ had been made such as incorrect automorphism group order of some codes or inconsistency with the mass formula for self-orthogonal codes over $I_p$ for length $n=2$ and type $\{ 1 , 0 \}$ and for length $n=3$ and type $\{ 1, 1 \}$. We correct and improve such results.
\end{abstract}

{\bf{Keywords}} : self-orthogonal codes, quasi self-dual codes, building-up construction, non-unitary rings

{\bf{Mathematics Subject Classification}} :   94B05, 16D10

\section{Introduction}

Coding theory over rings has evolved into a significant area of modern algebraic coding theory, broadening classical concepts that were once confined to finite fields. The discovery by Hammons et al. \cite{HammonsKerdPrepCodes} that nonlinear binary codes such as the Kerdock and Preparata codes can be viewed as binary images under a Gray map of linear codes over $\mathbb{Z}_4$ opened an entirely new direction in the study of codes on more general algebraic structures. Since then, many works have revealed deep links between ring-based codes and topics such as lattices, modular forms, and cryptography \cite{ConwaySloane, DoughGabHarSol, HuffKimSole}.

Within this  framework, self-orthogonal and self-dual codes occupy a central role. They are known for their excellent distance properties and for their close relationships with combinatorial designs and lattices. Alahmadi et al. \cite{E3} mentioned that the classification of self-dual codes over unitary rings and finite fields has rested on two pillars: an algorithm to generate short length codes and a mass formula to signal the completion of the classification \cite{BalmBetNem, Kim_GFq}. One of the construction methods is the so called building-up method. By using a recursion on generator matrices, the building-up method creates a self-dual code of length $n+h$ from a self-dual code of length $n$. This is also called the propagation rule of order $h$. This method was first introduced for binary codes \cite{Kim_extremal} and later was extended to other fields and rings \cite{KimLeeEffConst, KimLeeEucHer, KimLeeMDSSD, LeeHLeeY}.

Codes over non-unitary rings have also been studied recently. In fact, some of the non-unitary rings considered were the rings of order $p^2$ where $p$ is prime in the classification of Fine \cite{Fine}. Seven of these rings are non-unitary and one of them is denoted by $E_p$. Apart from being non-unitary, $E_p$ is also non-commutative.  Alahmadi et al. \cite{TypeIV_E2} introduced the notion of linear codes over the ring $E_2$ and the notions of quasi self-dual codes (self-orthogonal codes of size $2^n$ where $n$ is the length of the code) and type IV codes (quasi self-dual codes with even weights) over $E_2$. Quasi self-dual codes and type IV codes over $E_2$ were  constructed using the building-up method  up to length $n = 12$ \cite{QSD_E2}. The building-up method was also used to classify self-orthogonal codes, one-sided self-dual codes, and self-dual codes over $E_3$ in lengths at most $7$ up to the monomial equivalence. Alahmadi et al. \cite{MassFormula_Ep} gave a mass formula for self-orthogonal codes over the general ring $E_p$ and classified self-orthogonal codes and self-dual codes over $E_p$ where $p = 3$, $5$, and $7$ in short lengths. A classification of self-orthogonal codes, one-sided self-dual codes, and self-dual codes over $E_3$ in lengths at most $7$ and up to the monomial equivalence was done using some related mass formulas and building-up constructions \cite{E3}. One can map codes over $E_p$ to additive codes over $\mathbb F_{p^2}$. Additive codes were originally introduced over $\mathbb{F}_4$ in the context of quantum error correction \cite{CalderbankRainsShorSloane}. Subsequently, the notion was extended to additive codes over $\mathbb{F}_{p^2}$ for odd primes $p$, preserving many of the same structural properties~\cite{Huffman}. This motivates our work.


Another non-unitary ring from the classification of Fine \cite{Fine} is denoted by $I_p$. Unlike $E_p$, $I_p$ is a commutative ring. Similar notions from codes over the ring $E_2$ were used to define linear codes over the commutative non-unitary ring $I_2$ including the notions of quasi self-dual codes, type IV codes, and quasi type IV codes (quasi self-dual codes with even torsion) over $I_2$  \cite{QuasiTypeIV_I2}. In a separate paper, classification of self-orthogonal codes over $I_2$ up to length $n=6$ \cite{BuildUp_I2} was given.  Kim et al. \cite{KimJLYGRoe_I2} extended this classification up to length $n=8$. Alahmadi et al. \cite{MassFormula_Ip} gave a mass formula for self-orthogonal codes over $I_p$ and classified self-orthogonal codes, quasi self-dual codes, and self-dual codes over $I_p$ where $p=3$, $5$, and $7$ for lengths up to $3$. With this mass formula and building-up method, self-orthogonal codes, quasi self-dual codes, and self-dual codes over the ring $I_3$ were classified completely up to length $n=4$ and partially up to length $n=5$ \cite{I3}.

Non-unitary rings such as $E_p$ and  $I_p$ provide a natural framework in which classical notions of duality interact with genuinely new structural phenomena arising from the presence of residue and torsion codes. In this setting, self-duality is often too restrictive, making quasi self-dual codes the appropriate analog. The building-up constructions considered are motivated by the need for explicit generation methods that preserve self-orthogonality or self-duality while respecting the algebraic structure of the ring, especially since mass formulas alone do not yield concrete representatives. These constructions enable systematic classification under the monomial equivalence and provide a practical framework for extending classical recursive techniques to codes over non-unitary rings.

Although $p = 3$ is the smallest odd prime, the case $p = 5$ is the first in which the algebraic and combinatorial structure of $I_p$-codes becomes sufficiently rich to produce a broad range of monomially inequivalent self-orthogonal and quasi self-dual codes. Over $\mathbb{F}_5$, the larger unit group and the associated quadratic and orthogonality relations give rise to a greater variety of admissible configurations than in the case $p = 3$. As a result, building-up constructions over $I_5$ generate substantially more diverse families of codes. In addition, the ring $I_5$ admits a natural reduction map onto $\mathbb{F}_5$, leading to a residue--torsion decomposition that connects codes over $I_5$ with classical linear codes over finite fields. This multilevel viewpoint allows problems over $I_5$ to be studied through well-understood structures over $\mathbb{F}_5$. Combined with explicit recursive constructions, it provides an effective framework for building and classifying longer codes from shorter ones. Consequently, the case $p = 5$ offers a suitable balance between structural complexity and computational tractability, making it particularly appropriate for a detailed classification study.

In this paper, by taking $p=5$, we study codes over the commutative non-unitary ring of order $25$ denoted by $I_5$. We modify some techniques to produce self-orthogonal and self-dual
codes as well as quasi self-dual codes over the ring $I_5$.
Following \cite{MassFormula_Ip}, quasi self-dual codes over $I_5$ are defined as self-orthogonal codes over $I_5$ of length $n$ and size $5^n$. We derive propagation rules of certain orders for self-orthogonal codes, quasi self-dual codes, and self-dual codes over $I_5$. We also classify completely self-orthogonal, quasi self-dual, and self-dual codes over $I_5$ up to length $4$. The classification is made under a monomial action. The mass formulas for these three types of codes are used as stopping criteria for code generation. Although the classification of quasi self-dual codes over $I_5$ has already been made \cite{MassFormula_Ip}, we point out that there are some mistakes in the classification. In particular, for length $n=2$ and type $\{ 1 , 0 \}$, although the result agree with the mass formula for quasi self-dual codes over $I_5$, some of the codes obtained are monomially equivalent to each other and some have wrong order $|\operatorname{Aut} (\mathcal{C})|$ of the automorphism group. For length $n=3$ and type $\{ 1, 1 \}$, the result does not agree with the related mass formula. Thus, we aim to correct these mistakes.

This paper is organized as follows. The next section presents the preliminary notions and notations needed in the latter sections. The propagation rules for the construction of self-orthogonal, quasi self-dual, and self-dual codes are derived in Section 3. Section 4 presents the complete classification for these three categories for a given a length $n \leq 4$ and type $\{  k_1 , k_2 \}$. It is also in this classification where the results obtained by Alahmadi et al. \cite{MassFormula_Ip} are corrected.  Section 5 gives the conclusion of the study and recommendations for further research.

\section{Preliminaries}

\subsection{Linear Codes over {\boldmath $\mathbb{F}_5$}}

The finite field of order $5$ is denoted by $\mathbb{F}_5$ and the vector space of $n$-tuples over $\mathbb{F}_{5}$ is denoted by $\mathbb{F}_{5}^{n}$. A linear code $\mathcal{C}$ over  $\mathbb{F}_{5}$ is a $k$-dimensional subspace of $\mathbb{F}_{5}^{n}$ and we call $\mathcal{C}$ an $[n, k]_{5}$ code if it has length $n$ and dimension $k$. Given $\mathbf{x} = (x_1 , \dotsc , x_n)$ and $\mathbf{y} = ( y_1 , \dotsc , y_n) \in \mathbb{F}_5^n$, the standard inner product is denoted by $( \mathbf{x} , \mathbf{y} ) \coloneqq \sum_{i=1}^{n}  x_i y _i$. The dual of a  linear code over $\mathbb{F}_5$ is denoted by $\mathcal{C}^{\perp}$ and is defined as
\begin{align*}
\mathcal{C}^{\perp} \coloneqq \left\{  \mathbf{y} \in \mathbb{F}_{5}^{n} \mid \forall \mathbf{x} \in \mathcal{C},(\mathbf{x}, \mathbf{y})=0 \right\}  .
\end{align*}
A code $\mathcal{C}$ is self-orthogonal if  $\mathcal{C} \subseteq \mathcal{C}^{\perp}$ and is self-dual if $\mathcal{C} = \mathcal{C}^{\perp}$.

The (Hamming) weight of a vector $\mathbf{x} \in \mathbb{F}_5^n$ is the number of its nonzero coordinates and is denoted by $\operatorname{wt}(\mathbf{x})$. For $\mathbf{x}, \mathbf{y} \in \mathbb{F}_{5}^{n}$, their  Hamming distance is defined by $d(\mathbf{x}, \mathbf{y}) = \operatorname{wt}(\mathbf{x}-\mathbf{y})$. The (minimum) distance $d(\mathcal{C})$ of a linear code $\mathcal{C}$ is
\begin{align*}
d(\mathcal{C}) = \min \{d(\mathbf{x}, \mathbf{y}) \mid \mathbf{x}, \mathbf{y} \in \mathcal{C}, \mathbf{x} \neq \mathbf{y}\}=\min \{ \operatorname{wt}  (\mathbf{c}) \mid \mathbf{c} \in \mathcal{C}, \mathbf{c} \neq \mathbf{0}\}  .
\end{align*}

A  linear code over $\mathbb{F}_5$ is called an $[n, k, d]_{5}$ code if it has length $n$, dimension $k$, and distance $d$. Using  \texttt{MAGMA} \cite{MAGMA} notation, we define the weight distribution of a code $\mathcal{C}$ to be the sequence
$$\left[   \langle 0 , 1  \rangle  ,  \dotsc ,  \langle i , A_i  \rangle   ,  \dotsc  ,  \langle  n  ,  A_n  \rangle    \right]$$
where $A_{i}$ denotes the number of codewords in $\mathcal{C}$ of weight $i$.

\subsection{The Ring {\boldmath $I_5$}}

Based on the classification of rings in \cite{Fine}, we define the ring $I_{5}$ on two generators $\mathtt{a}$ and $\mathtt{b}$ by the relations
$$
I_{5} \coloneqq \left\langle \mathtt{a} , \mathtt{b}   \mid 5 \mathtt{a}  = 5 \mathtt{b} = 0, \mathtt{a}^{2} = \mathtt{b},  \mathtt{a b}=0   \right\rangle  .
$$
Thus,  $I_{5}$ has  characteristic $5$ and consists of $25$ elements, i.e.,
$$	
I_5 = \left\lbrace  0 ,  \mathtt{a} , \mathtt{b}  ,    \mathtt{c} ,    \mathtt{d} ,   \mathtt{e} ,    \mathtt{f} ,   \mathtt{g} ,     \mathtt{h} ,     \mathtt{i} ,     \mathtt{j} ,     \mathtt{k} ,     \mathtt{l} ,     \mathtt{m} ,     \mathtt{n} ,   \ovariant  ,     \mathtt{p} ,     \mathtt{q} ,     \mathtt{r} ,     \mathtt{s} ,     \mathtt{t} ,     \mathtt{u} ,     \mathtt{v} ,     \mathtt{w} ,     \mathtt{x}        \right\rbrace
$$
where
\begin{gather*}
\mathtt{c} = \mathtt{a} + \mathtt{b} ,  \quad  \mathtt{d} = 2 \mathtt{b} ,  \quad   \mathtt{e}  =  2 \mathtt{a}  ,    \quad   \mathtt{f}  = \mathtt{e} + \mathtt{b}  ,  \quad   \mathtt{g} = \mathtt{a} + \mathtt{d}  ,  \quad   \mathtt{h} = \mathtt{e} + \mathtt{d} , \\
\mathtt{i} = 3 \mathtt{b} , \quad  \mathtt{j} =  3 \mathtt{a} , \quad  \mathtt{k} = \mathtt{j} + \mathtt{b}  ,  \quad   \mathtt{l} = \mathtt{a} + \mathtt{i} ,  \quad    \mathtt{m} = \mathtt{j} + \mathtt{i} ,  \quad   \mathtt{n}  =  4  \mathtt{b}  ,  \\
\ovariant = 4 \mathtt{a}  ,  \quad    \mathtt{p} = \ovariant + \mathtt{b} ,  \quad  \mathtt{q} = \mathtt{a} + \mathtt{n} ,  \quad   \mathtt{r} = \ovariant + \mathtt{n}  ,  \quad   \mathtt{s} = \mathtt{j} + \mathtt{d}  ,  \quad   \mathtt{t} = \ovariant + \mathtt{d}  ,    \\
\mathtt{u} = \mathtt{e} + \mathtt{i}  ,  \quad    \mathtt{v} = \ovariant +  \mathtt{i}  ,  \quad   \mathtt{w}  = \mathtt{e} + \mathtt{n}  ,  \quad   \mathtt{x}  =  \mathtt{j} + \mathtt{n}  .
\end{gather*}

Hence, each element of $I_5$ can be written as $c_{ij} = i \mathtt{a} + j \mathtt{b}$ where $0 \leq i , j < 5$. The addition and the multiplication tables for the ring $I_5$ are given in Tables \ref{add table} and \ref{mult table}.

\begin{table}[H]
\caption{Addition table for the ring $I_5$}  \label{add table}
\centering
\scalebox{0.9}{
\begin{tabular}{ c ! {\vrule width 1.5pt} c|c|c|c|c|c|c|c|c|c|c|c|c|c|c|c|c|c|c|c|c|c|c|c|c}

$+$ & $0$ & $\mathtt{a}$ & $\mathtt{b}$ & $\mathtt{c}$ & $\mathtt{d}$ & $\mathtt{e}$ & $\mathtt{f}$ & $\mathtt{g}$ & $\mathtt{h}$ & $\mathtt{i}$ & $\mathtt{j}$ & $\mathtt{k}$ & $\mathtt{l}$ & $\mathtt{m}$ & $\mathtt{n}$ & $\ovariant$ & $\mathtt{p}$ & $\mathtt{q}$ & $\mathtt{r}$ & $\mathtt{s}$ & $\mathtt{t}$ & $\mathtt{u}$ & $\mathtt{v}$ & $\mathtt{w}$ & $\mathtt{x}$ \\
\ChangeRT{1.5pt}
$0$ & $0$ & $\mathtt{a}$ & $\mathtt{b}$ & $\mathtt{c}$ & $\mathtt{d}$ & $\mathtt{e}$ & $\mathtt{f}$ & $\mathtt{g}$ & $\mathtt{h}$ & $\mathtt{i}$ & $\mathtt{j}$ & $\mathtt{k}$ & $\mathtt{l}$ & $\mathtt{m}$ & $\mathtt{n}$ & $\ovariant$ & $\mathtt{p}$ & $\mathtt{q}$ & $\mathtt{r}$ & $\mathtt{s}$ & $\mathtt{t}$ & $\mathtt{u}$ & $\mathtt{v}$ & $\mathtt{w}$ & $\mathtt{x}$ \\ \hline
$\mathtt{a}$ & $\mathtt{a}$ & $\mathtt{e}$ & $\mathtt{c}$ & $\mathtt{f}$ & $\mathtt{g}$ & $\mathtt{j}$ & $\mathtt{k}$ & $\mathtt{h}$ & $\mathtt{s}$ & $\mathtt{l}$ & $\ovariant$ & $\mathtt{p}$ & $\mathtt{u}$ & $\mathtt{v}$ & $\mathtt{q}$ & $0$ & $\mathtt{b}$ & $\mathtt{w}$ & $\mathtt{n}$ & $\mathtt{t}$ & $\mathtt{d}$ & $\mathtt{m}$ & $\mathtt{i}$ & $\mathtt{x}$ & $\mathtt{r}$ \\ \hline
$\mathtt{b}$ & $\mathtt{b}$ & $\mathtt{c}$ & $\mathtt{d}$ & $\mathtt{g}$ & $\mathtt{i}$ & $\mathtt{f}$ & $\mathtt{h}$ & $\mathtt{l}$ & $\mathtt{u}$ & $\mathtt{n}$ & $\mathtt{k}$ & $\mathtt{s}$ & $\mathtt{q}$ & $\mathtt{x}$ & $0$ & $\mathtt{p}$ & $\mathtt{t}$ & $\mathtt{a}$ & $\ovariant$ & $\mathtt{m}$ & $\mathtt{v}$ & $\mathtt{w}$ & $\mathtt{r}$ & $\mathtt{e}$ & $\mathtt{j}$ \\ \hline
$\mathtt{c}$ & $\mathtt{c}$ & $\mathtt{f}$ & $\mathtt{g}$ & $\mathtt{h}$ & $\mathtt{l}$ & $\mathtt{k}$ & $\mathtt{s}$ & $\mathtt{u}$ & $\mathtt{m}$ & $\mathtt{q}$ & $\mathtt{p}$ & $\mathtt{t}$ & $\mathtt{w}$ & $\mathtt{r}$ & $\mathtt{a}$ & $\mathtt{b}$ & $\mathtt{d}$ & $\mathtt{e}$ & $0$ & $\mathtt{v}$ & $\mathtt{i}$ & $\mathtt{x}$ & $\mathtt{n}$ & $\mathtt{j}$ & $\ovariant$ \\ \hline
$\mathtt{d}$ & $\mathtt{d}$ & $\mathtt{g}$ & $\mathtt{i}$ & $\mathtt{l}$ & $\mathtt{n}$ & $\mathtt{h}$ & $\mathtt{u}$ & $\mathtt{q}$ & $\mathtt{w}$ & $0$ & $\mathtt{s}$ & $\mathtt{m}$ & $\mathtt{a}$ & $\mathtt{j}$ & $\mathtt{b}$ & $\mathtt{t}$ & $\mathtt{v}$ & $\mathtt{c}$ & $\mathtt{p}$ & $\mathtt{x}$ & $\mathtt{r}$ & $\mathtt{e}$ & $\ovariant$ & $\mathtt{f}$ & $\mathtt{k}$ \\ \hline
$\mathtt{e}$ & $\mathtt{e}$ & $\mathtt{j}$ & $\mathtt{f}$ & $\mathtt{k}$ & $\mathtt{h}$ & $\ovariant$ & $\mathtt{p}$ & $\mathtt{s}$ & $\mathtt{t}$ & $\mathtt{u}$ & $0$ & $\mathtt{b}$ & $\mathtt{m}$ & $\mathtt{i}$ & $\mathtt{w}$ & $\mathtt{a}$ & $\mathtt{c}$ & $\mathtt{x}$ & $\mathtt{q}$ & $\mathtt{d}$ & $\mathtt{g}$ & $\mathtt{v}$ & $\mathtt{l}$ & $\mathtt{r}$ & $\mathtt{n}$ \\ \hline
$\mathtt{f}$ & $\mathtt{f}$ & $\mathtt{k}$ & $\mathtt{h}$ & $\mathtt{s}$ & $\mathtt{u}$ & $\mathtt{p}$ & $\mathtt{t}$ & $\mathtt{m}$ & $\mathtt{v}$ & $\mathtt{w}$ & $\mathtt{b}$ & $\mathtt{d}$ & $\mathtt{x}$ & $\mathtt{n}$ & $\mathtt{e}$ & $\mathtt{c}$ & $\mathtt{g}$ & $\mathtt{j}$ & $\mathtt{a}$ & $\mathtt{i}$ & $\mathtt{l}$ & $\mathtt{r}$ & $\mathtt{q}$ & $\ovariant$ & $0$ \\ \hline
$\mathtt{g}$ & $\mathtt{g}$ & $\mathtt{h}$ & $\mathtt{l}$ & $\mathtt{u}$ & $\mathtt{q}$ & $\mathtt{s}$ & $\mathtt{m}$ & $\mathtt{w}$ & $\mathtt{x}$ & $\mathtt{a}$ & $\mathtt{t}$ & $\mathtt{v}$ & $\mathtt{e}$ & $\ovariant$ & $\mathtt{c}$ & $\mathtt{d}$ & $\mathtt{i}$ & $\mathtt{f}$ & $\mathtt{b}$ & $\mathtt{r}$ & $\mathtt{n}$ & $\mathtt{j}$ & $0$ & $\mathtt{k}$ & $\mathtt{p}$ \\ \hline
$\mathtt{h}$ & $\mathtt{h}$ & $\mathtt{s}$ & $\mathtt{u}$ & $\mathtt{m}$ & $\mathtt{w}$ & $\mathtt{t}$ & $\mathtt{v}$ & $\mathtt{x}$ & $\mathtt{r}$ & $\mathtt{e}$ & $\mathtt{d}$ & $\mathtt{i}$ & $\mathtt{j}$ & $0$ & $\mathtt{f}$ & $\mathtt{g}$ & $\mathtt{l}$ & $\mathtt{k}$ & $\mathtt{c}$ & $\mathtt{n}$ & $\mathtt{q}$ & $\ovariant$ & $\mathtt{a}$ & $\mathtt{p}$ & $\mathtt{b}$ \\ \hline
$\mathtt{i}$ & $\mathtt{i}$ & $\mathtt{l}$ & $\mathtt{n}$ & $\mathtt{q}$ & $0$ & $\mathtt{u}$ & $\mathtt{w}$ & $\mathtt{a}$ & $\mathtt{e}$ & $\mathtt{b}$ & $\mathtt{m}$ & $\mathtt{x}$ & $\mathtt{c}$ & $\mathtt{k}$ & $\mathtt{d}$ & $\mathtt{v}$ & $\mathtt{r}$ & $\mathtt{g}$ & $\mathtt{t}$ & $\mathtt{j}$ & $\ovariant$ & $\mathtt{f}$ & $\mathtt{p}$ & $\mathtt{h}$ & $\mathtt{s}$ \\ \hline
$\mathtt{j}$ & $\mathtt{j}$ & $\ovariant$ & $\mathtt{k}$ & $\mathtt{p}$ & $\mathtt{s}$ & $0$ & $\mathtt{b}$ & $\mathtt{t}$ & $\mathtt{d}$ & $\mathtt{m}$ & $\mathtt{a}$ & $\mathtt{c}$ & $\mathtt{v}$ & $\mathtt{l}$ & $\mathtt{x}$ & $\mathtt{e}$ & $\mathtt{f}$ & $\mathtt{r}$ & $\mathtt{w}$ & $\mathtt{g}$ & $\mathtt{h}$ & $\mathtt{i}$ & $\mathtt{u}$ & $\mathtt{n}$ & $\mathtt{q}$ \\ \hline
$\mathtt{k}$ & $\mathtt{k}$ & $\mathtt{p}$ & $\mathtt{s}$ & $\mathtt{t}$ & $\mathtt{m}$ & $\mathtt{b}$ & $\mathtt{d}$ & $\mathtt{v}$ & $\mathtt{i}$ & $\mathtt{x}$ & $\mathtt{c}$ & $\mathtt{g}$ & $\mathtt{r}$ & $\mathtt{q}$ & $\mathtt{j}$ & $\mathtt{f}$ & $\mathtt{h}$ & $\ovariant$ & $\mathtt{e}$ & $\mathtt{l}$ & $\mathtt{u}$ & $\mathtt{n}$ & $\mathtt{w}$ & $0$ & $\mathtt{a}$ \\ \hline
$\mathtt{l}$ & $\mathtt{l}$ & $\mathtt{u}$ & $\mathtt{q}$ & $\mathtt{w}$ & $\mathtt{a}$ & $\mathtt{m}$ & $\mathtt{x}$ & $\mathtt{e}$ & $\mathtt{j}$ & $\mathtt{c}$ & $\mathtt{v}$ & $\mathtt{r}$ & $\mathtt{f}$ & $\mathtt{p}$ & $\mathtt{g}$ & $\mathtt{i}$ & $\mathtt{n}$ & $\mathtt{h}$ & $\mathtt{d}$ & $\ovariant$ & $0$ & $\mathtt{k}$ & $\mathtt{b}$ & $\mathtt{s}$ & $\mathtt{t}$ \\ \hline
$\mathtt{m}$ & $\mathtt{m}$ & $\mathtt{v}$ & $\mathtt{x}$ & $\mathtt{r}$ & $\mathtt{j}$ & $\mathtt{i}$ & $\mathtt{n}$ & $\ovariant$ & $0$ & $\mathtt{k}$ & $\mathtt{l}$ & $\mathtt{q}$ & $\mathtt{p}$ & $\mathtt{c}$ & $\mathtt{s}$ & $\mathtt{u}$ & $\mathtt{w}$ & $\mathtt{t}$ & $\mathtt{h}$ & $\mathtt{a}$ & $\mathtt{e}$ & $\mathtt{b}$ & $\mathtt{f}$ & $\mathtt{d}$ & $\mathtt{g}$ \\ \hline
$\mathtt{n}$ & $\mathtt{n}$ & $\mathtt{q}$ & $0$ & $\mathtt{a}$ & $\mathtt{b}$ & $\mathtt{w}$ & $\mathtt{e}$ & $\mathtt{c}$ & $\mathtt{f}$ & $\mathtt{d}$ & $\mathtt{x}$ & $\mathtt{j}$ & $\mathtt{g}$ & $\mathtt{s}$ & $\mathtt{i}$ & $\mathtt{r}$ & $\ovariant$ & $\mathtt{l}$ & $\mathtt{v}$ & $\mathtt{k}$ & $\mathtt{p}$ & $\mathtt{h}$ & $\mathtt{t}$ & $\mathtt{u}$ & $\mathtt{m}$ \\ \hline
$\ovariant$ & $\ovariant$ & $0$ & $\mathtt{p}$ & $\mathtt{b}$ & $\mathtt{t}$ & $\mathtt{a}$ & $\mathtt{c}$ & $\mathtt{d}$ & $\mathtt{g}$ & $\mathtt{v}$ & $\mathtt{e}$ & $\mathtt{f}$ & $\mathtt{i}$ & $\mathtt{u}$ & $\mathtt{r}$ & $\mathtt{j}$ & $\mathtt{k}$ & $\mathtt{n}$ & $\mathtt{x}$ & $\mathtt{h}$ & $\mathtt{s}$ & $\mathtt{l}$ & $\mathtt{m}$ & $\mathtt{q}$ & $\mathtt{w}$ \\ \hline
$\mathtt{p}$ & $\mathtt{p}$ & $\mathtt{b}$ & $\mathtt{t}$ & $\mathtt{d}$ & $\mathtt{v}$ & $\mathtt{c}$ & $\mathtt{g}$ & $\mathtt{i}$ & $\mathtt{l}$ & $\mathtt{r}$ & $\mathtt{f}$ & $\mathtt{h}$ & $\mathtt{n}$ & $\mathtt{w}$ & $\ovariant$ & $\mathtt{k}$ & $\mathtt{s}$ & $0$ & $\mathtt{j}$ & $\mathtt{u}$ & $\mathtt{m}$ & $\mathtt{q}$ & $\mathtt{x}$ & $\mathtt{a}$ & $\mathtt{e}$ \\ \hline
$\mathtt{q}$ & $\mathtt{q}$ & $\mathtt{w}$ & $\mathtt{a}$ & $\mathtt{e}$ & $\mathtt{c}$ & $\mathtt{x}$ & $\mathtt{j}$ & $\mathtt{f}$ & $\mathtt{k}$ & $\mathtt{g}$ & $\mathtt{r}$ & $\ovariant$ & $\mathtt{h}$ & $\mathtt{t}$ & $\mathtt{l}$ & $\mathtt{n}$ & $0$ & $\mathtt{u}$ & $\mathtt{i}$ & $\mathtt{p}$ & $\mathtt{b}$ & $\mathtt{s}$ & $\mathtt{d}$ & $\mathtt{m}$ & $\mathtt{v}$ \\ \hline
$\mathtt{r}$ & $\mathtt{r}$ & $\mathtt{n}$ & $\ovariant$ & $0$ & $\mathtt{p}$ & $\mathtt{q}$ & $\mathtt{a}$ & $\mathtt{b}$ & $\mathtt{c}$ & $\mathtt{t}$ & $\mathtt{w}$ & $\mathtt{e}$ & $\mathtt{d}$ & $\mathtt{h}$ & $\mathtt{v}$ & $\mathtt{x}$ & $\mathtt{j}$ & $\mathtt{i}$ & $\mathtt{m}$ & $\mathtt{f}$ & $\mathtt{k}$ & $\mathtt{g}$ & $\mathtt{s}$ & $\mathtt{l}$ & $\mathtt{u}$ \\ \hline
$\mathtt{s}$ & $\mathtt{s}$ & $\mathtt{t}$ & $\mathtt{m}$ & $\mathtt{v}$ & $\mathtt{x}$ & $\mathtt{d}$ & $\mathtt{i}$ & $\mathtt{r}$ & $\mathtt{n}$ & $\mathtt{j}$ & $\mathtt{g}$ & $\mathtt{l}$ & $\ovariant$ & $\mathtt{a}$ & $\mathtt{k}$ & $\mathtt{h}$ & $\mathtt{u}$ & $\mathtt{p}$ & $\mathtt{f}$ & $\mathtt{q}$ & $\mathtt{w}$ & $0$ & $\mathtt{e}$ & $\mathtt{b}$ & $\mathtt{c}$ \\ \hline
$\mathtt{t}$ & $\mathtt{t}$ & $\mathtt{d}$ & $\mathtt{v}$ & $\mathtt{i}$ & $\mathtt{r}$ & $\mathtt{g}$ & $\mathtt{l}$ & $\mathtt{n}$ & $\mathtt{q}$ & $\ovariant$ & $\mathtt{h}$ & $\mathtt{u}$ & $0$ & $\mathtt{e}$ & $\mathtt{p}$ & $\mathtt{s}$ & $\mathtt{m}$ & $\mathtt{b}$ & $\mathtt{k}$ & $\mathtt{w}$ & $\mathtt{x}$ & $\mathtt{a}$ & $\mathtt{j}$ & $\mathtt{c}$ & $\mathtt{f}$ \\ \hline
$\mathtt{u}$ & $\mathtt{u}$ & $\mathtt{m}$ & $\mathtt{w}$ & $\mathtt{x}$ & $\mathtt{e}$ & $\mathtt{v}$ & $\mathtt{r}$ & $\mathtt{j}$ & $\ovariant$ & $\mathtt{f}$ & $\mathtt{i}$ & $\mathtt{n}$ & $\mathtt{k}$ & $\mathtt{b}$ & $\mathtt{h}$ & $\mathtt{l}$ & $\mathtt{q}$ & $\mathtt{s}$ & $\mathtt{g}$ & $0$ & $\mathtt{a}$ & $\mathtt{p}$ & $\mathtt{c}$ & $\mathtt{t}$ & $\mathtt{d}$ \\ \hline
$\mathtt{v}$ & $\mathtt{v}$ & $\mathtt{i}$ & $\mathtt{r}$ & $\mathtt{n}$ & $\ovariant$ & $\mathtt{l}$ & $\mathtt{q}$ & $0$ & $\mathtt{a}$ & $\mathtt{p}$ & $\mathtt{u}$ & $\mathtt{w}$ & $\mathtt{b}$ & $\mathtt{f}$ & $\mathtt{t}$ & $\mathtt{m}$ & $\mathtt{x}$ & $\mathtt{d}$ & $\mathtt{s}$ & $\mathtt{e}$ & $\mathtt{j}$ & $\mathtt{c}$ & $\mathtt{k}$ & $\mathtt{g}$ & $\mathtt{h}$ \\ \hline
$\mathtt{w}$ & $\mathtt{w}$ & $\mathtt{x}$ & $\mathtt{e}$ & $\mathtt{j}$ & $\mathtt{f}$ & $\mathtt{r}$ & $\ovariant$ & $\mathtt{k}$ & $\mathtt{p}$ & $\mathtt{h}$ & $\mathtt{n}$ & $0$ & $\mathtt{s}$ & $\mathtt{d}$ & $\mathtt{u}$ & $\mathtt{q}$ & $\mathtt{a}$ & $\mathtt{m}$ & $\mathtt{l}$ & $\mathtt{b}$ & $\mathtt{c}$ & $\mathtt{t}$ & $\mathtt{g}$ & $\mathtt{v}$ & $\mathtt{i}$ \\ \hline
$\mathtt{x}$ & $\mathtt{x}$ & $\mathtt{r}$ & $\mathtt{j}$ & $\ovariant$ & $\mathtt{k}$ & $\mathtt{n}$ & $0$ & $\mathtt{p}$ & $\mathtt{b}$ & $\mathtt{s}$ & $\mathtt{q}$ & $\mathtt{a}$ & $\mathtt{t}$ & $\mathtt{g}$ & $\mathtt{m}$ & $\mathtt{w}$ & $\mathtt{e}$ & $\mathtt{v}$ & $\mathtt{u}$ & $\mathtt{c}$ & $\mathtt{f}$ & $\mathtt{d}$ & $\mathtt{h}$ & $\mathtt{i}$ & $\mathtt{l}$ \\  
\end{tabular}
}

\end{table}

\begin{table}[H]
\caption{Multiplication table for the ring $I_5$}  \label{mult table}
\centering
\scalebox{0.9}{
\begin{tabular}{c ! {\vrule width 1.5pt} c|c|c|c|c|c|c|c|c|c|c|c|c|c|c|c|c|c|c|c|c|c|c|c|c}
$\times$ & $0$ & $\mathtt{a}$ & $\mathtt{b}$ & $\mathtt{c}$ & $\mathtt{d}$ & $\mathtt{e}$ & $\mathtt{f}$ & $\mathtt{g}$ & $\mathtt{h}$ & $\mathtt{i}$ & $\mathtt{j}$ & $\mathtt{k}$ & $\mathtt{l}$ & $\mathtt{m}$ & $\mathtt{n}$ & $\ovariant$ & $\mathtt{p}$ & $\mathtt{q}$ & $\mathtt{r}$ & $\mathtt{s}$ & $\mathtt{t}$ & $\mathtt{u}$ & $\mathtt{v}$ & $\mathtt{w}$ & $\mathtt{x}$ \\
\ChangeRT{1.5pt}
$0$      & $0$ & $0$ & $0$ & $0$ & $0$ & $0$ & $0$ & $0$ & $0$ & $0$ & $0$ & $0$ & $0$ & $0$ & $0$ & $0$ & $0$ & $0$ & $0$ & $0$ & $0$ & $0$ & $0$ & $0$ & $0$ \\ \hline
$\mathtt{a}$      & $0$ & $\mathtt{b}$ & $0$ & $\mathtt{b}$ & $0$ & $\mathtt{d}$ & $\mathtt{d}$ & $\mathtt{b}$ & $\mathtt{d}$ & $0$ & $\mathtt{i}$ & $\mathtt{i}$ & $\mathtt{b}$ & $\mathtt{i}$ & $0$ & $\mathtt{n}$ & $\mathtt{n}$ & $\mathtt{b}$ & $\mathtt{n}$ & $\mathtt{i}$ & $\mathtt{n}$ & $\mathtt{d}$ & $\mathtt{n}$ & $\mathtt{d}$ & $\mathtt{i}$ \\ \hline
$\mathtt{b}$      & $0$ & $0$ & $0$ & $0$ & $0$ & $0$ & $0$ & $0$ & $0$ & $0$ & $0$ & $0$ & $0$ & $0$ & $0$ & $0$ & $0$ & $0$ & $0$ & $0$ & $0$ & $0$ & $0$ & $0$ & $0$ \\ \hline
$\mathtt{c}$      & $0$ & $\mathtt{b}$ & $0$ & $\mathtt{b}$ & $0$ & $\mathtt{d}$ & $\mathtt{d}$ & $\mathtt{b}$ & $\mathtt{d}$ & $0$ & $\mathtt{i}$ & $\mathtt{i}$ & $\mathtt{b}$ & $\mathtt{i}$ & $0$ & $\mathtt{n}$ & $\mathtt{n}$ & $\mathtt{b}$ & $\mathtt{n}$ & $\mathtt{i}$ & $\mathtt{n}$ & $\mathtt{d}$ & $\mathtt{n}$ & $\mathtt{d}$ & $\mathtt{i}$ \\ \hline
$\mathtt{d}$      & $0$ & $0$ & $0$ & $0$ & $0$ & $0$ & $0$ & $0$ & $0$ & $0$ & $0$ & $0$ & $0$ & $0$ & $0$ & $0$ & $0$ & $0$ & $0$ & $0$ & $0$ & $0$ & $0$ & $0$ & $0$ \\ \hline
$\mathtt{e}$      & $0$ & $\mathtt{d}$ & $0$ & $\mathtt{d}$ & $0$ & $\mathtt{n}$ & $\mathtt{n}$ & $\mathtt{d}$ & $\mathtt{n}$ & $0$ & $\mathtt{b}$ & $\mathtt{b}$ & $\mathtt{d}$ & $\mathtt{b}$ & $0$ & $\mathtt{i}$ & $\mathtt{i}$ & $\mathtt{d}$ & $\mathtt{i}$ & $\mathtt{b}$ & $\mathtt{i}$ & $\mathtt{n}$ & $\mathtt{i}$ & $\mathtt{n}$ & $\mathtt{b}$ \\ \hline
$\mathtt{f}$      & $0$ & $\mathtt{d}$ & $0$ & $\mathtt{d}$ & $0$ & $\mathtt{n}$ & $\mathtt{n}$ & $\mathtt{d}$ & $\mathtt{n}$ & $0$ & $\mathtt{b}$ & $\mathtt{b}$ & $\mathtt{d}$ & $\mathtt{b}$ & $0$ & $\mathtt{i}$ & $\mathtt{i}$ & $\mathtt{d}$ & $\mathtt{i}$ & $\mathtt{b}$ & $\mathtt{i}$ & $\mathtt{n}$ & $\mathtt{i}$ & $\mathtt{n}$ & $\mathtt{b}$ \\ \hline
$\mathtt{g}$      & $0$ & $\mathtt{b}$ & $0$ & $\mathtt{b}$ & $0$ & $\mathtt{d}$ & $\mathtt{d}$ & $\mathtt{b}$ & $\mathtt{d}$ & $0$ & $\mathtt{i}$ & $\mathtt{i}$ & $\mathtt{b}$ & $\mathtt{i}$ & $0$ & $\mathtt{n}$ & $\mathtt{n}$ & $\mathtt{b}$ & $\mathtt{n}$ & $\mathtt{i}$ & $\mathtt{n}$ & $\mathtt{d}$ & $\mathtt{n}$ & $\mathtt{d}$ & $\mathtt{i}$ \\ \hline
$\mathtt{h}$      & $0$ & $\mathtt{d}$ & $0$ & $\mathtt{d}$ & $0$ & $\mathtt{n}$ & $\mathtt{n}$ & $\mathtt{d}$ & $\mathtt{n}$ & $0$ & $\mathtt{b}$ & $\mathtt{b}$ & $\mathtt{d}$ & $\mathtt{b}$ & $0$ & $\mathtt{i}$ & $\mathtt{i}$ & $\mathtt{d}$ & $\mathtt{i}$ & $\mathtt{b}$ & $\mathtt{i}$ & $\mathtt{n}$ & $\mathtt{i}$ & $\mathtt{n}$ & $\mathtt{b}$ \\ \hline
$\mathtt{i}$      & $0$ & $0$ & $0$ & $0$ & $0$ & $0$ & $0$ & $0$ & $0$ & $0$ & $0$ & $0$ & $0$ & $0$ & $0$ & $0$ & $0$ & $0$ & $0$ & $0$ & $0$ & $0$ & $0$ & $0$ & $0$ \\ \hline
$\mathtt{j}$      & $0$ & $\mathtt{i}$ & $0$ & $\mathtt{i}$ & $0$ & $\mathtt{b}$ & $\mathtt{b}$ & $\mathtt{i}$ & $\mathtt{b}$ & $0$ & $\mathtt{n}$ & $\mathtt{n}$ & $\mathtt{i}$ & $\mathtt{n}$ & $0$ & $\mathtt{d}$ & $\mathtt{d}$ & $\mathtt{i}$ & $\mathtt{d}$ & $\mathtt{n}$ & $\mathtt{d}$ & $\mathtt{b}$ & $\mathtt{d}$ & $\mathtt{b}$ & $\mathtt{n}$ \\ \hline
$\mathtt{k}$      & $0$ & $\mathtt{i}$ & $0$ & $\mathtt{i}$ & $0$ & $\mathtt{b}$ & $\mathtt{b}$ & $\mathtt{i}$ & $\mathtt{b}$ & $0$ & $\mathtt{n}$ & $\mathtt{n}$ & $\mathtt{i}$ & $\mathtt{n}$ & $0$ & $\mathtt{d}$ & $\mathtt{d}$ & $\mathtt{i}$ & $\mathtt{d}$ & $\mathtt{n}$ & $\mathtt{d}$ & $\mathtt{b}$ & $\mathtt{d}$ & $\mathtt{b}$ & $\mathtt{n}$ \\ \hline
$\mathtt{l}$      & $0$ & $\mathtt{b}$ & $0$ & $\mathtt{b}$ & $0$ & $\mathtt{d}$ & $\mathtt{d}$ & $\mathtt{b}$ & $\mathtt{d}$ & $0$ & $\mathtt{i}$ & $\mathtt{i}$ & $\mathtt{b}$ & $\mathtt{i}$ & $0$ & $\mathtt{n}$ & $\mathtt{n}$ & $\mathtt{b}$ & $\mathtt{n}$ & $\mathtt{i}$ & $\mathtt{n}$ & $\mathtt{d}$ & $\mathtt{n}$ & $\mathtt{d}$ & $\mathtt{i}$ \\ \hline
$\mathtt{m}$      & $0$ & $\mathtt{i}$ & $0$ & $\mathtt{i}$ & $0$ & $\mathtt{b}$ & $\mathtt{b}$ & $\mathtt{i}$ & $\mathtt{b}$ & $0$ & $\mathtt{n}$ & $\mathtt{n}$ & $\mathtt{i}$ & $\mathtt{n}$ & $0$ & $\mathtt{d}$ & $\mathtt{d}$ & $\mathtt{i}$ & $\mathtt{d}$ & $\mathtt{n}$ & $\mathtt{d}$ & $\mathtt{b}$ & $\mathtt{d}$ & $\mathtt{b}$ & $\mathtt{n}$ \\ \hline
$\mathtt{n}$      & $0$ & $0$ & $0$ & $0$ & $0$ & $0$ & $0$ & $0$ & $0$ & $0$ & $0$ & $0$ & $0$ & $0$ & $0$ & $0$ & $0$ & $0$ & $0$ & $0$ & $0$ & $0$ & $0$ & $0$ & $0$ \\ \hline
$\ovariant$      & $0$ & $\mathtt{n}$ & $0$ & $\mathtt{n}$ & $0$ & $\mathtt{i}$ & $\mathtt{i}$ & $\mathtt{n}$ & $\mathtt{i}$ & $0$ & $\mathtt{d}$ & $\mathtt{d}$ & $\mathtt{n}$ & $\mathtt{d}$ & $0$ & $\mathtt{b}$ & $\mathtt{b}$ & $\mathtt{n}$ & $\mathtt{b}$ & $\mathtt{d}$ & $\mathtt{b}$ & $\mathtt{i}$ & $\mathtt{b}$ & $\mathtt{i}$ & $\mathtt{d}$ \\ \hline
$\mathtt{p}$      & $0$ & $\mathtt{n}$ & $0$ & $\mathtt{n}$ & $0$ & $\mathtt{i}$ & $\mathtt{i}$ & $\mathtt{n}$ & $\mathtt{i}$ & $0$ & $\mathtt{d}$ & $\mathtt{d}$ & $\mathtt{n}$ & $\mathtt{d}$ & $0$ & $\mathtt{b}$ & $\mathtt{b}$ & $\mathtt{n}$ & $\mathtt{b}$ & $\mathtt{d}$ & $\mathtt{b}$ & $\mathtt{i}$ & $\mathtt{b}$ & $\mathtt{i}$ & $\mathtt{d}$ \\ \hline
$\mathtt{q}$      & $0$ & $\mathtt{b}$ & $0$ & $\mathtt{b}$ & $0$ & $\mathtt{d}$ & $\mathtt{d}$ & $\mathtt{b}$ & $\mathtt{d}$ & $0$ & $\mathtt{i}$ & $\mathtt{i}$ & $\mathtt{b}$ & $\mathtt{i}$ & $0$ & $\mathtt{n}$ & $\mathtt{n}$ & $\mathtt{b}$ & $\mathtt{n}$ & $\mathtt{i}$ & $\mathtt{n}$ & $\mathtt{d}$ & $\mathtt{n}$ & $\mathtt{d}$ & $\mathtt{i}$ \\ \hline
$\mathtt{r}$      & $0$ & $\mathtt{n}$ & $0$ & $\mathtt{n}$ & $0$ & $\mathtt{i}$ & $\mathtt{i}$ & $\mathtt{n}$ & $\mathtt{i}$ & $0$ & $\mathtt{d}$ & $\mathtt{d}$ & $\mathtt{n}$ & $\mathtt{d}$ & $0$ & $\mathtt{b}$ & $\mathtt{b}$ & $\mathtt{n}$ & $\mathtt{b}$ & $\mathtt{d}$ & $\mathtt{b}$ & $\mathtt{i}$ & $\mathtt{b}$ & $\mathtt{i}$ & $\mathtt{d}$ \\ \hline
$\mathtt{s}$      & $0$ & $\mathtt{i}$ & $0$ & $\mathtt{i}$ & $0$ & $\mathtt{b}$ & $\mathtt{b}$ & $\mathtt{i}$ & $\mathtt{b}$ & $0$ & $\mathtt{n}$ & $\mathtt{n}$ & $\mathtt{i}$ & $\mathtt{n}$ & $0$ & $\mathtt{d}$ & $\mathtt{d}$ & $\mathtt{i}$ & $\mathtt{d}$ & $\mathtt{n}$ & $\mathtt{d}$ & $\mathtt{b}$ & $\mathtt{d}$ & $\mathtt{b}$ & $\mathtt{n}$ \\ \hline
$\mathtt{t}$      & $0$ & $\mathtt{n}$ & $0$ & $\mathtt{n}$ & $0$ & $\mathtt{i}$ & $\mathtt{i}$ & $\mathtt{n}$ & $\mathtt{i}$ & $0$ & $\mathtt{d}$ & $\mathtt{d}$ & $\mathtt{n}$ & $\mathtt{d}$ & $0$ & $\mathtt{b}$ & $\mathtt{b}$ & $\mathtt{n}$ & $\mathtt{b}$ & $\mathtt{d}$ & $\mathtt{b}$ & $\mathtt{i}$ & $\mathtt{b}$ & $\mathtt{i}$ & $\mathtt{d}$ \\ \hline
$\mathtt{u}$      & $0$ & $\mathtt{d}$ & $0$ & $\mathtt{d}$ & $0$ & $\mathtt{n}$ & $\mathtt{n}$ & $\mathtt{d}$ & $\mathtt{n}$ & $0$ & $\mathtt{b}$ & $\mathtt{b}$ & $\mathtt{d}$ & $\mathtt{b}$ & $0$ & $\mathtt{i}$ & $\mathtt{i}$ & $\mathtt{d}$ & $\mathtt{i}$ & $\mathtt{b}$ & $\mathtt{i}$ & $\mathtt{n}$ & $\mathtt{i}$ & $\mathtt{n}$ & $\mathtt{b}$ \\ \hline
$\mathtt{v}$      & $0$ & $\mathtt{n}$ & $0$ & $\mathtt{n}$ & $0$ & $\mathtt{i}$ & $\mathtt{i}$ & $\mathtt{n}$ & $\mathtt{i}$ & $0$ & $\mathtt{d}$ & $\mathtt{d}$ & $\mathtt{n}$ & $\mathtt{d}$ & $0$ & $\mathtt{b}$ & $\mathtt{b}$ & $\mathtt{n}$ & $\mathtt{b}$ & $\mathtt{d}$ & $\mathtt{b}$ & $\mathtt{i}$ & $\mathtt{b}$ & $\mathtt{i}$ & $\mathtt{d}$ \\ \hline
$\mathtt{w}$      & $0$ & $\mathtt{d}$ & $0$ & $\mathtt{d}$ & $0$ & $\mathtt{n}$ & $\mathtt{n}$ & $\mathtt{d}$ & $\mathtt{n}$ & $0$ & $\mathtt{b}$ & $\mathtt{b}$ & $\mathtt{d}$ & $\mathtt{b}$ & $0$ & $\mathtt{i}$ & $\mathtt{i}$ & $\mathtt{d}$ & $\mathtt{i}$ & $\mathtt{b}$ & $\mathtt{i}$ & $\mathtt{n}$ & $\mathtt{i}$ & $\mathtt{n}$ & $\mathtt{b}$ \\ \hline
$\mathtt{x}$      & $0$ & $\mathtt{i}$ & $0$ & $\mathtt{i}$ & $0$ & $\mathtt{b}$ & $\mathtt{b}$ & $\mathtt{i}$ & $\mathtt{b}$ & $0$ & $\mathtt{n}$ & $\mathtt{n}$ & $\mathtt{i}$ & $\mathtt{n}$ & $0$ & $\mathtt{d}$ & $\mathtt{d}$ & $\mathtt{i}$ & $\mathtt{d}$ & $\mathtt{n}$ & $\mathtt{d}$ & $\mathtt{b}$ & $\mathtt{d}$ & $\mathtt{b}$ & $\mathtt{n}$ \\
\end{tabular}
}

\end{table}

We can infer from Table \ref{mult table} that this ring is commutative without unity and has a unique maximal ideal $J = \{ j  \mathtt{b}: 0 \leq j < 5 \}  = \{0 ,  \mathtt{b}   , \mathtt{d} , \mathtt{i} , \mathtt{n}   \}$ with residue field $I_{5} / J \cong \mathbb{F}_{5}$. With this, we can write $I_5$ as
\begin{align*}
I_5 = \{  \mathtt{a} x + \mathtt{b} y \mid  x , y   \in \mathbb{F}_5    \} .
\end{align*}

Define a natural action of $\mathbb{F}_5$ on the ring $I_5$ using the rule
\begin{gather*}
r 0 = 0 r = 0 ,  \quad   r 1 = 1 r = r ,  \quad   r 2 = r + r = 2r ,  \\
r 3 = r + r + r = 3 r , \quad  r 4 = r + r + r + r  = 4 r
\end{gather*}
for all $r \in I_5$. Note that for all $r \in I_{5}, x, y \in \mathbb{F}_{5}$, this action is \textit{distributive} in the sense that $r(x +_5 y)=r x+r y$, where $+_5$ denotes the addition in $\mathbb{F}_{5}$. Given $\mathbf{x} = ( x_1 , \dotsc , x_n ) \in \mathbb{F}_{5}^{n}$ and $\mathbf{r} = ( r_1 , \dotsc , r_n ) \in I_{5}^{n}$, we will be flexible in the use of  inner product notation $(\mathbf{x}, \mathbf{r})$ to indicate
\begin{align*}
(\mathbf{x}, \mathbf{r})  \coloneqq   x_{1} r_{1}+\cdots+x_{n} r_{n}  .
\end{align*}

We define the reduction map modulo $J$ as $\pi \colon I_{5} \longrightarrow I_{5} / J \simeq \mathbb{F}_{5}$ by
\begin{alignat*}{6}
&\pi (0) &&= \pi ( \mathtt{b} ) &&= \pi ( \mathtt{d} ) &&= \pi (  \mathtt{i}  ) &&= \pi (  \mathtt{n}  )  &&=  0  ,   \\
&\pi ( \mathtt{a}  ) &&= \pi (  \mathtt{c}  ) &&= \pi ( \mathtt{g}  ) &&= \pi ( \mathtt{l}  ) &&= \pi (  \mathtt{q}  )  &&=  1  ,    \\
&\pi ( \mathtt{e}  ) &&= \pi (   \mathtt{f}   ) &&= \pi (  \mathtt{h}  ) &&= \pi (  \mathtt{u}  ) &&= \pi (  \mathtt{w}  )  &&=  2  ,    \\
&\pi (  \mathtt{j}  ) &&= \pi ( \mathtt{k}  ) &&= \pi ( \mathtt{m}  ) &&= \pi ( \mathtt{s} ) &&= \pi ( \mathtt{x}  )  &&=  3  ,    \\
&\pi ( \ovariant  ) &&= \pi ( \mathtt{p}  ) &&= \pi ( \mathtt{r} ) &&= \pi ( \mathtt{t}  ) &&= \pi ( \mathtt{v}  )  &&=  4  .
\end{alignat*}

The map $\pi$ is extended in the natural way to a map from $I_{5}^{n}$ to $\mathbb{F}_{5}^{n}$.

\subsection{Codes over {\boldmath $I_5$} }

A linear code $\mathcal{C}$ over $I_5$ (or simply called an $I_5$-code) of length $n$ is defined as an $I_{5}$-submodule of $I_{5}^{n}$. It can be regarded as the $I_{5}$-span of the rows of a so-called generator matrix. Two  codes over $\mathbb{F}_5$ of length $n$ can be associated canonically with the $I_5$-code $\mathcal{C}$. The residue code $\operatorname{res}(\mathcal{C})$ is defined as
\begin{align*}
\operatorname{res} (\mathcal{C})  \coloneqq  \left\lbrace   \pi ( \mathbf{y} )  \mid  \mathbf{y}  \in \mathcal{C}  \right \rbrace  =   \pi (\mathcal{C} )
\end{align*}
and the torsion code  $\operatorname{tor}(\mathcal{C})$  is defined as
\begin{align*}
\operatorname{tor}(\mathcal{C}) \coloneqq \left\{\mathbf{x} \in \mathbb{F}_{5}^{n}  \mid   \mathtt{b}  \mathbf{x} \in \mathcal{C}  \right\}  .
\end{align*}

Note that $\operatorname{res}(\mathcal{C})$ and $\operatorname{tor}(\mathcal{C})$ are both linear codes over $\mathbb{F}_5$ and $\operatorname{res}(\mathcal{C}) \subseteq \operatorname{tor}(\mathcal{C})$. An $I_5$-code $\mathcal{C}$ is said to be of type $\left \{  k_{1}, k_{2}  \right  \}$ if $\operatorname{res} (\mathcal{C})$ has dimension $k_1$ and $\operatorname{tor} (\mathcal{C})$ has dimension $k_1 + k_2$. We say that an $I_5$-code $\mathcal{C}$ is free if and only if $k_2 = 0$. Equivalently, an $I_5$-code $\mathcal{C}$ is free if and only if $\operatorname{res} (\mathcal{C}) = \operatorname{tor} (\mathcal{C})$. Let $\pi_\mathcal{C}$ be the restriction of $\pi$ to $\mathcal{C}$. Given an $I_5$-code $\mathcal{C}$ of type $\{ k_1 , k_2\}$, the first isomorphism theorem applied to $\pi_\mathcal{C}$ gives
$$
|\mathcal{C}|=|\operatorname{res}(\mathcal{C})||\operatorname{tor}(\mathcal{C})|=5^{2 k_{1}+k_{2}} .
$$

Define the inner product of $\mathbf{x}=\left(x_{1}, \ldots, x_{n}\right)$, $\mathbf{y}=\left(y_{1}, \ldots, y_{n}\right) \in I_{5}^{n}$ as $(\mathbf{x}, \mathbf{y})   \coloneqq  \sum_{i=1}^{n} x_i y_i$. The dual code $\mathcal{C}^{\perp}$ of the code $\mathcal{C}$ is the module defined as

$$
\mathcal{C}^{\perp}=\left\{\mathbf{y} \in I_{5}^{n} \mid \forall \mathbf{x} \in \mathcal{C},(\mathbf{x}, \mathbf{y})=0\right\}  .
$$

An $I_{5}$-code $\mathcal{C}$ is self-orthogonal if for all $\mathbf{x}, \mathbf{y} \in \mathcal{C},(\mathbf{x}, \mathbf{y})=0$. If $\mathcal{C}=\mathcal{C}^{\perp}$, we say that $\mathcal{C}$ is self-dual. We adopt the terminology used in \cite{I3} to mean that an $I_{5}$-code is a quasi self-dual code of length $n$ if it is self-orthogonal and of size $5^{n}$.

Two $I_{5}$-codes $\mathcal{C}$ and $\mathcal{C}'$ are said to be monomially equivalent if there is an $n \times n$ monomial matrix $M$ (with exactly one nonzero entry in each row and column and the nonzero entries are in $\{1, -1 \}$) such that $\mathcal{C}' = \{ \mathbf{c} M \, : \, \mathbf{c} \in \mathcal{C}    \}$.

\begin{theorem}    \label{thm: c=au+bv for any c in C}
If $\mathcal{C}$ is a linear code of length $n$ over $I_5$, then the following hold:
\begin{enumerate}
\item  Every codeword $\mathbf{c} \in \mathcal{C}$ can be written as $\mathbf{c} = \mathtt{a} \mathbf{u} + \mathtt{b} \mathbf{v}$ for some $\mathbf{u} \in \operatorname{res}(\mathcal{C})$ and $\mathbf{v} \in \mathbb{F}_{5}^{n}$.  \label{itm: c=au+bv}

\item If $\mathbf{u} \in \operatorname{res}(\mathcal{C})$, then $\mathtt{a} \mathbf{u} + \mathtt{b} \mathbf{v}$ is a codeword in $\mathcal{C}$ for some $\mathbf{v} \in \mathbb{F}_{5}^{n}$.  \label{itm: au+bv in C}
\end{enumerate}
\end{theorem}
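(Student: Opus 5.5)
The plan is to use the coordinatewise description $I_5 = \{\mathtt{a}x + \mathtt{b}y \mid x,y \in \mathbb{F}_5\}$. Every $r \in I_5$ is uniquely $r = i\mathtt{a} + j\mathtt{b}$ with $0 \le i,j < 5$, and by the table defining $\pi$ we have $\pi(i\mathtt{a}+j\mathtt{b}) = i$. Extending coordinatewise, any vector $\mathbf{c} \in I_5^n$ can be written uniquely as $\mathbf{c} = \mathtt{a}\mathbf{u} + \mathtt{b}\mathbf{v}$ with $\mathbf{u},\mathbf{v} \in \mathbb{F}_5^n$, using the $\mathbb{F}_5$-action on $I_5$ defined above. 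Moreover $\pi(\mathbf{c}) = \mathbf{u}$. The one point to check is that $\pi$ really is $i\mathtt{a}+j\mathtt{b} \mapsto i$, i.e. that $\pi$ kills $J = \mathtt{b}\mathbb{F}_5$ and sends $\mathtt{a}$ to $1$. This is read off from the listing of $\pi$ together with the definitions $\mathtt{c}=\mathtt{a}+\mathtt{b}$, $\mathtt{g}=\mathtt{a}+\mathtt{d}$, and so on.

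For part \ref{itm: c=au+bv}, take $\mathbf{c} \in \mathcal{C}$ and decompose it as above, $\mathbf{c} = \mathtt{a}\mathbf{u} + \mathtt{b}\mathbf{v}$. Then $\mathbf{u} = \pi(\mathbf{c})$, which lies in $\pi(\mathcal{C}) = \operatorname{res}(\mathcal{C})$ by the definition of the residue code. The vector $\mathbf{v}$ lies in $\mathbb{F}_5^n$ by construction. Nothing about $\mathcal{C}$ beyond membership of $\mathbf{c}$ is needed here.

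For part \ref{itm: au+bv in C}, let $\mathbf{u} \in \operatorname{res}(\mathcal{C})$. By definition there is some $\mathbf{y} \in \mathcal{C}$ with $\pi(\mathbf{y}) = \mathbf{u}$. Decomposing $\mathbf{y} = \mathtt{a}\mathbf{u}' + \mathtt{b}\mathbf{v}$ gives $\mathbf{u}' = \pi(\mathbf{y}) = \mathbf{u}$, so $\mathtt{a}\mathbf{u} + \mathtt{b}\mathbf{v} = \mathbf{y} \in \mathcal{C}$ for this $\mathbf{v} \in \mathbb{F}_5^n$.

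The only potential obstacle is the bookkeeping identity $\pi(\mathtt{a}\mathbf{u}+\mathtt{b}\mathbf{v}) = \mathbf{u}$. It amounts to $\pi$ being additive with kernel $J$ and $\pi(\mathtt{a}) = 1$. This is immediate from the stated isomorphism $I_5/J \cong \mathbb{F}_5$, or it can be verified directly from the listing of $\pi$. Linearity of $\mathcal{C}$ is not even required for either part.
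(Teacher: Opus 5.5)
Your proof is correct and follows the paper's argument essentially verbatim. You decompose each codeword $\mathtt{b}$-adically as $\mathtt{a}\mathbf{u}+\mathtt{b}\mathbf{v}$ and use $\pi(\mathtt{a}\mathbf{u}+\mathtt{b}\mathbf{v})=\mathbf{u}$ for part 1; for part 2 you decompose a preimage of $\mathbf{u}$ in $\mathcal{C}$ and identify its $\mathtt{a}$-component with $\mathbf{u}$. Your side remarks are also accurate: $\pi(i\mathtt{a}+j\mathtt{b})=i$ can be read off the listing, and linearity of $\mathcal{C}$ is never used.
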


\begin{proof}
Let $\mathbf{c} \in \mathcal{C}$. We can write $\mathbf{c}$ in a $\mathtt{b}$-adic decomposition form as $\mathbf{c}=\mathtt{a} \mathbf{u} + \mathtt{b} \mathbf{v}$ where $\mathbf{u}, \mathbf{v} \in \mathbb{F}_{5}^{n}$. Since $\pi (\mathbf{c})=\pi (\mathtt{a} \mathbf{u}+\mathtt{b} \mathbf{v})=\mathbf{u}$, it follows that $\mathbf{u} \in \operatorname{res}(\mathcal{C})$. This proves the first part.

Now let $\mathbf{u} \in \operatorname{res}(\mathcal{C})$. Then there exists $\mathbf{c} \in \mathcal{C}$ such that $\pi(\mathbf{c})=\mathbf{u}$. We can write $\mathbf{c}$ in a $\mathtt{b}$-adic decomposition form as $\mathbf{c}=\mathtt{a} \mathbf{w}+\mathtt{b} \mathbf{v}$ where $\mathbf{w}, \mathbf{v} \in \mathbb{F}_{5}^{n}$. Observe that $\pi (\mathtt{a} \mathbf{w}+\mathtt{b} \mathbf{v})=\mathbf{w}$. On the other hand, $\pi(\mathtt{a} \mathbf{w}+\mathtt{b} \mathbf{v})=\pi(\mathbf{c})=\mathbf{u}$. Hence, $\mathbf{w}=\mathbf{u}$ and so $\mathtt{a} \mathbf{u}+\mathtt{b} \mathbf{v}$ is a codeword in $\mathcal{C}$. This proves the second part.
\end{proof}

\begin{theorem}  \label{thm: minimum distance of I5 code C}
If $\mathcal{C}$ is a non-trivial linear code over $I_5$, then the minimum distance of $\mathcal{C}$ equals the minimum distance of $\operatorname{tor}(\mathcal{C})$.
\end{theorem}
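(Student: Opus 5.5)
We prove the two inequalities $d(\mathcal{C}) \le d(\operatorname{tor}(\mathcal{C}))$ and $d(\mathcal{C}) \ge d(\operatorname{tor}(\mathcal{C}))$ separately. Throughout we use the $\mathtt{b}$-adic decomposition from Theorem \ref{thm: c=au+bv for any c in C}. Hamming weight over $I_5$ counts nonzero coordinates. A coordinate $\mathtt{a}u_i + \mathtt{b}v_i$ is zero exactly when $u_i = v_i = 0$, so
\[
\operatorname{wt}(\mathtt{a}\mathbf{u}+\mathtt{b}\mathbf{v}) = |\operatorname{supp}(\mathbf{u}) \cup \operatorname{supp}(\mathbf{v})|.
\]
In particular $\operatorname{wt}(\mathtt{b}\mathbf{x}) = \operatorname{wt}(\mathbf{x})$ for $\mathbf{x}\in\mathbb{F}_5^n$. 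Non-triviality of $\mathcal{C}$ will give $\operatorname{tor}(\mathcal{C}) \neq \{\mathbf{0}\}$. Indeed, if $\mathcal{C}$ has a nonzero codeword $\mathtt{a}\mathbf{u}+\mathtt{b}\mathbf{v}$, then either $\mathbf{u}=\mathbf{0}$, so $\mathbf{v}\in\operatorname{tor}(\mathcal{C})$ is nonzero, or $\mathbf{u}\neq\mathbf{0}$, so $\mathbf{u}\in\operatorname{res}(\mathcal{C})\subseteq\operatorname{tor}(\mathcal{C})$. Hence both minimum distances are defined.

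For the upper bound: take $\mathbf{x}\in\operatorname{tor}(\mathcal{C})$ of minimum weight. Then $\mathtt{b}\mathbf{x}\in\mathcal{C}$ is nonzero with the same weight, so $d(\mathcal{C})\le d(\operatorname{tor}(\mathcal{C}))$.

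For the lower bound: let $\mathbf{c}=\mathtt{a}\mathbf{u}+\mathtt{b}\mathbf{v}\in\mathcal{C}$ be nonzero. We produce a nonzero torsion vector of weight at most $\operatorname{wt}(\mathbf{c})$, in two cases.
\begin{enumerate}
\item If $\mathbf{u}=\mathbf{0}$, then $\mathbf{c}=\mathtt{b}\mathbf{v}$. So $\mathbf{v}\in\operatorname{tor}(\mathcal{C})$ is nonzero and $\operatorname{wt}(\mathbf{c})=\operatorname{wt}(\mathbf{v})$.
\item If $\mathbf{u}\ne\mathbf{0}$, multiply $\mathbf{c}$ by $\mathtt{a}\in I_5$, which keeps us in $\mathcal{C}$ since $\mathcal{C}$ is a submodule. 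Using $\mathtt{a}^2=\mathtt{b}$ and $\mathtt{ab}=0$, we get $\mathtt{a}\mathbf{c}=\mathtt{b}\mathbf{u}$. Thus $\mathbf{u}\in\operatorname{tor}(\mathcal{C})$, and $\operatorname{wt}(\mathbf{u})\le\operatorname{wt}(\mathbf{c})$ because $\operatorname{supp}(\mathbf{u})\subseteq\operatorname{supp}(\mathbf{c})$. This case also re-derives $\operatorname{res}(\mathcal{C})\subseteq\operatorname{tor}(\mathcal{C})$.
\end{enumerate}
In either case $\operatorname{wt}(\mathbf{c})\ge d(\operatorname{tor}(\mathcal{C}))$, and together with the upper bound this gives equality.

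The only delicate step is the multiplication by $\mathtt{a}$ in the second case. One must check that $\mathtt{a}(\mathtt{a}u_i+\mathtt{b}v_i)=u_i\mathtt{b}$ coordinatewise, using the $\mathbb{F}_5$-action and the relations, and that $\mathcal{C}$ is closed under multiplication by ring elements even though $I_5$ has no unity. Both follow directly from the definitions and Table \ref{mult table}, so we expect no real obstacle.
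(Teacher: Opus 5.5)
Your proposal is correct and follows essentially the same route as the paper. The upper bound uses $\mathtt{b}\operatorname{tor}(\mathcal{C}) \subseteq \mathcal{C}$. The lower bound uses the $\mathtt{b}$-adic decomposition $\mathbf{c} = \mathtt{a}\mathbf{u} + \mathtt{b}\mathbf{v}$ together with $\mathtt{a}\mathbf{c} = \mathtt{b}\mathbf{u}$ to get $\operatorname{wt}(\mathbf{c}) \ge \operatorname{wt}(\mathbf{u})$. You merge the paper's two cases with $\mathbf{u} \neq \mathbf{0}$ into one. You also make explicit two facts the paper uses only implicitly: $\operatorname{res}(\mathcal{C}) \subseteq \operatorname{tor}(\mathcal{C})$ and $\operatorname{tor}(\mathcal{C}) \neq \{\mathbf{0}\}$ for non-trivial $\mathcal{C}$.
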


\begin{proof}
Let $d$ be the minimum distance of $\mathcal{C}$ and let $d_{t}$ be the minimum distance of $\operatorname{tor} (\mathcal{C})$. Then there exists a nonzero $\mathbf{t} \in \operatorname{tor} (\mathcal{C})$ such that $\mathrm{wt}(\mathbf{t})=d_{t}$. Since $\mathtt{b} \operatorname{tor} (\mathcal{C}) \subseteq \mathcal{C}$ and $\operatorname{wt} ( \mathtt{b} \mathbf{t}) = \operatorname{wt}(\mathbf{t})=d_{t}$, we have $d \leq d_{t}$.

Now, we prove that $d \geq d_{t}$. Let $\mathbf{x} \in \mathcal{C}$ such that $\mathrm{wt} (\mathbf{x})=d$. By Theorem \ref{thm: c=au+bv for any c in C}, $\mathbf{x} = \mathtt{a} \mathbf{u} + \mathtt{b} \mathbf{v}$ where $\mathbf{u} \in \operatorname{res}(\mathcal{C})$ and $\mathbf{v} \in \mathbb{F}_{5}^{n}$. Since $\mathcal{C}$ is a non-trivial linear code, we have the following three cases depending on $\mathbf{u}$ and $\mathbf{v}$ :
\begin{itemize}
  \item If $\mathbf{u} = \mathbf{0}$ and $\mathbf{v} \neq \mathbf{0}$, then we have $\mathbf{v} \in \operatorname{tor} (\mathcal{C})$ and $\mathrm{wt} (\mathbf{x}) = \mathrm{wt}( \mathtt{b} \mathbf{v} ) = \mathrm{wt} (\mathbf{v}) \geq d_{t}$.
  \item If $\mathbf{u} \neq \mathbf{0}$ and $\mathbf{v} = \mathbf{0}$, then $\mathrm{wt} (\mathbf{x}) =  \mathrm{wt}( \mathtt{a} \mathbf{u}) = \mathrm{wt} (\mathbf{u})$.
  \item If $\mathbf{u}, \mathbf{v} \neq \mathbf{0}$, then $\mathrm{wt} (\mathbf{x}) \geq \mathrm{wt} ( \mathtt{a} \mathbf{x}) = \mathrm{wt}( \mathtt{b} \mathbf{u}) = \mathrm{wt}(\mathbf{u})$.
\end{itemize}

Since $\mathbf{u} \in \operatorname{res}(\mathcal{C}) \subseteq \operatorname{tor}(\mathcal{C})$, it follows that $\operatorname{wt}(\mathbf{u}) \geq d_{t}$. Thus, in all cases, $d=\mathrm{wt}(\mathbf{x}) \geq d_{t}$.

Since $d \leq d_{t}$ and $d \geq d_{t}$, it follows that $d=d_{t}$.
\end{proof}

\begin{theorem}  \label{thm: residue and torsion of dual code Cperp}
If $\mathcal{C}$ is a  linear code over $I_5$, then $\operatorname{res} ( \mathcal{C}^{\perp} ) = \operatorname{res} ( \mathcal{C} )^{\perp}$ and $\operatorname{tor}  ( \mathcal{C}^{\perp} ) = \mathbb{F}_{5}^{n}$.
\end{theorem}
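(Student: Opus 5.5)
The key computation is the product of two elements written in $\mathtt{b}$-adic form. For $\mathbf{x} = \mathtt{a}\mathbf{u} + \mathtt{b}\mathbf{v}$ and $\mathbf{y} = \mathtt{a}\mathbf{u}' + \mathtt{b}\mathbf{v}'$ in $I_5^n$, the relations $\mathtt{a}^2 = \mathtt{b}$, $\mathtt{ab} = 0$ and $\mathtt{b}^2 = \mathtt{a}\mathtt{b}\mathtt{a} = 0$ give
\begin{align*}
(\mathbf{x}, \mathbf{y}) = \mathtt{b}\,(\mathbf{u}, \mathbf{u}') ,
\end{align*}
where $(\mathbf{u},\mathbf{u}')$ is the standard inner product over $\mathbb{F}_5$, acting on $\mathtt{b}$ through the natural $\mathbb{F}_5$-action. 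Since $\mathtt{b}z = 0$ for $z \in \mathbb{F}_5$ if and only if $z = 0$, this gives
\begin{align*}
\mathbf{y} \perp \mathbf{x} \iff \pi(\mathbf{x}) \perp \pi(\mathbf{y}) \text{ in } \mathbb{F}_5^n .
\end{align*}
Both claims of the theorem follow from this.

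\textbf{Torsion.} Take $\mathbf{y} = \mathtt{b}\mathbf{v}$ with $\mathbf{v} \in \mathbb{F}_5^n$ arbitrary. Then $\pi(\mathbf{y}) = \mathbf{0}$, so $(\mathbf{x}, \mathbf{y}) = 0$ for every $\mathbf{x}$. Hence $\mathtt{b}\mathbf{v} \in \mathcal{C}^{\perp}$, so $\mathbf{v} \in \operatorname{tor}(\mathcal{C}^{\perp})$, and therefore $\operatorname{tor}(\mathcal{C}^{\perp}) = \mathbb{F}_5^n$.

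\textbf{Residue, inclusion $\subseteq$.} Let $\mathbf{y} \in \mathcal{C}^{\perp}$. By part \ref{itm: au+bv in C} of Theorem \ref{thm: c=au+bv for any c in C}, every $\mathbf{u} \in \operatorname{res}(\mathcal{C})$ is $\pi(\mathbf{x})$ for some $\mathbf{x} = \mathtt{a}\mathbf{u} + \mathtt{b}\mathbf{v} \in \mathcal{C}$. Then $0 = (\mathbf{x}, \mathbf{y}) = \mathtt{b}\,(\mathbf{u}, \pi(\mathbf{y}))$, which forces $(\mathbf{u}, \pi(\mathbf{y})) = 0$. Thus $\pi(\mathbf{y}) \in \operatorname{res}(\mathcal{C})^{\perp}$.

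\textbf{Residue, inclusion $\supseteq$.} Let $\mathbf{w} \in \operatorname{res}(\mathcal{C})^{\perp}$ and put $\mathbf{y} = \mathtt{a}\mathbf{w}$, so $\pi(\mathbf{y}) = \mathbf{w}$. For any $\mathbf{x} \in \mathcal{C}$, part \ref{itm: c=au+bv} of Theorem \ref{thm: c=au+bv for any c in C} writes $\mathbf{x} = \mathtt{a}\mathbf{u} + \mathtt{b}\mathbf{v}$ with $\mathbf{u} \in \operatorname{res}(\mathcal{C})$. Then $(\mathbf{x}, \mathbf{y}) = \mathtt{b}\,(\mathbf{u}, \mathbf{w}) = 0$, so $\mathbf{y} \in \mathcal{C}^{\perp}$ and $\mathbf{w} \in \operatorname{res}(\mathcal{C}^{\perp})$.

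The only delicate step is the product identity. It requires checking carefully that the $\mathbb{F}_5$-action is compatible with ring multiplication, i.e. $(\mathtt{a}x)(\mathtt{a}y) = \mathtt{b}(xy)$ and that all terms involving $\mathtt{b}$ vanish, and that $\mathtt{b}z = 0$ forces $z = 0$ because $\mathtt{b}$ has additive order $5$. These can be read off the relations or Table \ref{mult table}; the rest is routine.
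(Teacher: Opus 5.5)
Your proof is correct and follows essentially the same route as the paper. The identity $(\mathtt{a}\mathbf{u}+\mathtt{b}\mathbf{v},\,\mathtt{a}\mathbf{u}'+\mathtt{b}\mathbf{v}')=\mathtt{b}\,(\mathbf{u},\mathbf{u}')$ gives both residue inclusions, with $\mathtt{a}\mathbf{w}$ as the lift in the $\supseteq$ direction, and the fact that $\mathtt{b}\mathbf{v}$ is orthogonal to every codeword gives the torsion claim. The differences are cosmetic: you state the product identity once up front, and for $\subseteq$ you start from an arbitrary $\mathbf{y}\in\mathcal{C}^{\perp}$ rather than lifting an element of $\operatorname{res}(\mathcal{C}^{\perp})$.
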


\begin{proof}
Let $\mathbf{u} \in \operatorname{res} (\mathcal{C}^{\perp} )$. By Theorem \ref{thm: c=au+bv for any c in C}, $\mathtt{a} \mathbf{u} + \mathtt{b} \mathbf{v}$ is a codeword in $\mathcal{C}^{\perp}$ for some $\mathbf{v} \in \mathbb{F}_{5}^{n}$. Let $\mathbf{x} \in \operatorname{res}(\mathcal{C})$. By Theorem \ref{thm: c=au+bv for any c in C}, $\mathtt{a} \mathbf{x} + \mathtt{b} \mathbf{y}$ is a codeword in $\mathcal{C}$ for some $\mathbf{y} \in \mathbb{F}_{5}^{n}$. By definition of $\mathcal{C}^{\perp}$,
\begin{align*}
0=(\mathtt{a} \mathbf{u} + \mathtt{b} \mathbf{v})  \cdot  ( \mathtt{a} \mathbf{x} + \mathtt{b} \mathbf{y}) = \mathtt{b}  (  \mathbf{u} \cdot \mathbf{x}  )  .
\end{align*}
Hence, $\mathbf{u} \cdot \mathbf{x}=0$ which implies that $\mathbf{u} \in \operatorname{res} (\mathcal{C})^{\perp}$. Therefore, $\operatorname{res} (\mathcal{C}^{\perp} ) \subseteq \operatorname{res}(\mathcal{C})^{\perp}$.

Now, assume $\mathbf{u} \in \operatorname{res}(\mathcal{C})^{\perp}$. Let $\mathbf{c} \in \mathcal{C}$. By Theorem \ref{thm: c=au+bv for any c in C}, $\mathbf{c}=$ $\mathtt{a} \mathbf{x} + \mathtt{b} \mathbf{y}$ where $\mathbf{x} \in \operatorname{res}(\mathcal{C})$ and $\mathbf{y} \in \mathbb{F}_{5}^{n}$. Observe that
\begin{align*}
\mathtt{a} \mathbf{u} \cdot \mathbf{c} = \mathtt{a} \mathbf{u} \cdot( \mathtt{a} \mathbf{x} + \mathtt{b} \mathbf{y}) = \mathtt{b} ( \mathbf{u} \cdot \mathbf{x} ) = 0  .
\end{align*}
Hence, $\mathtt{a} \mathbf{u} \in \mathcal{C}^{\perp}$ and $\pi( \mathtt{a} \mathbf{u}) = \mathbf{u}$ which yields $\mathbf{u} \in \operatorname{res} (\mathcal{C}^{\perp} )$. Therefore, $\operatorname{res}(\mathcal{C})^{\perp} \subseteq \operatorname{res}  (\mathcal{C}^{\perp} )$.

Now, we show that $\mathbb{F}_{5}^{n} \subseteq \operatorname{tor} (\mathcal{C}^{\perp} )$. Let $\mathbf{u} \in \mathbb{F}_{5}^{n}$ and let $\mathbf{c} \in \mathcal{C}$. By Theorem \ref{thm: c=au+bv for any c in C}, $\mathbf{c} = \mathtt{a} \mathbf{x} + \mathtt{b} \mathbf{y}$ where $\mathbf{x} \in \operatorname{res}(\mathcal{C})$ and $\mathbf{y} \in \mathbb{F}_{5}^{n}$. Observe that
\begin{align*}
\mathbf{c} \cdot \mathtt{b} \mathbf{u}=( \mathtt{a} \mathbf{x} + \mathtt{b} \mathbf{y}) \cdot \mathtt{b} \mathbf{u} = 0  .
\end{align*}
Hence, $\mathtt{b} \mathbf{u} \in \mathcal{C}^{\perp}$ which gives $\mathbf{u} \in \operatorname{tor}  (\mathcal{C}^{\perp} )$. Therefore, $\mathbb{F}_{5}^{n} = \operatorname{tor} (\mathcal{C}^{\perp} )$.
\end{proof}

\begin{theorem}[Theorem 2 in \cite{MassFormula_Ip}]  \label{thm: dual of a code over I5}
If $\mathcal{C}$ is a  linear code over $I_5$, then $\mathcal{C}^\perp = \mathtt{a}  \operatorname{res} (\mathcal{C})^\perp  \oplus  \mathtt{b} \mathbb{F}_5^n$.
\end{theorem}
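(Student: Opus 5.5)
We prove the two inclusions directly, using the $\mathtt{b}$-adic decomposition from Theorem \ref{thm: c=au+bv for any c in C} and the fact that $\mathtt{b}$ annihilates $I_5$ while $\mathtt{a}^2=\mathtt{b}$. Here $\mathtt{a}\operatorname{res}(\mathcal{C})^\perp \oplus \mathtt{b}\mathbb{F}_5^n$ means the set $\{\mathtt{a}\mathbf{u}+\mathtt{b}\mathbf{v} : \mathbf{u}\in\operatorname{res}(\mathcal{C})^\perp,\ \mathbf{v}\in\mathbb{F}_5^n\}$. The sum is direct because the map $(\mathbf{u},\mathbf{v})\mapsto \mathtt{a}\mathbf{u}+\mathtt{b}\mathbf{v}$ is injective: $\pi(\mathtt{a}\mathbf{u}+\mathtt{b}\mathbf{v})=\mathbf{u}$, and then $\mathbf{v}$ is determined.

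\emph{Inclusion $\supseteq$.} Take $\mathbf{u}\in\operatorname{res}(\mathcal{C})^\perp$ and $\mathbf{v}\in\mathbb{F}_5^n$, and let $\mathbf{c}=\mathtt{a}\mathbf{x}+\mathtt{b}\mathbf{y}\in\mathcal{C}$ with $\mathbf{x}\in\operatorname{res}(\mathcal{C})$. Expanding,
\[
(\mathtt{a}\mathbf{u}+\mathtt{b}\mathbf{v},\mathbf{c})=\mathtt{b}(\mathbf{u}\cdot\mathbf{x})+0+0+0=0,
\]
since $\mathtt{a}\mathtt{b}=\mathtt{b}^2=0$. This is exactly the computation already carried out in the proof of Theorem \ref{thm: residue and torsion of dual code Cperp}.

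\emph{Inclusion $\subseteq$.} Take $\mathbf{z}\in\mathcal{C}^\perp$ and write $\mathbf{z}=\mathtt{a}\mathbf{u}+\mathtt{b}\mathbf{v}$ by Theorem \ref{thm: c=au+bv for any c in C} applied to $\mathcal{C}^\perp$. Then $\mathbf{u}=\pi(\mathbf{z})\in\operatorname{res}(\mathcal{C}^\perp)$, and $\operatorname{res}(\mathcal{C}^\perp)=\operatorname{res}(\mathcal{C})^\perp$ by Theorem \ref{thm: residue and torsion of dual code Cperp}. Hence $\mathbf{z}$ lies in the right-hand side. Alternatively, we can verify this directly: pairing $\mathbf{z}$ with any $\mathtt{a}\mathbf{x}+\mathtt{b}\mathbf{y}\in\mathcal{C}$ gives $\mathtt{b}(\mathbf{u}\cdot\mathbf{x})=0$, which forces $\mathbf{u}\cdot\mathbf{x}=0$ in $\mathbb{F}_5$, since $t\mapsto t\mathtt{b}$ is injective on $\mathbb{F}_5$.

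There is no real obstacle. The only points needing care are the directness of the sum and the fact that $\mathtt{b}t=0$ forces $t=0$ for $t\in\mathbb{F}_5$. As a sanity check, the sizes agree: $|\mathcal{C}^\perp|=5^{\,n-k_1}\cdot 5^n$, which is consistent with $\mathcal{C}^\perp$ having residue dimension $n-k_1$ and torsion $\mathbb{F}_5^n$.
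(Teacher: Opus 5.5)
Your proof is correct. The directness of the sum, the identity $(\mathtt{a}\mathbf{u}+\mathtt{b}\mathbf{v},\mathtt{a}\mathbf{x}+\mathtt{b}\mathbf{y})=\mathtt{b}(\mathbf{u}\cdot\mathbf{x})$, and the reduction of $\subseteq$ to $\operatorname{res}(\mathcal{C}^\perp)=\operatorname{res}(\mathcal{C})^\perp$ all hold, and there is no circularity because Theorem \ref{thm: residue and torsion of dual code Cperp} is proved without this statement. The paper gives no separate proof and simply imports the result from \cite{MassFormula_Ip}. Your argument is the one implicit in its proof of Theorem \ref{thm: residue and torsion of dual code Cperp}, which uses the same computation to show $\mathtt{a}\operatorname{res}(\mathcal{C})^\perp\subseteq\mathcal{C}^\perp$, $\mathtt{b}\mathbb{F}_5^n\subseteq\mathcal{C}^\perp$, and $\operatorname{res}(\mathcal{C}^\perp)\subseteq\operatorname{res}(\mathcal{C})^\perp$.
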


The following lemma is a characteristic $5$ version of Lemma 1 in \cite{I3}.
\begin{lemma}
Let $\mathcal{C}$ be a linear code over $I_{5}$ of type $\left\{k_{1}, k_{2}\right\}$ and length $n$. Then a generator matrix $G$ of the code is of the form
$$
\begin{bmatrix*}[c]
\mathtt{a} I_{k_{1}} & \mathtt{a} X & Y \\
0 & \mathtt{b} I_{k_{2}} & \mathtt{b} Z
\end{bmatrix*}
$$
where $Y$ is a matrix with entries in $I_{5}, X$ and $Z$ are matrices with entries in $\mathbb{F}_{5}$, and $I_{k_{1}}, I_{k_{2}}$ are identity matrices.
\end{lemma}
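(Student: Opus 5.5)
We build the generator matrix from bases of $\operatorname{res}(\mathcal{C})$ and $\operatorname{tor}(\mathcal{C})$, and the standard form holds only up to a permutation of coordinates, as in Lemma 1 of \cite{I3}. The relevant bases are a basis of $\operatorname{res}(\mathcal{C})$ of size $k_1$, and a complement of it inside $\operatorname{tor}(\mathcal{C})$ of size $k_2$.

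\textbf{Residue rows.} Since $\operatorname{res}(\mathcal{C})$ is a $[n,k_1]_5$ code, after a coordinate permutation it has a generator matrix $[\, I_{k_1} \mid A\,]$ over $\mathbb{F}_5$. By Theorem \ref{thm: c=au+bv for any c in C}(\ref{itm: au+bv in C}), for each row $\mathbf{u}_i$ of this matrix there is $\mathbf{v}_i\in\mathbb{F}_5^n$ with $\mathtt{a}\mathbf{u}_i+\mathtt{b}\mathbf{v}_i\in\mathcal{C}$. These $k_1$ codewords form the first block of rows. Their first $k_1$ coordinates are $\mathtt{a}I_{k_1}+\mathtt{b}V_1$, so we must clear the $\mathtt{b}V_1$ part. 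To do so we use $\mathtt{b}\,\operatorname{res}(\mathcal{C})\subseteq\mathtt{b}\,\operatorname{tor}(\mathcal{C})\subseteq\mathcal{C}$, and subtract suitable $\mathbf{F}_5$-multiples of $\mathtt{b}\mathbf{u}_j$ (note $\mathtt{a}\cdot(\mathtt{a}\mathbf{u}_j+\mathtt{b}\mathbf{v}_j)=\mathtt{b}\mathbf{u}_j\in\mathcal{C}$). This makes the first $k_1$ columns exactly $\mathtt{a}I_{k_1}$, and the remaining columns become a matrix $Y$ over $I_5$. The middle block $\mathtt{a}X$ will be split off after choosing torsion coordinates below.

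\textbf{Torsion rows.} Extend the basis of $\operatorname{res}(\mathcal{C})$ to a basis of $\operatorname{tor}(\mathcal{C})$. Reducing the extra $k_2$ vectors modulo the rows $[I_{k_1}\mid A]$ makes them vanish on the first $k_1$ coordinates. Then, after a further permutation of the last $n-k_1$ coordinates, Gaussian elimination over $\mathbb{F}_5$ brings them to the form $[\,0\mid I_{k_2}\mid Z\,]$. Multiplying by $\mathtt{b}$ gives rows $[\,0\mid\mathtt{b}I_{k_2}\mid\mathtt{b}Z\,]$ in $\mathcal{C}$. With the column split $k_1,k_2,n-k_1-k_2$, the first block has the shape $[\mathtt{a}I_{k_1}\mid \mathtt{a}X+\mathtt{b}W\mid Y']$. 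We use the torsion rows to cancel $\mathtt{b}W$, which is possible because $\mathtt{b}$-multiples of the rows of $[0\mid I_{k_2}\mid Z]$ can be subtracted. This yields the middle block $\mathtt{a}X$ with $X$ over $\mathbb{F}_5$ (the residue part $A$ restricted to those columns), and the last block $Y$ over $I_5$.

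\textbf{Spanning.} We show the $k_1+k_2$ rows generate $\mathcal{C}$. Let $\mathbf{c}=\mathtt{a}\mathbf{u}+\mathtt{b}\mathbf{v}\in\mathcal{C}$. Writing $\mathbf{u}=\sum x_i\mathbf{u}_i$ and subtracting $\sum x_i(\text{row}_i)$ (using the natural $\mathbb{F}_5$-action) leaves an element $\mathtt{b}\mathbf{w}\in\mathcal{C}$, so $\mathbf{w}\in\operatorname{tor}(\mathcal{C})$. We expand $\mathbf{w}$ in the basis. The residue basis vectors contribute $\mathtt{b}\mathbf{u}_j=\mathtt{a}\cdot(\text{row}_j)$, an $I_5$-multiple of a row, and the extra vectors contribute $\mathbb{F}_5$-multiples of torsion rows. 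Hence the rows span, and the counting $|\mathcal{C}|=5^{2k_1+k_2}$ is consistent.

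\textbf{Main obstacle.} The delicate point is that $I_5$ has no unity, so the span must be understood as the $I_5$-span together with $\mathbb{F}_5$-combinations. We handle this by noting that a submodule is closed under addition, so $\mathbb{F}_5$-multiples arise as repeated sums. The other delicate point is justifying the $\mathtt{b}$-clearing steps, which rely on $\mathtt{a}\cdot\text{row}_j=\mathtt{b}\mathbf{u}_j$ lying in $\mathcal{C}$.
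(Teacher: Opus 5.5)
Your proof is correct. You lift a systematic generator matrix of $\operatorname{res}(\mathcal{C})$, append $\mathtt{b}$ times a complement of $\operatorname{res}(\mathcal{C})$ in $\operatorname{tor}(\mathcal{C})$, clear the $\mathtt{b}$-parts using $\mathtt{a}\cdot\mathrm{row}_j=\mathtt{b}\mathbf{u}_j$ and the torsion rows, and then verify spanning, rightly noting that the form holds only up to a coordinate permutation. The paper gives no proof of its own and simply defers to Lemma 1 of \cite{I3}; your argument is the natural characteristic-$5$ version of that residue/torsion reduction, so it matches the paper's intended approach.
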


\begin{theorem}\label{thm:multilevel_generator_matrix_I5}
Let $\mathcal{C}_1$ and $\mathcal{C}_2$ be linear codes over $\mathbb{F}_5$ such that
$\mathcal{C}_1 \subseteq \mathcal{C}_2$, and let $G_1$ and $G_2$ be generator matrices of
$\mathcal{C}_1$ and $\mathcal{C}_2$, respectively. Define
\begin{align*}
G=
\begin{pmatrix}
\mathtt a G_1\\
\mathtt b G_2
\end{pmatrix}.
\end{align*}
Let $\mathcal{C}$ be the $I_5$-submodule of $I_5^n$ generated by the rows of $G$, that is,
the smallest $I_5$-submodule of $I_5^n$ containing these rows. Then $\mathcal{C}$ is a linear code over $I_5$ of length $n$, and
\begin{align*}
\mathcal{C}
=
\mathtt a\mathcal{C}_1+\mathtt b\mathcal{C}_2
=
\{ \mathtt a\mathbf{u}+\mathtt b\mathbf{v} \,  :  \,  \mathbf{u}\in\mathcal{C}_1, \, 
\mathbf{v}\in\mathcal{C}_2 \}.
\end{align*}
In particular, for the $I_5$-code $\mathcal{C}$ constructed above, we have $\operatorname{res} (\mathcal{C}) = \mathcal{C}_1$ and $\operatorname{tor}(\mathcal{C}) = \mathcal{C}_2$. Moreover, the rows of $G$ form a generating set of $\mathcal{C}$ as an $I_5$-module, but
are not necessarily $I_5$-linearly independent.
\end{theorem}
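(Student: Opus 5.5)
The plan is to prove the two inclusions between the submodule $\mathcal{C}$ generated by the rows of $G$ and the set $S=\mathtt a\mathcal{C}_1+\mathtt b\mathcal{C}_2$, and then read off the residue and torsion codes. Throughout I write elements of $I_5$ as $\mathtt a x+\mathtt b y$ with $x,y\in\mathbb{F}_5$, and use the multiplication rules $\mathtt a\cdot\mathtt a=\mathtt b$, $\mathtt a\mathtt b=\mathtt b\mathtt a=\mathtt b^2=0$, extended $\mathbb{F}_5$-linearly.

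\emph{Step 1: $S$ is an $I_5$-submodule containing the rows of $G$.} $S$ is an additive subgroup because $\mathcal{C}_1$ and $\mathcal{C}_2$ are subspaces. It is also closed under the natural $\mathbb{F}_5$-action. For $r=\mathtt a x+\mathtt b y\in I_5$ and $\mathbf{u}\in\mathcal{C}_1$, $\mathbf{v}\in\mathcal{C}_2$, we have
\begin{align*}
r(\mathtt a\mathbf{u}+\mathtt b\mathbf{v})=\mathtt b\,(x\mathbf{u}).
\end{align*}
Since $x\mathbf{u}\in\mathcal{C}_1\subseteq\mathcal{C}_2$, this lies in $\mathtt b\mathcal{C}_2\subseteq S$. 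This is the one place the hypothesis $\mathcal{C}_1\subseteq\mathcal{C}_2$ is needed, and it is the key point of the argument. The rows $\mathtt a\mathbf{g}$ (with $\mathbf{g}$ a row of $G_1$) and $\mathtt b\mathbf{h}$ (with $\mathbf{h}$ a row of $G_2$) visibly lie in $S$. Hence $\mathcal{C}\subseteq S$ by minimality of $\mathcal{C}$.

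\emph{Step 2: $S\subseteq\mathcal{C}$.} Since $\mathcal{C}$ is a submodule, it is closed under addition and hence under integer multiples, i.e.\ under the $\mathbb{F}_5$-action. So $\mathcal{C}$ contains all $\mathbb{F}_5$-combinations of the rows $\mathtt a\mathbf{g}$, which is exactly $\mathtt a\mathcal{C}_1$. Likewise it contains $\mathtt b\mathcal{C}_2$, and therefore their sum $S$. Combining with Step 1 gives $\mathcal{C}=S$. In particular $\mathcal{C}$ is a linear code of length $n$.

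\emph{Step 3: residue and torsion.} Since $\pi(\mathtt a\mathbf{u}+\mathtt b\mathbf{v})=\mathbf{u}$, we get $\operatorname{res}(\mathcal{C})=\mathcal{C}_1$. Clearly $\mathcal{C}_2\subseteq\operatorname{tor}(\mathcal{C})$. Conversely, suppose $\mathtt b\mathbf{x}=\mathtt a\mathbf{u}+\mathtt b\mathbf{v}$ with $\mathbf{u}\in\mathcal{C}_1$, $\mathbf{v}\in\mathcal{C}_2$. Uniqueness of the $\mathtt b$-adic decomposition, used as in the proof of Theorem~\ref{thm: c=au+bv for any c in C}, gives $\mathbf{u}=\mathbf{0}$ and $\mathbf{x}=\mathbf{v}\in\mathcal{C}_2$. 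Hence $\operatorname{tor}(\mathcal{C})=\mathcal{C}_2$.

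\emph{Step 4: the rows need not be independent.} Here a concrete example suffices. Take $\mathcal{C}_1=\mathcal{C}_2$ nonzero and $\mathbf{g}$ a common row of $G_1$ and $G_2$. Then
\begin{align*}
\mathtt a\cdot(\mathtt a\mathbf{g})-1\cdot(\mathtt b\mathbf{g})=\mathbf{0}
\end{align*}
is a nontrivial $I_5$-relation among the rows of $G$. Alternatively, $\mathtt b\cdot(\mathtt b\mathbf{g})=\mathbf{0}$ with coefficient $\mathtt b\neq 0$ already shows dependence, since $I_5$ has zero divisors.
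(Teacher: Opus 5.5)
Your proof is correct and takes the same route as the paper. You get $\mathtt a\mathcal{C}_1+\mathtt b\mathcal{C}_2\subseteq\mathcal{C}$ from closure of $\mathcal{C}$ under the $\mathbb{F}_5$-action on the rows, and the reverse inclusion from $(\mathtt a x+\mathtt b y)(\mathtt a\mathbf{u}+\mathtt b\mathbf{v})=\mathtt b(x\mathbf{u})\in\mathtt b\mathcal{C}_2$ using $\mathcal{C}_1\subseteq\mathcal{C}_2$; your uniqueness-of-decomposition argument for $\operatorname{tor}(\mathcal{C})=\mathcal{C}_2$ is more explicit than the paper's ``similarly.'' One small point in Step 4: $1\notin I_5$, so the relation with coefficient $-1$ is not strictly an $I_5$-linear relation, whereas $\mathtt b\cdot(\mathtt b\mathbf{g})=\mathbf{0}$ is one (indeed $\mathtt b$ annihilates all of $I_5^n$). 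That second relation is what settles the final claim, which the paper only asserts.
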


\begin{proof}
Let $\mathcal{C}$ be the smallest $I_5$-submodule of $I_5^n$ containing the rows of $G$.
Since the rows of $G$ are precisely the vectors $\mathtt a\mathbf{u}$ with $\mathbf{u}$ a row of $G_1$
and the vectors $\mathtt b\mathbf{v}$ with $\mathbf{v}$ a row of $G_2$, we have
$\mathtt a\mathcal{C}_1\subseteq \mathcal{C}$ and $\mathtt b\mathcal{C}_2\subseteq \mathcal{C}$.
Hence, $\mathtt a \mathcal{C}_1 + \mathtt b \mathcal{C}_2 \subseteq \mathcal{C}$. Conversely, let $r \in I_5$ and $\mathbf{u} \in \mathcal{C}_1$. Writing $r = \mathtt{a} x + \mathtt b y$ with
$x , y \in \mathbb{F}_5$, we have
\begin{align*}
r (\mathtt{a}  \mathbf{u}) = ( \mathtt{a} x + \mathtt{b} y )\mathtt{a} \mathbf{u} = \mathtt{a}^2 x \mathbf{u} = \mathtt{b}  x  \mathbf{u} \in \mathtt{b} \mathcal{C}_2,
\end{align*}
because $\mathbf{u}\in \mathcal{C}_1\subseteq \mathcal{C}_2$.
Also, for every $\mathbf{v}\in \mathcal{C}_2$, we have $r(\mathtt b\mathbf{v})=\mathbf{0}$.
Therefore $\mathcal{C}\subseteq \mathtt a\mathcal{C}_1+\mathtt b\mathcal{C}_2$.
Thus, $\mathcal{C} = \mathtt a\mathcal{C}_1+\mathtt b\mathcal{C}_2$.

Finally, for any $\mathbf{c}=\mathtt a\mathbf{u}+\mathtt b\mathbf{v}\in\mathcal{C}$,
we have $\pi(\mathbf{c})=\mathbf{u}\in\mathcal{C}_1$, which shows
$\operatorname{res}(\mathcal{C})\subseteq\mathcal{C}_1$.
Since $\mathtt a\mathbf{u}\in\mathcal{C}$ for all $\mathbf{u}\in\mathcal{C}_1$,
the reverse inclusion holds and $\operatorname{res}(\mathcal{C})=\mathcal{C}_1$.
Similarly, by the definition of torsion and the equality
$\mathcal{C}=\mathtt a\mathcal{C}_1+\mathtt b\mathcal{C}_2$, we obtain
$\operatorname{tor}(\mathcal{C})=\mathcal{C}_2$.

The final statement follows from the fact that the rows of $G$ generate
$\mathcal{C}$ as an $I_5$-module, although they may be $I_5$-linearly dependent.
\end{proof}

\subsection{Codes over {\boldmath $\mathbb{F}_{25}$}}

An additive code $\mathcal{C}$ of length $n$ over $\mathbb{F}_{25}$ is defined as an $\mathbb{F}_5$-additive subgroup of $\mathbb{F}_{25}^{n}$. Therefore, $\mathcal{C}$ contains $5^{k}$ codewords for some integer $0 \leq k \leq 2 n$ and is called an $(n, 5^{k} )$ code. Moreover, if $\mathcal{C}$ has a minimum distance $d$, we call $\mathcal{C}$ an $(n, 5^{k}, d )$ code. An additive code $\mathcal{C}$ over $\mathbb{F}_{25}$ can be represented by a $k \times n$ matrix $G$, called a generator matrix, with entries from $\mathbb{F}_{25}$ whose rows span $\mathcal{C}$. That is, $\mathcal{C}$ is the $\mathbb{F}_{5}$-span of the rows of $G$.

Let $\omega \in \mathbb{F}_{25}$ such that $\omega^2 +3 = 0$. Then $\mathbb{F}_{25} = \mathbb{F}_5 [ \omega ]$. The trace map $\operatorname{Tr} \colon \mathbb{F}_{25} \to \mathbb{F}_5$ is defined by $\operatorname{Tr} (x) = x + x^5$. Following \cite{MassFormula_Ip}, every linear $I_{5}$-code $\mathcal{C}$ is attached with an additive code $\phi(\mathcal{C})$ over $\mathbb{F}_{25}$ by the map $\phi \colon I_{5} \to \mathbb{F}_{25}$ where
\begin{gather*}
\phi (0) = 0 ,   \ \     \phi ( \mathtt{a} ) =  3 ,      \ \    \phi (   \mathtt{b}  ) =    \omega  ,  \ \   \phi (  \mathtt{c}  )  =   \omega  + 3  ,   \  \     \phi ( \mathtt{d} )  =  2\omega , \  \     \phi ( \mathtt{e} )  =  1 ,    \\
 \phi ( \mathtt{f} ) = \omega + 1 ,   \ \     \phi ( \mathtt{g} ) =  2\omega + 3 ,    \  \    \phi (   \mathtt{h}  ) =    2\omega + 1  ,  \ \   \phi ( \mathtt{i} )  =  3 \omega ,   \  \   \phi ( \mathtt{j} )  =  4 ,    \\
 \phi ( \mathtt{k} ) = \omega + 4 ,   \ \     \phi ( \mathtt{l} ) =  3\omega + 3 ,    \  \    \phi (   \mathtt{m}  ) =    3\omega + 4  ,  \ \   \phi ( \mathtt{n} )  =  4  \omega ,   \  \   \phi ( \mathtt{o} )  =  2 ,    \\
 \phi ( \mathtt{p} ) = \omega + 2 ,   \ \     \phi ( \mathtt{q} ) =  4\omega + 3 ,    \  \    \phi (   \mathtt{r}  ) =    4\omega + 2  ,  \ \   \phi ( \mathtt{s} )  =  2  \omega +  4 ,   \  \   \phi ( \mathtt{t} )  =  2  \omega  +  2 ,    \\
 \phi ( \mathtt{u} ) = 3 \omega + 1 ,   \ \     \phi ( \mathtt{v} ) =  3 \omega + 2 ,    \  \    \phi (   \mathtt{w}  ) =    4\omega + 1  ,  \ \   \phi ( \mathtt{x} )  =  4  \omega +  4  .
\end{gather*}
This map is extended naturally from $I_5^n$ to $\mathbb{F}_{25}^n$. The parameters $(n, 5^{k}, d)$ of an $I_{5}$-code are identified with those of its image under $\phi$. For all $\mathbf{x} \in I_{5}^{n}$, we have $\operatorname{Tr}(\phi(\mathbf{x}))= \pi (\mathbf{x})$,  so $\operatorname{res}(\mathcal{C})=\operatorname{Tr}(\phi(\mathcal{C}))$. Similarly, $\operatorname{tor}(\mathcal{C}) = \mathbb{F}_{5}^{n} \cap \phi(\mathcal{C})$, i.e., $\operatorname{tor}(\mathcal{C})$ is the subfield subcode of $\phi(\mathcal{C})$.

\subsection{Mass Formulas}

We recall the mass formulas for  codes over $\mathbb{F}_5$ and $I_5$ mentioned in \cite{Kim_GFq, MassFormula_Ip}.

\begin{theorem} [Theorem 4.7 in \cite{Kim_GFq}]   \label{thm: mass formula for SO codes over F5}
Let $\varphi_{n, k}$ be the number of self-orthogonal codes over $\mathbb{F}_5$ of length $n$ and dimension $k$.
\begin{enumerate}
\item  If $n \geq 3$ is odd, then
$$
\varphi_{n, k}=\frac{\prod_{i=0}^{k-1}  ( 5^{n-1-2 i}-1 )   }{\prod_{i=1}^{k}   ( 5^{i}-1 )  }   .
$$

\item  If $n \geq 2$ is even, then  
$$
\varphi_{n, k}= \begin{cases}\dfrac{\left(5^{n-k}-5^{n / 2-k}+5^{n / 2}-1\right) \prod_{i=1}^{k-1}  ( 5^{n-2 i}-1 )  }{\prod_{i=1}^{k}   ( 5^{i}-1 )  } & \!\!\!\!\! , \text { if }(-1)^{\frac{n}{2}} \text { is square }    \\[1.0em]
 \dfrac{\left(5^{n-k}+5^{n / 2-k}-5^{n / 2}-1\right) \prod_{i=1}^{k-1} ( 5^{n-2 i}-1 )}{\prod_{i=1}^{k} ( 5^{i}-1 )} & \!\!\!\!\!  , \text { if }(-1)^{\frac{n}{2}} \text { is not square.}\end{cases}
$$
\end{enumerate}
\end{theorem}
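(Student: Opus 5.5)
This is the mass formula for self-orthogonal codes of length $n$ and dimension $k$ over $\mathbb{F}_5$, quoted from \cite{Kim_GFq}. I would prove it by double counting, building maximal isotropic flags one vector at a time. Let $N_j$ denote the number of ordered $j$-tuples $(\mathbf{v}_1,\dots,\mathbf{v}_j)$ of linearly independent vectors that are pairwise orthogonal and self-orthogonal, i.e.\ $(\mathbf{v}_i,\mathbf{v}_l)=0$ for all $i,l$. Every $k$-dimensional self-orthogonal code has exactly $|\mathrm{GL}_k(\mathbb{F}_5)| = \prod_{i=0}^{k-1}(5^k-5^i)$ ordered bases, each of which is such a tuple. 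Hence
\[
\varphi_{n,k}=\frac{N_k}{\prod_{i=0}^{k-1}(5^k-5^i)} .
\]
The whole problem therefore reduces to computing $N_k$ recursively.

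Given an isotropic $i$-dimensional space $W=\langle \mathbf{v}_1,\dots,\mathbf{v}_i\rangle$, the next vector $\mathbf{v}_{i+1}$ must lie in $W^\perp\setminus W$ and satisfy $(\mathbf{v}_{i+1},\mathbf{v}_{i+1})=0$. Since the standard form is nondegenerate, $W^\perp/W$ is a nondegenerate quadratic space of dimension $n-2i$. An isotropic vector in $W^\perp\setminus W$ is exactly a coset $\mathbf{x}+W$ in which $\mathbf{x}+W$ is a nonzero isotropic vector of $W^\perp/W$. So the number of choices is $5^i\cdot(\text{number of nonzero isotropic vectors in } W^\perp/W)$.

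The key input is the classical count of nonzero isotropic vectors $Z(m)$ in a nondegenerate quadratic space of dimension $m$ over $\mathbb{F}_q$ with $q$ odd:
\begin{itemize}
\item for $m$ odd, $Z(m)=q^{m-1}-1$;
\item for $m$ even, $Z(m)=(q^{m/2}-\epsilon)(q^{m/2-1}+\epsilon)$, where $\epsilon=\pm1$ according as $(-1)^{m/2}\det$ is or is not a square.
\end{itemize}
I would establish this by Gauss-sum counting of solutions of $\sum a_i x_i^2=0$, or by induction, splitting off hyperbolic planes.

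The main obstacle is the even-length case, where one must check that $W^\perp/W$ has the same discriminant type as $\mathbb{F}_5^n$ itself. This holds because $W\oplus W'$ is hyperbolic for a complementary isotropic $W'$, so Witt cancellation preserves the square class of $(-1)^{m/2}\det$. Here $\det=1$, which gives the dichotomy on whether $(-1)^{n/2}$ is a square in $\mathbb{F}_5$. Since $-1=2^2$ in $\mathbb{F}_5$, it is in fact always the first case, but I would keep both cases as stated.

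It remains to multiply out
\[
N_k=\prod_{i=0}^{k-1}5^i\,Z(n-2i)
\]
and simplify.
\begin{itemize}
\item For odd $n$, the powers of $5$ cancel against $\prod(5^k-5^i)=5^{\binom{k}{2}}\prod_{i=1}^k(5^i-1)$. Together with $Z(n-2i)=5^{n-1-2i}-1$, this yields the first formula.
\item For even $n$, telescoping $(5^{n/2-i}-\epsilon)(5^{n/2-i-1}+\epsilon)$ across consecutive $i$ regroups the product into $\prod_{i=1}^{k-1}(5^{n-2i}-1)$. The leftover end factors combine into $5^{n-k}\mp5^{n/2-k}\pm5^{n/2}-1$, after dividing by $5^{\binom{k}{2}}$.
\end{itemize}
This last regrouping is routine but fiddly bookkeeping, and I would verify it on $k=1$ first.
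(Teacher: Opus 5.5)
Your argument is correct. The paper gives no proof of this statement; it quotes it from the cited reference. So there is no internal argument to compare with.

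What you propose is the standard derivation of such mass formulas. The classical version fixes a self-orthogonal $[n,k-1]$ code $C$, counts the self-orthogonal $[n,k]$ codes containing it via the isotropic vectors of $C^\perp\setminus C$, and divides by the number of hyperplanes of an $[n,k]$ code. Your ordered-basis count is the same computation organized differently.

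Each step checks out:
\begin{itemize}
\item For $\mathbf{x}\in W^\perp$ and $\mathbf{w}\in W$ one has $(\mathbf{x}+\mathbf{w},\mathbf{x}+\mathbf{w})=(\mathbf{x},\mathbf{x})$, which gives the factor $5^iZ(n-2i)$.
\item The type of $W^\perp/W$ agrees with that of $\mathbb{F}_5^n$. Write $\mathbb{F}_5^n=H\perp H^\perp$ with $H\supseteq W$ hyperbolic of dimension $2i$, so that $W^\perp=W\oplus H^\perp$ and $H^\perp\cong W^\perp/W$. The determinants then differ by $(-1)^i$, which exactly offsets the change from $(-1)^{n/2}$ to $(-1)^{(n-2i)/2}$. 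Strictly this is multiplicativity of the discriminant over orthogonal sums rather than Witt cancellation, but the conclusion is the same.
\item In the even case the product regroups as
\[
(5^{n/2}-\epsilon)(5^{n/2-k}+\epsilon)\prod_{i=1}^{k-1}(5^{n-2i}-1),
\]
and the end factors expand to $5^{n-k}-\epsilon\, 5^{n/2-k}+\epsilon\, 5^{n/2}-1$, as required.
\end{itemize}

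Two minor remarks. First, your regrouping splits off the two end factors and so needs $k\ge 1$; the displayed formula tacitly assumes this too.

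Second, since the second case is vacuous for $q=5$, you could bypass discriminants entirely. In $\mathbb{F}_5$ we have $x^2+y^2=(x+2y)(x-2y)$, so $\mathbb{F}_5^n$ is hyperbolic for even $n$. A totally isotropic $W$ then lies in a totally isotropic $M$ of dimension $n/2$. The quotient $M/W$ is a totally isotropic subspace of half the dimension in $W^\perp/W$, so every quotient is hyperbolic and $\epsilon=+1$ at every step.
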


For  $k \leq n$, the \textit{Gaussian binomial coefficient}  $\begin{bmatrix*}[c]
n  \\  k
\end{bmatrix*}_{q}$    is defined as
$$
\begin{bmatrix*}[c]
n  \\  k
\end{bmatrix*}_{q} \coloneqq   \frac{\left(q^{n}-1\right)\left(q^{n}-q\right) \cdots\left(q^{n}-q^{k-1}\right)}{\left(q^{k}-1\right)\left(q^{k}-q\right) \cdots\left(q^{k}-q^{k-1}\right)}
$$
which counts the number of $k$-dimensional subspaces in an $n$-dimensional vector space over $\mathbb{F}_{q}$.

Let $\mathcal{N}_{\operatorname{SO}}   \!   \left(n, k_1 , k_2 \right)$ denote the number of distinct self-orthogonal codes over $I_5$ of length $n$ and type $\{ k_1 , k_2 \}$. Similarly, define $\mathcal{N}_{\operatorname{QSD}}   \!   \left(n, k_1 , k_2 \right)$ to be the number of distinct quasi self-dual codes over $I_5$ of length $n$ and type $\{ k_1 , k_2 \}$. We also define $\mathcal{N}_{\operatorname{SD}}  \! \left( n, \frac{n}{2} , \frac{n}{2} \right)$ to be the number of distinct self-dual codes over $I_5$ of length $n$ and type $\left\lbrace \frac{n}{2} , \frac{n}{2} \right\rbrace$. The following theorem utilizes Theorem 7 in \cite{MassFormula_Ip} for the case when $p = 5$.

\begin{theorem}[Theorem 7 in \cite{MassFormula_Ip}]   \label{thm: mass formula SO codes over I5}
For all lengths $n$ and for the type $\left\{ k_1, k_2 \right\}$, the number of distinct self-orthogonal codes over $I_5$ is
\begin{align*}
\mathcal{N}_{\operatorname{SO}}   \!   \left(n, k_1 , k_2 \right)
=
\varphi_{n, k_1 }  \begin{bmatrix*}[c]
n-k_1  \\  k_2
\end{bmatrix*}_{5}   5^{ k_1 \left(n - k_1 - k_2 \right) }  .     
\end{align*}
\end{theorem}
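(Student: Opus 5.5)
We count self-orthogonal $I_5$-codes of length $n$ and type $\{k_1,k_2\}$ by first fixing the residue and torsion codes, then counting how many codes sit over each such pair.

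\textbf{Step 1: self-orthogonality depends only on the residue.} Let $\mathcal{C}$ be an $I_5$-code. By Theorem~\ref{thm: c=au+bv for any c in C}, any two codewords can be written as $\mathtt a\mathbf u+\mathtt b\mathbf v$ and $\mathtt a\mathbf x+\mathtt b\mathbf y$ with $\mathbf u,\mathbf x\in\operatorname{res}(\mathcal C)$. Their inner product is $\mathtt b(\mathbf u\cdot\mathbf x)$, because $\mathtt{ab}=\mathtt b^2=0$. Hence $\mathcal C$ is self-orthogonal if and only if $\operatorname{res}(\mathcal C)$ is a self-orthogonal $\mathbb F_5$-code. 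The torsion code carries no orthogonality constraint. Equivalently, by Theorem~\ref{thm: dual of a code over I5}, $\mathcal C\subseteq\mathcal C^\perp=\mathtt a\operatorname{res}(\mathcal C)^\perp\oplus\mathtt b\mathbb F_5^n$ exactly when $\operatorname{res}(\mathcal C)\subseteq\operatorname{res}(\mathcal C)^\perp$.

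\textbf{Step 2: fibre over a fixed pair.} Fix $C_1=\operatorname{res}(\mathcal C)$ of dimension $k_1$ and $C_2=\operatorname{tor}(\mathcal C)\supseteq C_1$ of dimension $k_1+k_2$.

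\emph{Structure of the fibre.} The code $\mathcal C$ contains $\mathtt b C_2$, and every $I_5$-code contains $\mathtt a\cdot\mathcal C\subseteq \mathtt bC_1\subseteq \mathtt bC_2$. So $\mathcal C$ is determined by the subgroup $\mathcal C/\mathtt bC_2\subseteq (\mathtt a\mathbb F_5^n\oplus\mathtt b\mathbb F_5^n)/\mathtt bC_2$. This subgroup projects isomorphically onto $C_1$, since its kernel is $(\mathcal C\cap\mathtt b\mathbb F_5^n)/\mathtt bC_2=0$ by the definition of torsion. It is therefore the graph of an $\mathbb F_5$-linear map $f\colon C_1\to\mathbb F_5^n/C_2$, namely $\mathbf u\mapsto \mathbf v+C_2$ where $\mathtt a\mathbf u+\mathtt b\mathbf v\in\mathcal C$.

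\emph{Every such map occurs.} Conversely, take any linear $f$ and set
$$\mathcal C_f=\{\mathtt a\mathbf u+\mathtt b\mathbf v:\mathbf u\in C_1,\ \mathbf v\in f(\mathbf u)\}.$$
This set is an additive subgroup. It is closed under $I_5$-multiplication: for $r=\mathtt a x+\mathtt b y$, we get $r(\mathtt a\mathbf u+\mathtt b\mathbf v)=\mathtt b x\mathbf u\in\mathtt bC_1\subseteq\mathtt bC_2$, which lies in $\mathcal C_f$ because $f(\mathbf 0)=C_2$. Its residue is $C_1$ and its torsion is exactly $C_2$.

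\emph{Counting.} Distinct maps give distinct codes, so the fibre has
$$|\operatorname{Hom}(C_1,\mathbb F_5^n/C_2)|=5^{k_1(n-k_1-k_2)}$$
elements. This bijection is the step needing the most care: one must check that $\mathcal C_f$ is closed under ring multiplication and has torsion exactly $C_2$, not larger. Both checks rely on $C_1\subseteq C_2$.

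\textbf{Step 3: summation.} There are $\varphi_{n,k_1}$ self-orthogonal codes $C_1$, by Theorem~\ref{thm: mass formula for SO codes over F5}. For each $C_1$, the $(k_1+k_2)$-dimensional spaces $C_2\supseteq C_1$ correspond to $k_2$-dimensional subspaces of $\mathbb F_5^n/C_1$. Their number is $\begin{bmatrix} n-k_1\\ k_2\end{bmatrix}_5$. Multiplying the three factors gives the stated formula for $\mathcal N_{\operatorname{SO}}(n,k_1,k_2)$.
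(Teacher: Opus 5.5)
Your argument is correct. The paper itself does not reprove this statement; it imports it as Theorem~7 of \cite{MassFormula_Ip}. Your proof follows the three-factor decomposition the formula encodes: self-orthogonality is governed by the residue alone (as in Theorem~\ref{thm: SO iff resC SO}), the torsion supercodes of a fixed residue are counted by the Gaussian binomial, and each admissible pair $(C_1,C_2)$ carries exactly $5^{k_1(n-k_1-k_2)}$ codes, which your coordinate-free identification of the fibre with $\operatorname{Hom}_{\mathbb{F}_5}(C_1,\mathbb{F}_5^n/C_2)$ establishes cleanly.

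One minor quibble: only closure under $I_5$-multiplication uses $C_1\subseteq C_2$. Exactness of the torsion follows from $f(\mathbf{0})=C_2$ alone.
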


The following corollaries follow from Theorem \ref{thm: mass formula SO codes over I5} and the usual counting technique under the group action.

\begin{cor}  \label{cor: mass formula for SO codes over I5}
For a given length $n$ and type $\left\{k_{1}, k_{2}\right\}$ with $0 \leq k_{1}, k_{2} \leq n$, we have
\begin{align*}
\sum_{\mathcal{C}} \frac{1}{| \operatorname{Aut} (\mathcal{C})|} =  \frac{\mathcal{N}_{\operatorname{SO}}    \!   \left(n, k_{1}, k_{2}\right)}{2^{n} n!}
\end{align*}
where $\mathcal{C}$ runs over distinct representatives of equivalence classes under a monomial action of self-orthogonal codes over $I_5$ of length $n$ and type $\left\{k_{1}, k_{2}\right\}$.    \label{cor: mass formula SO codes over I5}
\end{cor}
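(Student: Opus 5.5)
The plan is an orbit--stabilizer argument for the group of signed permutations acting on the set of self-orthogonal codes of a fixed length and type. Let $\mathcal{M}_n$ denote the group of $n \times n$ monomial matrices whose nonzero entries lie in $\{1,-1\}$. Then $\mathcal{M}_n \cong \{\pm 1\}^n \rtimes S_n$, so $|\mathcal{M}_n| = 2^n n!$. Let $\mathcal{S}(n,k_1,k_2)$ be the set of distinct self-orthogonal $I_5$-codes of length $n$ and type $\{k_1,k_2\}$. By Theorem \ref{thm: mass formula SO codes over I5}, $|\mathcal{S}(n,k_1,k_2)| = \mathcal{N}_{\operatorname{SO}}(n,k_1,k_2)$. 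Let $\mathcal{M}_n$ act on the right by $\mathcal{C} \mapsto \mathcal{C}M = \{\mathbf{c}M : \mathbf{c} \in \mathcal{C}\}$. Define $\operatorname{Aut}(\mathcal{C})$ as the stabilizer $\{M \in \mathcal{M}_n : \mathcal{C}M = \mathcal{C}\}$.

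First, I check that the action is well defined, i.e.\ that it maps $\mathcal{S}(n,k_1,k_2)$ to itself. The entries of $M$ are $0,\pm1$, and $\mathbf{c}M$ only permutes coordinates and negates some of them. Since negation is additive and commutes with multiplication by ring elements, $\mathcal{C}M$ is again an $I_5$-submodule. It is also a linear code in the sense of Theorem \ref{thm:multilevel_generator_matrix_I5}. For self-orthogonality, take $\mathbf{x},\mathbf{y}\in\mathcal{C}$. Each summand of $(\mathbf{x}M,\mathbf{y}M)$ is $(\epsilon x_i)(\epsilon y_i)$ with $\epsilon=\pm1$, and $\epsilon^2=1$, so $(\mathbf{x}M,\mathbf{y}M)=(\mathbf{x},\mathbf{y})=0$. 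For the type, $\pi$ is additive, so $\pi(\mathbf{c}M)=\pi(\mathbf{c})M$ and hence $\operatorname{res}(\mathcal{C}M)=\operatorname{res}(\mathcal{C})M$. Similarly $\mathtt{b}\mathbf{x}\in\mathcal{C}$ if and only if $\mathtt{b}(\mathbf{x}M)\in\mathcal{C}M$, so $\operatorname{tor}(\mathcal{C}M)=\operatorname{tor}(\mathcal{C})M$. Multiplying by $M$ is an $\mathbb{F}_5$-linear isomorphism, so both dimensions are preserved. Finally, $\mathcal{C}(MM')=(\mathcal{C}M)M'$ and the identity matrix acts trivially, so this is a genuine group action.

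Next, I partition $\mathcal{S}(n,k_1,k_2)$ into orbits, which are exactly the monomial equivalence classes. By the orbit--stabilizer theorem, the orbit of $\mathcal{C}$ has size $2^n n!/|\operatorname{Aut}(\mathcal{C})|$. Summing over a set of orbit representatives gives
\begin{align*}
\mathcal{N}_{\operatorname{SO}}(n,k_1,k_2)=\sum_{\mathcal{C}}\frac{2^n n!}{|\operatorname{Aut}(\mathcal{C})|}.
\end{align*}
Dividing both sides by $2^n n!$ yields the corollary.

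The only point that needs real care is the invariance of the type. The torsion code in particular has to be checked, since it is defined through the membership condition $\mathtt{b}\mathbf{x}\in\mathcal{C}$. This is handled by the observation that $\mathtt{b}(\mathbf{x}M)=(\mathtt{b}\mathbf{x})M$. Everything else is routine.
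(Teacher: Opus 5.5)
Your proposal is correct and takes the same approach as the paper, which proves the corollary only by saying it follows from Theorem~\ref{thm: mass formula SO codes over I5} and ``the usual counting technique under the group action.'' That technique is the orbit--stabilizer count for the signed permutation group of order $2^n n!$. Your checks that this group preserves self-orthogonality, the residue code and the torsion code (hence the type), followed by summing orbit sizes over class representatives, supply the details the paper leaves implicit.
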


\begin{cor}    \label{cor: mass formula for QSD codes over I5}
For all codes of lengths $n$ of type $\{k_1, k_2 \}$ with $0 \leq k_1 , k_2 \leq n$, we have
\begin{align*}
\sum_{\mathcal{C}} \frac{1}{ | \operatorname{Aut} (\mathcal{C})  | } = \frac{\mathcal{N}_{\operatorname{QSD}} (n , k_1 , k_2)}{2^n   n!}  = \frac{ \varphi_{n, k_1}   \begin{bmatrix*}[c]
k_1 + k_2  \\   k_2
\end{bmatrix*}_{5}  5^{k_1^{2}}}{2^n n!}
\end{align*}
where $\mathcal{C}$ runs over distinct representatives of equivalence classes under monomial column permutations of quasi self-dual codes over $I_5$ of length $n$ and type $\left\{k_{1}, k_{2} \right\}$.
\end{cor}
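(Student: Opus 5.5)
The plan is to specialise Theorem~\ref{thm: mass formula SO codes over I5} to quasi self-dual codes by determining the type, and then to apply the orbit--stabiliser argument behind Corollary~\ref{cor: mass formula for SO codes over I5}.

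\emph{Step 1: the type constraint.} A quasi self-dual code of length $n$ and type $\{k_1,k_2\}$ is a self-orthogonal code with $|\mathcal{C}|=5^{2k_1+k_2}=5^n$. Hence $\mathcal{N}_{\operatorname{QSD}}(n,k_1,k_2)$ equals $\mathcal{N}_{\operatorname{SO}}(n,k_1,k_2)$ when $2k_1+k_2=n$, and is zero otherwise. So the corollary is to be read for types with $k_2=n-2k_1$.

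\emph{Step 2: simplify the count.} Substituting $k_2=n-2k_1$ into Theorem~\ref{thm: mass formula SO codes over I5} gives
\[
\mathcal{N}_{\operatorname{SO}}(n,k_1,k_2)=\varphi_{n,k_1}\begin{bmatrix} n-k_1\\ k_2\end{bmatrix}_5 5^{k_1(n-k_1-k_2)} .
\]
Here $n-k_1-k_2=k_1$, so the power of $5$ is $5^{k_1^2}$. Also $n-k_1=k_1+k_2$, so the Gaussian coefficient is $\left[\begin{smallmatrix} k_1+k_2\\ k_2\end{smallmatrix}\right]_5$. This yields the closed form on the right-hand side. It is consistent with the structure theorems: $\operatorname{res}(\mathcal{C})$ is self-orthogonal of dimension $k_1$, and $\operatorname{tor}(\mathcal{C})$ lies between $\operatorname{res}(\mathcal{C})$ and $\operatorname{res}(\mathcal{C})^\perp$, with $\dim\operatorname{res}(\mathcal{C})^\perp=n-k_1=k_1+k_2$. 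So for QSD codes the torsion code is forced to be exactly $\operatorname{res}(\mathcal{C})^\perp$.

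\emph{Step 3: mass formula.} The group of $n\times n$ monomial matrices with entries $\pm1$ has order $2^n n!$. It acts on the set of QSD codes of fixed length and type, because multiplying by such a matrix preserves the inner product up to the squares $(\pm1)^2=1$, the size, and the dimensions of the residue and torsion codes. By orbit--stabiliser, each equivalence class of a code $\mathcal{C}$ has $2^n n!/|\operatorname{Aut}(\mathcal{C})|$ members, where $\operatorname{Aut}(\mathcal{C})$ is the stabiliser. Summing over representatives gives $\sum_{\mathcal{C}} 1/|\operatorname{Aut}(\mathcal{C})| = \mathcal{N}_{\operatorname{QSD}}/(2^n n!)$.

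The main point needing care is Step 1, namely making clear that the formula applies only when $2k_1+k_2=n$ (the statement's "$0\le k_1,k_2\le n$" must be read this way). The rest is direct substitution.
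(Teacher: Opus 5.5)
Your Steps 1--3 are the paper's argument. The paper substitutes $k_2=n-2k_1$ (the QSD condition $|\mathcal{C}|=5^{2k_1+k_2}=5^n$) into Theorem~\ref{thm: mass formula SO codes over I5} and invokes the orbit counting behind Corollary~\ref{cor: mass formula for SO codes over I5}. Your derivation of the closed form is correct, and so is your caveat that the identity only makes sense when $2k_1+k_2=n$.

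However, the ``consistency'' aside at the end of Step 2 is false and should be deleted.

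\emph{Why it fails.} In $I_5$ we have $\mathtt{b}r=0$ for every $r\in I_5$. So every vector $\mathtt{b}\mathbf{v}$ is orthogonal to all of $I_5^n$; this is the content of $\operatorname{tor}(\mathcal{C}^\perp)=\mathbb{F}_5^n$ in Theorem~\ref{thm: residue and torsion of dual code Cperp}. Self-orthogonality therefore constrains only $\operatorname{res}(\mathcal{C})$, and $\operatorname{tor}(\mathcal{C})$ can be any code containing $\operatorname{res}(\mathcal{C})$. The factor $\left[\begin{smallmatrix} n-k_1\\ k_2\end{smallmatrix}\right]_5$ counts exactly these choices: the $k_2$-dimensional subspaces of $\mathbb{F}_5^n/\operatorname{res}(\mathcal{C})$.

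\emph{It contradicts your own formula.} If the torsion code were forced to equal $\operatorname{res}(\mathcal{C})^\perp$, that Gaussian factor would be $1$. For $n=3$ and type $\{1,1\}$ you would get $6\cdot 1\cdot 5=30$ codes instead of $6\cdot 6\cdot 5=180$.

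\emph{Concrete counterexample.} The QSD code in the paper's $n=3$ list generated by $\left(\begin{smallmatrix}\mathtt{a}&\mathtt{e}&\mathtt{b}\\ 0&\mathtt{b}&0\end{smallmatrix}\right)$ has $\operatorname{res}(\mathcal{C})=\langle(1,2,0)\rangle$ and $\operatorname{tor}(\mathcal{C})=\{(s,t,0): s,t\in\mathbb{F}_5\}$. This torsion code is not even contained in $\operatorname{res}(\mathcal{C})^\perp=\{\mathbf{y}: y_1+2y_2=0\}$, since $(1,0,0)$ lies in it.

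The aside is not used anywhere in your derivation, so removing it leaves a correct proof.
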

\begin{proof}
The result follows immediately from Theorem \ref{thm: mass formula SO codes over I5} and Corollary \ref{cor: mass formula SO codes over I5} and using the fact that  quasi self-dual codes are self-orthogonal codes with $k_2 = n - 2 k_1$.
\end{proof}

\begin{cor}   \label{cor: mass formula for free SO codes over I5}
For all codes of lengths $n$ and of type $\{ k_1 , 0 \}$, we have
\begin{align*}
\sum_{\mathcal{C}}   \frac{1}{| \operatorname{Aut} (\mathcal{C})|}  =  \frac{\mathcal{N}_{\operatorname{SO}}    \!   \left(n, k_{1}, 0  \right)}{2^{n} n!}  =  \frac{ \varphi_{n, k_{1}} 5^{ k_1  (n - k_1)   }    }{  2^n n!  }
\end{align*}
where $\mathcal{C}$ runs over distinct representatives of equivalence classes under a monomial action of free self-orthogonal codes over $I_5$ of length $n$.
\end{cor}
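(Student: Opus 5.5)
The statement is a direct specialization of Theorem~\ref{thm: mass formula SO codes over I5} and Corollary~\ref{cor: mass formula SO codes over I5} to $k_2 = 0$. I will first recall why Corollary~\ref{cor: mass formula SO codes over I5} holds, and then substitute.

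\emph{Orbit counting.} Consider the group $M_n$ of signed permutation matrices, of order $2^n n!$, acting on the set $\mathcal{S}$ of self-orthogonal $I_5$-codes of length $n$ and type $\{k_1,0\}$. This action is well defined. A monomial matrix $M$ with entries $\pm 1$ satisfies $MM^{T}=I$, so it preserves the inner product and hence self-orthogonality. Moreover, $\pi(\mathbf{c}M)=\pi(\mathbf{c})M$, so $\operatorname{res}(\mathcal{C}M)=\operatorname{res}(\mathcal{C})M$. Likewise $\mathtt{b}\mathbf{x}M\in\mathcal{C}M$ if and only if $\mathtt{b}\mathbf{x}\in\mathcal{C}$, so $\operatorname{tor}(\mathcal{C}M)=\operatorname{tor}(\mathcal{C})M$. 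Dimensions are therefore preserved, and freeness in particular is preserved.

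By the orbit--stabilizer theorem, the orbit of $\mathcal{C}$ has size $2^n n!/|\operatorname{Aut}(\mathcal{C})|$, where $\operatorname{Aut}(\mathcal{C})$ is the stabilizer. Summing over orbit representatives gives
\[
\sum_{\mathcal{C}}\frac{2^n n!}{|\operatorname{Aut}(\mathcal{C})|}=|\mathcal{S}|=\mathcal{N}_{\operatorname{SO}}(n,k_1,0).
\]

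\emph{Substitution.} Apply Theorem~\ref{thm: mass formula SO codes over I5} with $k_2=0$. The Gaussian binomial $\left[\begin{smallmatrix} n-k_1\\ 0\end{smallmatrix}\right]_5$ is the empty-product quotient, which equals $1$, since there is exactly one $0$-dimensional subspace. The exponent becomes $k_1(n-k_1-0)=k_1(n-k_1)$. This yields $\mathcal{N}_{\operatorname{SO}}(n,k_1,0)=\varphi_{n,k_1}5^{k_1(n-k_1)}$, and dividing by $2^n n!$ gives the claimed formula.

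\emph{Main point to check.} The only step needing care is that the set summed over, namely the free self-orthogonal codes, is exactly the type-$\{k_1,0\}$ slice of the set counted in Theorem~\ref{thm: mass formula SO codes over I5}. This holds because an $I_5$-code is free precisely when $k_2=0$, that is, when $\operatorname{res}(\mathcal{C})=\operatorname{tor}(\mathcal{C})$, as defined earlier. Beyond that, the argument is routine.
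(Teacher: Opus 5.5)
Your proposal is correct and follows the paper's argument. The paper states that this corollary follows from Theorem~\ref{thm: mass formula SO codes over I5} and ``the usual counting technique under the group action,'' which is exactly what you do: set $k_2=0$ and apply the orbit--stabilizer count under the signed permutation group of order $2^n n!$. Your check that the monomial action preserves self-orthogonality, $\operatorname{res}$ and $\operatorname{tor}$ (and hence the type $\{k_1,0\}$) supplies details the paper leaves implicit.
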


\begin{cor}  \label{cor: mass formula for SD codes over I5}
For a given integer $n \geq 2$, we have the identity
$$
\sum_{\mathcal{C}}   \frac{1}{| \operatorname{Aut} (\mathcal{C})|}  =   \frac{\mathcal{N}_{\mathrm{SD}}  \! \left(n, \frac{n}{2}, \frac{n}{2}  \right)}{2^{n} n!}   =   \frac{\varphi_{n, \frac{n}{2} } }{2^n n!}
$$
where $\mathcal{C}$ runs over distinct representatives of equivalence classes under a monomial action of self-dual codes over $I_5$ of length $n$ and type $\left\lbrace \frac{n}{2} , \frac{n}{2} \right\rbrace$.
\end{cor}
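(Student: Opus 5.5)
The identity follows from Theorem \ref{thm: mass formula SO codes over I5} once we show that the self-dual codes of length $n$ are exactly the self-orthogonal codes of type $\left\{ \frac{n}{2}, \frac{n}{2} \right\}$. The mass-formula step is then the standard orbit–stabilizer computation used for Corollary \ref{cor: mass formula SO codes over I5}.

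First I would characterize self-duality through Theorem \ref{thm: dual of a code over I5}. That theorem gives $\mathcal{C}^\perp = \mathtt{a}\operatorname{res}(\mathcal{C})^\perp \oplus \mathtt{b}\mathbb{F}_5^n$, so $|\mathcal{C}^\perp| = 5^{(n-k_1)+n}$. If $\mathcal{C}=\mathcal{C}^\perp$, then by Theorem \ref{thm: residue and torsion of dual code Cperp} we have $\operatorname{res}(\mathcal{C}) = \operatorname{res}(\mathcal{C})^\perp$, which forces $k_1 = n/2$ and requires $n$ even. The same theorem gives $\operatorname{tor}(\mathcal{C}) = \mathbb{F}_5^n$, so $k_1+k_2=n$ and hence $k_2=n/2$.

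Conversely, let $\mathcal{C}$ be self-orthogonal of type $\{n/2,n/2\}$. Then $\mathcal{C}\subseteq\mathcal{C}^\perp$, and the sizes agree: $|\mathcal{C}| = 5^{2k_1+k_2} = 5^{3n/2} = |\mathcal{C}^\perp|$. So the two codes are equal. This shows $\mathcal{N}_{\mathrm{SD}}(n,n/2,n/2) = \mathcal{N}_{\operatorname{SO}}(n,n/2,n/2)$.

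Next I would evaluate Theorem \ref{thm: mass formula SO codes over I5} at $k_1=k_2=n/2$. The Gaussian binomial becomes $\left[\begin{smallmatrix} n/2 \\ n/2\end{smallmatrix}\right]_5 = 1$, and the exponent $k_1(n-k_1-k_2)$ vanishes. This leaves exactly $\varphi_{n,n/2}$. For odd $n$ there are no self-dual codes and $\varphi_{n,n/2}$ is not meaningful, so I would interpret the statement for even $n$; I would note this explicitly.

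Finally, the monomial group of order $2^n n!$ acts on the set of self-dual codes. The action preserves inner products because the entries $\pm1$ satisfy $(\pm1)^2=1$, so self-duality is preserved. Each orbit of a code $\mathcal{C}$ has size $2^n n!/|\operatorname{Aut}(\mathcal{C})|$. Summing over orbit representatives gives $\sum_{\mathcal{C}} 1/|\operatorname{Aut}(\mathcal{C})| = \mathcal{N}_{\mathrm{SD}}/(2^n n!)$.

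The only delicate step is the first one: the identification of self-dual codes with a single type. It relies on the torsion of $\mathcal{C}^\perp$ being all of $\mathbb{F}_5^n$, and this is supplied by Theorem \ref{thm: residue and torsion of dual code Cperp}.
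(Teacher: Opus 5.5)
Your proof is correct and is essentially the paper's argument. The paper simply states that this corollary follows from Theorem \ref{thm: mass formula SO codes over I5} with $k_1=k_2=n/2$, where the Gaussian binomial and the power of $5$ both collapse to $1$, together with the orbit--stabilizer count under the monomial group of order $2^n n!$. Your explicit check that self-dual codes are exactly the self-orthogonal codes of type $\{n/2,n/2\}$ fills in a step the paper leaves implicit here (it is essentially Theorem \ref{thm: SD code iff resC is SD and torC is F_5^n}), and restricting to even $n$ is the right reading of the statement.
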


\section{Constructions}   \label{sec: constructions}

\subsection{Construction of Self-Orthogonal Codes}
The following theorem characterizes self-orthogonal codes over $I_5$ which will be important in our construction.

\begin{theorem}  \label{thm: SO iff resC SO}
A linear code $\mathcal{C}$ of length $n$ over $I_{5}$ is self-orthogonal if and only if $\operatorname{res} (\mathcal{C})$ is a  self-orthogonal code over $\mathbb{F}_{5}$.
\end{theorem}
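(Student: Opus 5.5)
The plan is to reduce the inner product of any two codewords to a single term involving only their residues. This uses the $\mathtt{b}$-adic decomposition from Theorem \ref{thm: c=au+bv for any c in C} and the multiplication rules $\mathtt{a}^2=\mathtt{b}$ and $\mathtt{a}\mathtt{b}=\mathtt{b}^2=0$.

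Let $\mathbf{c}=\mathtt{a}\mathbf{u}+\mathtt{b}\mathbf{v}$ and $\mathbf{c}'=\mathtt{a}\mathbf{u}'+\mathtt{b}\mathbf{v}'$ be arbitrary codewords of $\mathcal{C}$. By Theorem \ref{thm: c=au+bv for any c in C}, $\mathbf{u},\mathbf{u}'\in\operatorname{res}(\mathcal{C})$ and $\mathbf{v},\mathbf{v}'\in\mathbb{F}_5^n$. Expanding coordinatewise gives
\begin{align*}
(\mathbf{c},\mathbf{c}')=\sum_{i=1}^n(\mathtt{a}u_i+\mathtt{b}v_i)(\mathtt{a}u_i'+\mathtt{b}v_i')=\mathtt{b}\sum_{i=1}^n u_iu_i' = \mathtt{b}\,(\mathbf{u},\mathbf{u}'),
\end{align*}
since every cross term contains a factor $\mathtt{a}\mathtt{b}$ or $\mathtt{b}^2$, both of which vanish. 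Here we use that the $\mathbb{F}_5$-action is distributive and that $(x\mathtt{a})(y\mathtt{a})=xy\,\mathtt{b}$ for $x,y\in\mathbb{F}_5$, which follows from bilinearity of ring multiplication over the integers. This is the same computation already carried out in the proof of Theorem \ref{thm: residue and torsion of dual code Cperp}.

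The key observation is that $x\mathtt{b}=0$ in $I_5$ if and only if $x=0$ in $\mathbb{F}_5$, because $\mathtt{b}$ has additive order $5$. Hence $(\mathbf{c},\mathbf{c}')=0$ exactly when $(\mathbf{u},\mathbf{u}')=0$ over $\mathbb{F}_5$.

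Both implications now follow.
\begin{itemize}
\item[($\Leftarrow$)] If $\operatorname{res}(\mathcal{C})$ is self-orthogonal, then $(\mathbf{u},\mathbf{u}')=0$ for all residues, so every inner product $(\mathbf{c},\mathbf{c}')$ vanishes and $\mathcal{C}$ is self-orthogonal.
\item[($\Rightarrow$)] Suppose $\mathcal{C}$ is self-orthogonal and take any $\mathbf{u},\mathbf{u}'\in\operatorname{res}(\mathcal{C})$. By part \ref{itm: au+bv in C} of Theorem \ref{thm: c=au+bv for any c in C}, there are codewords $\mathtt{a}\mathbf{u}+\mathtt{b}\mathbf{v}$ and $\mathtt{a}\mathbf{u}'+\mathtt{b}\mathbf{v}'$ in $\mathcal{C}$. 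Their inner product is $0$, so $\mathtt{b}(\mathbf{u},\mathbf{u}')=0$, which forces $(\mathbf{u},\mathbf{u}')=0$.
\end{itemize}

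Alternatively, one could argue via Theorem \ref{thm: dual of a code over I5}: $\mathcal{C}\subseteq\mathcal{C}^\perp=\mathtt{a}\operatorname{res}(\mathcal{C})^\perp\oplus\mathtt{b}\mathbb{F}_5^n$ holds if and only if $\operatorname{res}(\mathcal{C})\subseteq\operatorname{res}(\mathcal{C})^\perp$. The direct computation is cleaner, though. I expect no real obstacle. The only point needing care is justifying that the scalar $(\mathbf{u},\mathbf{u}')\in\mathbb{F}_5$ pulls out correctly, so that $\mathtt{b}x=0$ implies $x=0$. This follows from $I_5$ being an $\mathbb{F}_5$-vector space with basis $\{\mathtt{a},\mathtt{b}\}$.
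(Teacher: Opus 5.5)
Your proof is correct and follows the same route as the paper. Both write codewords $\mathtt{b}$-adically via Theorem \ref{thm: c=au+bv for any c in C} and reduce the inner product to $\mathtt{b}(\mathbf{u},\mathbf{u}')$. You also make explicit two steps the paper leaves to ``the relation between $\mathcal{C}$ and $\operatorname{res}(\mathcal{C})$'': that $\mathtt{b}x=0$ forces $x=0$ because $\mathtt{b}$ has additive order $5$, and that part \ref{itm: au+bv in C} of that theorem supplies the lifts needed for the forward direction.
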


\begin{proof}
We make use of the proof of Theorem \ref{thm: c=au+bv for any c in C} to give a similar proof. It can be shown that every codeword $\mathbf{i} \in \mathcal{C}$ can be written as $\mathbf{i} = \mathtt{a} \mathbf{u} + \mathtt{b} \mathbf{v}$ for some $\mathbf{u} \in \operatorname{res}(\mathcal{C})$ and $\mathbf{v} \in \mathbb{F}_{5}^{n}$. If $\mathbf{u} \in \operatorname{res}(\mathcal{C})$, then $\mathtt{a} \mathbf{u} + \mathtt{b} \mathbf{v} \in \mathcal{C}$ for some $\mathbf{v} \in \mathbb{F}_{5}^{n}$. If $\mathbf{i}_{1}, \mathbf{i}_{2} \in \mathcal{C}$, then we can write $\mathbf{i}_{1}$ and $\mathbf{i}_{2}$ in $\mathtt{b}$-adic decomposition form as $\mathbf{i}_{1} = \mathtt{a} \mathbf{x}_{1} + \mathtt{b} \mathbf{y}_{1}$ and $\mathbf{i}_{2} = \mathtt{a} \mathbf{x}_{2} + \mathtt{b} \mathbf{y}_{2}$ where $\mathbf{x}_{1}, \mathbf{x}_{2} \in \operatorname{res} (\mathcal{C})$ and $\mathbf{y}_{1}, \mathbf{y}_{2} \in \mathbb{F}_{5}^{n}$. Then we have
$$
\left( \mathtt{a} \mathbf{x}_{1} + \mathtt{b} \mathbf{y}_{1}  ,  \mathtt{a} \mathbf{x}_{2} +  \mathtt{b} \mathbf{y}_{2}\right) = \mathtt{a}^{2}\left(\mathbf{x}_{1}, \mathbf{x}_{2}\right)+  \mathtt{ab}  \left(\mathbf{x}_{1}, \mathbf{y}_{2}\right) +  \mathtt{ab} \left(\mathbf{x}_{2}, \mathbf{y}_{1}\right) +  \mathtt{b}^{2}\left(\mathbf{y}_{1}, \mathbf{y}_{2}\right) =  \mathtt{b}  \left(\mathbf{x}_{1}, \mathbf{x}_{2}\right) .
$$
By the relation between $\mathcal{C}$ and $\operatorname{res}(\mathcal{C})$, the result follows.
\end{proof}

Now, we do some building-up constructions of self-orthogonal codes. The following theorem gives a propagation rule of order $2$ for self-orthogonal codes.

\begin{theorem}   \label{thm: SO code of length n+2 (1 generator)}
Let $\mathcal{C}_0$ be a  self-orthogonal code over $I_5$ of length $n$ with generator matrix $G_0 = (\mathbf{r}_i)$ where $(\mathbf{r}_i)$ is the $i$th row of $G_0$ for $i =  1 , 2 , \dotsc , m$. Let $\mathbf{x} \in \mathbb{F}_5^n$ and let $\mu , \nu \in J = \{  0 , \mathtt{b} , \mathtt{d} , \mathtt{i} , \mathtt{n}  \}$, not both zero, such that $2 \mu + \nu = 0$. Then the code $\mathcal{C}$ with generator matrix
\begin{align*}
{G}  =  \left(    \begin{array}{cc|c}
0  &   \mu   &   \nu  \mathbf{x}   \\  \hline
(\mathbf{x} , \mathbf{r}_1)   &   2  (\mathbf{x} , \mathbf{r}_1)     &   \mathbf{r}_1    \\
\vdots   &   \vdots   &   \vdots   \\
(\mathbf{x} , \mathbf{r}_m)   &   2  (\mathbf{x} , \mathbf{r}_m)     &   \mathbf{r}_m
\end{array}    \right)
\end{align*}
is a self-orthogonal code of length $n + 2$.
\end{theorem}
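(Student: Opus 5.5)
The plan is to reduce self-orthogonality of $\mathcal{C}$ to the vanishing of the inner products between pairs of rows of $G$, each row paired with itself included. The inner product on $I_5^{n+2}$ is symmetric and bilinear over the commutative ring $I_5$. Every codeword of $\mathcal{C}$ is an $I_5$-combination of rows of $G$ together with integer ($\mathbb{F}_5$-action) multiples of rows. So if $(\mathbf{g},\mathbf{g}')=0$ for all rows $\mathbf{g},\mathbf{g}'$ of $G$, then $(s\mathbf{g}, t\mathbf{g}') = st(\mathbf{g},\mathbf{g}') = 0$, and the same holds for integer multiples. Bilinearity then gives $(\mathbf{c},\mathbf{c}')=0$ for all $\mathbf{c},\mathbf{c}'\in\mathcal{C}$.

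One preliminary fact is used throughout: for $\mathbf{x}\in\mathbb{F}_5^n$ and $\mathbf{r}\in I_5^n$, the quantity $(\mathbf{x},\mathbf{r})$ is an element of $I_5$ via the $\mathbb{F}_5$-action. For $s\in I_5$ we have $s(\mathbf{x},\mathbf{r}) = \sum_j x_j (s r_j)$ and $(\nu\mathbf{x},\mathbf{r}) = \nu(\mathbf{x},\mathbf{r})$, by distributivity of the action and commutativity. We also use $J\cdot I_5 = 0$, since $\mathtt{b}$ annihilates everything (Table~\ref{mult table}, or $\mathtt{ab}=\mathtt{b}^2=0$).

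The pairings are checked in three groups.
\begin{enumerate}
\item The first row with itself gives $0\cdot 0+\mu^2+\nu^2(\mathbf{x},\mathbf{x})$. This is $0$ because $\mu,\nu\in J$ and $J^2=0$.
\item The first row with the row indexed by $i$ gives $0\cdot(\mathbf{x},\mathbf{r}_i) + 2\mu(\mathbf{x},\mathbf{r}_i) + \nu(\mathbf{x},\mathbf{r}_i) = (2\mu+\nu)(\mathbf{x},\mathbf{r}_i)$. This vanishes by the hypothesis $2\mu+\nu=0$, and in fact already because $\mu,\nu\in J$ annihilate $I_5$.
\item Rows $i$ and $j$ (possibly $i=j$) give
\[
(\mathbf{x},\mathbf{r}_i)(\mathbf{x},\mathbf{r}_j)+4(\mathbf{x},\mathbf{r}_i)(\mathbf{x},\mathbf{r}_j)+(\mathbf{r}_i,\mathbf{r}_j)
=5(\mathbf{x},\mathbf{r}_i)(\mathbf{x},\mathbf{r}_j)+(\mathbf{r}_i,\mathbf{r}_j).
\]
The first term is $0$ since $I_5$ has characteristic $5$. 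The second is $0$ because $\mathcal{C}_0$ is self-orthogonal.
\end{enumerate}
Combining the three groups with the bilinearity reduction shows that $\mathcal{C}$ is self-orthogonal of length $n+2$.

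An alternative route is Theorem~\ref{thm: SO iff resC SO}: check that $\operatorname{res}(\mathcal{C})$ is self-orthogonal. Its generators are $(0,0,\mathbf{0})$ together with $(\pi(\mathbf{x},\mathbf{r}_i),2\pi(\mathbf{x},\mathbf{r}_i),\pi(\mathbf{r}_i))$, and the same $1+4=5$ cancellation applies. The direct computation above is cleaner.

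The only delicate point, which I expect to be the main (mild) obstacle, is justifying the manipulations with the mixed product between $\mathbb{F}_5$-vectors and $I_5$-vectors. Specifically, $(\mathbf{x},\mathbf{r}_i)(\mathbf{x},\mathbf{r}_j)$ must be a genuine ring product in $I_5$, and scalars must factor out of $(\nu\mathbf{x},\mathbf{r}_i)$. Both follow from the distributivity of the $\mathbb{F}_5$-action stated in the preliminaries and from commutativity of $I_5$.
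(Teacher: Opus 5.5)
Your proposal is correct and follows the paper's proof. Both check that the rows of $G$ are orthogonal to themselves and to one another. For the first row with itself the argument uses $J^2=0$. The first row against the others gives $(2\mu+\nu)(\mathbf{x},\mathbf{r}_i)=0$. The remaining pairs vanish by the cancellation $1+4=5=0$ together with the self-orthogonality of $\mathcal{C}_0$.

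Your two additions are also correct. The reduction to rows must account for $\mathbb{F}_5$-multiples, since $I_5$ has no unity. The hypothesis $2\mu+\nu=0$ is redundant because $J$ annihilates $I_5$, which is what the paper's phrase ``since $\mu,\nu\in J$'' is really using.
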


\begin{proof}
It suffices to show that the rows of $G$ are orthogonal to themselves and to one another. Let  $\mathbf{y}_{0} =  ( \begin{array}{ccc}
0  &   \mu   &   \nu  \mathbf{x}  \end{array}  )$ be the first row of $G$, and let $\mathbf{y}_{i} =  ( \begin{array}{ccc}
(\mathbf{x} , \mathbf{r}_i)   &   2  (\mathbf{x} , \mathbf{r}_i)     &   \mathbf{r}_i   \end{array}  )$ be the $(i+1)$th row of $G$ for $i = 1 , 2 , \dotsc , m$. Then
\begin{align*}
(\mathbf{y}_0 , \mathbf{y}_0 )  &=  \mu^2 + \nu^2   (  \mathbf{x} , \mathbf{x} )  =  0  ,  \\
(\mathbf{y}_i , \mathbf{y}_j )  &=    ( \mathbf{x}  ,  \mathbf{r}_i  )   ( \mathbf{x}  ,  \mathbf{r}_j  ) +  4  ( \mathbf{x}  ,  \mathbf{r}_i  )   ( \mathbf{x}  ,  \mathbf{r}_j  )  +  ( \mathbf{r}_i  ,  \mathbf{r}_j  )  =   0 , \text{ and}   \\
(\mathbf{y}_0 , \mathbf{y}_i )  &=  2 \mu  ( \mathbf{x}  ,  \mathbf{r}_i  )   +  \nu  ( \mathbf{x}  ,  \mathbf{r}_i  )   =  ( 2 \mu + \nu  )( \mathbf{x}  ,  \mathbf{r}_i  )   =   0
\end{align*}
since $\mu, \nu \in J$. Thus, $\mathcal{C}$ is a self-orthogonal code of length $n + 2$.
\end{proof}

The next theorem gives a propagation rule of order $4$ for self-orthogonal codes.

\begin{theorem}  \label{thm: SO code of length n+4 (1 generator)}
Let $\mathcal{C}_0$ be a self-orthogonal code over $I_5$ of length $n$ with generator matrix $G_0 = (\mathbf{r}_i)$ where $(\mathbf{r}_i)$ is the $i$th row of $G_0$ for $i =  1 , 2 , \dotsc , m$. Let $\mathbf{x} \in \mathbb{F}_5^n$ and $\alpha , \beta , \gamma, \delta  \in I_5$, not all zero, such that $\alpha + \beta + 2 \gamma + \delta  = 0$. Then the code $\mathcal{C}$ with generator matrix
			\begin{align*}
			G = \left(
			\begin{array}{cccc|c}
			\alpha             & \beta      & \gamma   &          0                  & \delta   \mathbf{x} \\   \hline
			(\mathbf{x} , \mathbf{r}_1) & (\mathbf{x} , \mathbf{r}_1) &  2 (\mathbf{x} , \mathbf{r}_1)  & 2 (\mathbf{x} , \mathbf{r}_1)    & \mathbf{r}_1      \\
			\vdots            & \vdots            & \vdots                  & \vdots    & \vdots     \\
			(\mathbf{x} , \mathbf{r}_m) & (\mathbf{x} , \mathbf{r}_m) &  2 (\mathbf{x} , \mathbf{r}_m)   &  2(\mathbf{x} , \mathbf{r}_m)   & \mathbf{r}_m
			\end{array}
			\right)
			\end{align*}
			is a self-orthogonal code of length $n + 4$  if one of the following conditions is satisfied:
			\begin{enumerate}[(i)]
			\item  $(\mathbf{x} , \mathbf{x}) = 1$ and $\alpha^2 + \beta^2 + \gamma^2  + \delta^2  = 0$,
			\item  $(\mathbf{x} , \mathbf{x}) = 2$ and $\alpha^2 + \beta^2 + \gamma^2  + 2 \delta^2  = 0$,
			\item  $(\mathbf{x} , \mathbf{x}) = 3$ and $\alpha^2 + \beta^2 + \gamma^2  + 3 \delta^2  = 0$, or
			\item  $(\mathbf{x} , \mathbf{x}) = 4$ and $\alpha^2 + \beta^2 + \gamma^2  + 4  \delta^2 = 0$.
			\end{enumerate}
\end{theorem}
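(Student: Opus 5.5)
Following the proof of Theorem \ref{thm: SO code of length n+2 (1 generator)}, we check that the rows of $G$ are pairwise orthogonal and each orthogonal to itself. The code they generate is then self-orthogonal by bilinearity and commutativity of the inner product over $I_5$. Let $\mathbf{y}_0 = (\alpha, \beta, \gamma, 0, \delta\mathbf{x})$ and $\mathbf{y}_i = ((\mathbf{x},\mathbf{r}_i), (\mathbf{x},\mathbf{r}_i), 2(\mathbf{x},\mathbf{r}_i), 2(\mathbf{x},\mathbf{r}_i), \mathbf{r}_i)$ for $1 \le i \le m$. Throughout we use the action of $\mathbb{F}_5$ on $I_5$ and the identity $(\delta\mathbf{x}, \mathbf{r}_i) = \delta(\mathbf{x},\mathbf{r}_i)$, which holds because $\mathbf{x} \in \mathbb{F}_5^n$ and $I_5$ is commutative.

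First we handle the rows coming from $\mathcal{C}_0$. We have
\[
(\mathbf{y}_i,\mathbf{y}_j) = (1+1+4+4)(\mathbf{x},\mathbf{r}_i)(\mathbf{x},\mathbf{r}_j) + (\mathbf{r}_i,\mathbf{r}_j).
\]
The coefficient $10$ vanishes in characteristic $5$, and $(\mathbf{r}_i,\mathbf{r}_j)=0$ because $\mathcal{C}_0$ is self-orthogonal, so $(\mathbf{y}_i,\mathbf{y}_j)=0$. Next, for the mixed products,
\[
(\mathbf{y}_0,\mathbf{y}_i) = \alpha(\mathbf{x},\mathbf{r}_i) + \beta(\mathbf{x},\mathbf{r}_i) + 2\gamma(\mathbf{x},\mathbf{r}_i) + 0 + \delta(\mathbf{x},\mathbf{r}_i) = (\alpha+\beta+2\gamma+\delta)(\mathbf{x},\mathbf{r}_i),
\]
which is $0$ by the hypothesis $\alpha+\beta+2\gamma+\delta=0$. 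This step needs distributivity of the product over sums of ring elements, together with the fact that a product of ring elements can be regrouped with the $\mathbb{F}_5$-scalars.

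Finally, the self-product of the first row is $(\mathbf{y}_0,\mathbf{y}_0) = \alpha^2+\beta^2+\gamma^2 + \delta^2(\mathbf{x},\mathbf{x})$. Writing $(\mathbf{x},\mathbf{x}) = k \in \{1,2,3,4\}$, this equals $\alpha^2+\beta^2+\gamma^2+k\delta^2$, and conditions (i)--(iv) are exactly the vanishing of this expression in the four cases.

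The main point needing care is the bookkeeping of the mixed ring/field scalars. Specifically, one must check that $(\delta\mathbf{x})\cdot\mathbf{r}_i$ and $(\delta\mathbf{x})\cdot(\delta\mathbf{x})$ factor as claimed, and that $2\gamma\cdot(\mathbf{x},\mathbf{r}_i)$ is correctly read using the $\mathbb{F}_5$-action. Once these identities are justified from the definitions, the three computations above finish the proof. Note that the case $(\mathbf{x},\mathbf{x})=0$ is not covered by (i)--(iv), so we never need it.
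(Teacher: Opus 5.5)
Your proposal is correct and is essentially the paper's proof: you compute $(\mathbf{y}_i,\mathbf{y}_j)=10(\mathbf{x},\mathbf{r}_i)(\mathbf{x},\mathbf{r}_j)+(\mathbf{r}_i,\mathbf{r}_j)=0$, $(\mathbf{y}_0,\mathbf{y}_i)=(\alpha+\beta+2\gamma+\delta)(\mathbf{x},\mathbf{r}_i)=0$, and $(\mathbf{y}_0,\mathbf{y}_0)=\alpha^2+\beta^2+\gamma^2+\delta^2(\mathbf{x},\mathbf{x})$, which vanishes under each of (i)--(iv). The only differences are presentational. You treat the four cases uniformly via $k=(\mathbf{x},\mathbf{x})$ instead of repeating the computation, and you state explicitly the bilinearity step that reduces self-orthogonality of $\mathcal{C}$ to orthogonality of the generating rows.
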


\begin{proof}
Let
\begin{align*}
\mathbf{y}_{0}= ( \begin{array}{ccccc}
\alpha       & \beta      & \gamma   &          0                  & \delta  \mathbf{x}
\end{array}  )
\end{align*}
be the first row of  $G$ and let
\begin{align*}
\mathbf{y}_i = ( \begin{array}{ccccc}
(\mathbf{x} , \mathbf{r}_i) & (\mathbf{x} , \mathbf{r}_i) & 2 (\mathbf{x} , \mathbf{r}_i)  & 2 (\mathbf{x} , \mathbf{r}_i)     & \mathbf{r}_i
\end{array}  )
\end{align*}
be the $(i+1)$th row of $G$ for $i = 1 , 2 , \dotsc , m$.
\begin{enumerate}[(i)]
\item  If $(\mathbf{x} , \mathbf{x}) = 1$ and $\alpha^2 + \beta^2 + \gamma^2  + \delta^2 = 0$, then
\begin{align*}
(\mathbf{y}_0 , \mathbf{y}_0 )  &=  \alpha^2 + \beta^2 + \gamma^2 + \delta^2  ( \mathbf{x} , \mathbf{x}  )  = \alpha^2 + \beta^2 + \gamma^2 + \delta^2   =  0  ,  \\
(\mathbf{y}_0 , \mathbf{y}_i )  &=  \alpha  ( \mathbf{x}  ,  \mathbf{r}_i  )  +  \beta ( \mathbf{x}  ,  \mathbf{r}_i  )  +  2 \gamma  ( \mathbf{x}  ,  \mathbf{r}_i  )  +  \delta  ( \mathbf{x}  ,  \mathbf{r}_i  )    =  (  \alpha + \beta + 2 \gamma + \delta  )  ( \mathbf{x}  ,  \mathbf{r}_i  )  =  0  ,  \text{ and}   \\
(\mathbf{y}_i , \mathbf{y}_j )  &=  10  ( \mathbf{x}  ,  \mathbf{r}_i  )   ( \mathbf{x}  ,  \mathbf{r}_j  )  +  ( \mathbf{r}_i  ,  \mathbf{r}_j  )  =   0
\end{align*}
for all $i , j = 1 , 2 , \dotsc , m$.

\item  If $(\mathbf{x} , \mathbf{x}) = 2$ and $\alpha^2 + \beta^2 + \gamma^2  + 2 \delta^2  = 0$, then
\begin{align*}
(\mathbf{y}_0 , \mathbf{y}_0 )  &=  \alpha^2 + \beta^2 + \gamma^2 + \delta^2   ( \mathbf{x} , \mathbf{x}  )  = \alpha^2 + \beta^2 + \gamma^2 + 2 \delta^2  =  0  , \\
(\mathbf{y}_0 , \mathbf{y}_i )  &=  \alpha  ( \mathbf{x}  ,  \mathbf{r}_i  )  +  \beta ( \mathbf{x}  ,  \mathbf{r}_i  )  + 2 \gamma  ( \mathbf{x}  ,  \mathbf{r}_i  )  +  \delta  ( \mathbf{x}  ,  \mathbf{r}_i  )
=  (  \alpha + \beta + 2 \gamma + \delta   )  ( \mathbf{x}  ,  \mathbf{r}_i  )  =  0  ,  \text{ and}  \\
(\mathbf{y}_i , \mathbf{y}_j )  &=  10  ( \mathbf{x}  ,  \mathbf{r}_i  )  ( \mathbf{x}  ,  \mathbf{r}_j  )   +  ( \mathbf{r}_i  ,  \mathbf{r}_j  )  =   0
\end{align*}
for all $i , j = 1 , 2 , \dotsc , m$.

\item  If $(\mathbf{x} , \mathbf{x}) = 3$ and $\alpha^2 + \beta^2 + \gamma^2  + 3 \delta^2  = 0$, then
\begin{align*}
(\mathbf{y}_0 , \mathbf{y}_0 )  &=  \alpha^2 + \beta^2 + \gamma^2 + \delta^2   ( \mathbf{x} , \mathbf{x}  )  = \alpha^2 + \beta^2 + \gamma^2 + 3 \delta^2  =  0  ,  \\
(\mathbf{y}_0 , \mathbf{y}_i )  &=  \alpha  ( \mathbf{x}  ,  \mathbf{r}_i  )  +  \beta ( \mathbf{x}  ,  \mathbf{r}_i  )  +  2 \gamma  ( \mathbf{x}  ,  \mathbf{r}_i  )  +  \delta  ( \mathbf{x}  ,  \mathbf{r}_i  )    =  (  \alpha + \beta + 2 \gamma + \delta  )  ( \mathbf{x}  ,  \mathbf{r}_i  )  =  0   ,  \text{ and}   \\
(\mathbf{y}_i , \mathbf{y}_j )  &=  10  ( \mathbf{x}  ,  \mathbf{r}_i  )   ( \mathbf{x}  ,  \mathbf{r}_j  )   +  ( \mathbf{r}_i  ,  \mathbf{r}_j  )  =   0
\end{align*}
for all $i , j = 1 , 2 , \dotsc , m$.

\item  If $(\mathbf{x} , \mathbf{x}) = 4$ and $\alpha^2 + \beta^2 + \gamma^2  + 4 \delta^2 = 0$, then
\begin{align*}
(\mathbf{y}_0 , \mathbf{y}_0 )  &=  \alpha^2 + \beta^2 + \gamma^2 + \delta^2   ( \mathbf{x} , \mathbf{x}  )  = \alpha^2 + \beta^2 + \gamma^2 + 4 \delta^2  =  0  ,  \\
(\mathbf{y}_0 , \mathbf{y}_i )  &=  \alpha  ( \mathbf{x}  ,  \mathbf{r}_i  )  +  \beta ( \mathbf{x}  ,  \mathbf{r}_i  )  +  2 \gamma  ( \mathbf{x}  ,  \mathbf{r}_i  )  +  \delta  ( \mathbf{x}  ,  \mathbf{r}_i  )    =  (  \alpha + \beta + 2 \gamma + \delta  )  ( \mathbf{x}  ,  \mathbf{r}_i  )  =  0  ,   \text{ and}    \\
(\mathbf{y}_i , \mathbf{y}_j )  &=  10  ( \mathbf{x}  ,  \mathbf{r}_i  )  ( \mathbf{x}  ,  \mathbf{r}_j  )  +  ( \mathbf{r}_i  ,  \mathbf{r}_j  )  =   0
\end{align*}
for all $i , j = 1 , 2 , \dotsc , m$.
\end{enumerate}
In all cases, we see that $\mathcal{C}$ is a self-orthogonal code of length $n + 4$.
\end{proof}

\begin{theorem}   \label{thm: SO code of length n+5 (2 generators)}
Let $\mathcal{C}_{0}$ be a self-orthogonal code over $I_{5}$ of length $n$ with generator matrix $G_{0} = \left(\mathbf{r}_{i}\right)$ where $\mathbf{r}_{i}$ is the ith row of $G_{0}$ for $i = 1, 2 , \dotsc, m$. Let $\mathbf{x} \in \mathbb{F}_{5}^{n}$ and $\alpha, \beta, \gamma, \delta, \varepsilon \in I_{5}$, not all zero, such that $\alpha + \beta + \gamma + \delta + \varepsilon = 0$. Let $\iota, \kappa , \lambda , \mu , \nu  \in J = \{0, \mathtt{b}, \mathtt{d} , \mathtt{i} , \mathtt{n} \}$, not all zero, such that $\iota+ \kappa + \lambda + \mu + \nu =0$. Then the code $\mathcal{C}$ with generator matrix
\begin{align*}
G = \left(
\begin{array}{ccccc|c}
\alpha             & \beta      & \gamma   &       \delta    &   0                  & \varepsilon  \mathbf{x} \\
0             &  \iota      &  \kappa   &       \lambda    &   \mu                  & \nu    \mathbf{x} \\   \hline
(\mathbf{x} , \mathbf{r}_1) & (\mathbf{x} , \mathbf{r}_1) & (\mathbf{x} , \mathbf{r}_1)  & (\mathbf{x} , \mathbf{r}_1)  & (\mathbf{x} , \mathbf{r}_1)    & \mathbf{r}_1      \\
\vdots            & \vdots            & \vdots            & \vdots        & \vdots    & \vdots     \\
(\mathbf{x} , \mathbf{r}_m) & (\mathbf{x} , \mathbf{r}_m) & (\mathbf{x} , \mathbf{r}_m)   & (\mathbf{x} , \mathbf{r}_m)   & (\mathbf{x} , \mathbf{r}_m)   & \mathbf{r}_m
\end{array}
\right)
\end{align*}
is a self-orthogonal code of length $n + 5$ if one of the following conditions is satisfied:
	\begin{enumerate}[(i)]
	\item  $(\mathbf{x} , \mathbf{x}) = 1$ and $\alpha^2 + \beta^2 + \gamma^2  + \delta^2 + \varepsilon^2 = 0$,
	\item  $(\mathbf{x} , \mathbf{x}) = 2$ and $\alpha^2 + \beta^2 + \gamma^2  + \delta^2 + 2 \varepsilon^2 = 0$,
	\item  $(\mathbf{x} , \mathbf{x}) = 3$ and $\alpha^2 + \beta^2 + \gamma^2  + \delta^2 + 3 \varepsilon^2 = 0$, or
	\item  $(\mathbf{x} , \mathbf{x}) = 4$ and $\alpha^2 + \beta^2 + \gamma^2  + \delta^2 + 4 \varepsilon^2 = 0$.
	\end{enumerate}
\end{theorem}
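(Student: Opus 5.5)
The plan is to argue exactly as in the proofs of Theorems \ref{thm: SO code of length n+2 (1 generator)} and \ref{thm: SO code of length n+4 (1 generator)}. Since $\mathcal{C}$ is the $I_5$-span of the rows of $G$ and the inner product is bilinear, it suffices to show that every pair of rows of $G$ (including each row with itself) has inner product $0$. I will name the rows
\[
\mathbf{y}_0=(\alpha,\beta,\gamma,\delta,0,\varepsilon\mathbf{x}),\qquad
\mathbf{z}_0=(0,\iota,\kappa,\lambda,\mu,\nu\mathbf{x}),
\]
and $\mathbf{y}_i=(s_i,s_i,s_i,s_i,s_i,\mathbf{r}_i)$ with $s_i=(\mathbf{x},\mathbf{r}_i)\in I_5$ for $i=1,\dots,m$.

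Two elementary facts will be used throughout. The first is that $J$ annihilates $I_5$: since $\mathtt{ab}=\mathtt{b}^2=0$, we have $\mathtt{b}\,r=0$ for all $r\in I_5$, and Table \ref{mult table} confirms this. The second is that the $\mathbb{F}_5$-action on $I_5$ is distributive and compatible with multiplication. Consequently, for any scalar $\theta\in I_5$,
\[
(\theta\mathbf{x},\mathbf{r}_i)=\theta(\mathbf{x},\mathbf{r}_i)=\theta s_i
\qquad\text{and}\qquad
(\theta\mathbf{x},\theta'\mathbf{x})=\theta\theta'(\mathbf{x},\mathbf{x}),
\]
where $(\mathbf{x},\mathbf{x})\in\mathbb{F}_5$ acts on $I_5$.

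I will then carry out the six computations.
\begin{enumerate}[(a)]
\item $(\mathbf{y}_0,\mathbf{y}_0)=\alpha^2+\beta^2+\gamma^2+\delta^2+\varepsilon^2(\mathbf{x},\mathbf{x})$. In each of the cases (i)--(iv), this vanishes by the stated quadratic condition.
\item $(\mathbf{z}_0,\mathbf{z}_0)=\iota^2+\kappa^2+\lambda^2+\mu^2+\nu^2(\mathbf{x},\mathbf{x})=0$, because every term is a product of elements of $J$.
\item $(\mathbf{y}_0,\mathbf{z}_0)=\beta\iota+\gamma\kappa+\delta\lambda+\varepsilon\nu(\mathbf{x},\mathbf{x})=0$, because each term contains a factor from $J$.
\item $(\mathbf{y}_0,\mathbf{y}_i)=(\alpha+\beta+\gamma+\delta+\varepsilon)s_i=0$, using the linear condition on $\alpha,\dots,\varepsilon$.
\item $(\mathbf{z}_0,\mathbf{y}_i)=(\iota+\kappa+\lambda+\mu+\nu)s_i=0$, using the linear condition on $\iota,\dots,\nu$.
\item $(\mathbf{y}_i,\mathbf{y}_j)=5s_is_j+(\mathbf{r}_i,\mathbf{r}_j)=0$, since $I_5$ has characteristic $5$ and $\mathcal{C}_0$ is self-orthogonal.
\end{enumerate}
Commutativity of $I_5$ handles the symmetric pairs.

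There is no genuine obstacle here. The only point requiring care is bookkeeping: keeping $\mathbb{F}_5$-scalars such as $(\mathbf{x},\mathbf{x})$ distinct from ring products, and justifying that the factor $(\mathbf{x},\mathbf{r}_i)$ pulls out, which follows from the distributivity of the action. I also note that the hypothesis $\iota,\dots,\nu\in J$ is what makes (b) and (c) automatic; this is why no quadratic condition is imposed on the second row.
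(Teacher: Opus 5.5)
Your proposal is correct and follows the paper's proof essentially verbatim. The paper names the same three kinds of rows and checks the same six inner products, using $\alpha+\beta+\gamma+\delta+\varepsilon=0$, $\iota+\kappa+\lambda+\mu+\nu=0$, characteristic $5$, and the vanishing of products with elements of $J$. Your explicit remarks that pairwise orthogonality of the generators suffices by bilinearity, and that $(\mathbf{x},\mathbf{r}_i)$ factors out by distributivity of the $\mathbb{F}_5$-action, make explicit what the paper leaves implicit.
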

\begin{proof}
Let $\mathbf{y}_{0}  = \left(\begin{array}{cccccc}  \alpha       & \beta      & \gamma   &       \delta    &   0                  & \varepsilon  \mathbf{x}    \end{array}  \right)$, $\mathbf{y}_{0}^{\prime} = \left(\begin{array}{cccccc}  0 & \iota & \kappa & \lambda & \mu  &  \nu  \mathbf{x}  \end{array}  \right)$, and
\begin{align*}
\mathbf{y}_i = ( \begin{array}{cccccc}
(\mathbf{x} , \mathbf{r}_i) & (\mathbf{x} , \mathbf{r}_i) & (\mathbf{x} , \mathbf{r}_i)  & (\mathbf{x} , \mathbf{r}_i)  & (\mathbf{x} , \mathbf{r}_i)    & \mathbf{r}_i
\end{array}  )
\end{align*}
for $i = 1,2, \ldots, m$.

\begin{enumerate}[(i)]
\item  If $(\mathbf{x} , \mathbf{x}) = 1$ and $\alpha^2 + \beta^2 + \gamma^2  + \delta^2 + \varepsilon^2 = 0$, then
\begin{align*}
( \mathbf{y}_0 , \mathbf{y}_0 )  &=  \alpha^2 + \beta^2 + \gamma^2 + \delta^2 + \varepsilon^2  (  \mathbf{x} , \mathbf{x} )  =   \alpha^2 + \beta^2 + \gamma^2 + \delta^2 + \varepsilon^2  =  0,  \\
( \mathbf{y}_0 , \mathbf{y}_i )  &= \alpha  ( \mathbf{x}  ,  \mathbf{r}_i  )   +  \beta  ( \mathbf{x}  ,  \mathbf{r}_i  )  +  \gamma  ( \mathbf{x}  ,  \mathbf{r}_i  )  +  \delta  ( \mathbf{x}  ,  \mathbf{r}_i  )  +  \varepsilon  ( \mathbf{x}  ,  \mathbf{r}_i  ) = ( \alpha + \beta +  \gamma  +  \delta   +  \varepsilon )  (  \mathbf{x} , \mathbf{r}_i )   =  0,    \\
( \mathbf{y}_i , \mathbf{y}_j )  &=   5   ( \mathbf{x}, \mathbf{r}_i )  ( \mathbf{x}, \mathbf{r}_j )  +  ( \mathbf{r}_i, \mathbf{r}_j )  =  0 ,  \\
(\mathbf{y}_0^{\prime} , \mathbf{y}_0^{\prime} )  &=  \iota^2 + \kappa^2 + \lambda^2 + \mu^2 + \nu^2  (  \mathbf{x} , \mathbf{x} )  =  0  ,  \\
(\mathbf{y}_0 , \mathbf{y}_0^{\prime} )  &=   \beta \iota  +  \gamma  \kappa  +  \delta  \lambda  +  \varepsilon  \nu  (  \mathbf{x} , \mathbf{x} )  =  0  \text{ since $\beta \iota =  \gamma  \kappa  =  \delta  \lambda  =  \varepsilon  \nu=0$,  and}   \\
(\mathbf{y}_0^{\prime} , \mathbf{y}_i )  &=  \iota  ( \mathbf{x}  ,  \mathbf{r}_i  )   +  \kappa  ( \mathbf{x}  ,  \mathbf{r}_i  )  +  \lambda  ( \mathbf{x}  ,  \mathbf{r}_i  )  +  \mu  ( \mathbf{x}  ,  \mathbf{r}_i  )  +  \nu  ( \mathbf{x}  ,  \mathbf{r}_i  )  =  ( \iota + \kappa + \lambda + \mu + \nu )( \mathbf{x}  ,  \mathbf{r}_i  )   =   0
\end{align*}
for all $i , j = 1 , 2 , \dotsc , m$.

\item  If $(\mathbf{x} , \mathbf{x}) = 2$ and $\alpha^2 + \beta^2 + \gamma^2  + \delta^2 + 2 \varepsilon^2 = 0$, then
\begin{align*}
( \mathbf{y}_0 , \mathbf{y}_0 )  &=  \alpha^2 + \beta^2 + \gamma^2 + \delta^2 + \varepsilon^2  (  \mathbf{x} , \mathbf{x} )  =   \alpha^2 + \beta^2 + \gamma^2 + \delta^2 + 2 \varepsilon^2  =  0,  \\
( \mathbf{y}_0 , \mathbf{y}_i )  &=  \alpha  ( \mathbf{x}  ,  \mathbf{r}_i  )   +  \beta  ( \mathbf{x}  ,  \mathbf{r}_i  )  +  \gamma  ( \mathbf{x}  ,  \mathbf{r}_i  )  +  \delta  ( \mathbf{x}  ,  \mathbf{r}_i  )  +  \varepsilon  ( \mathbf{x}  ,  \mathbf{r}_i  ) = ( \alpha + \beta +  \gamma  +  \delta   +  \varepsilon )  (  \mathbf{x} , \mathbf{r}_i )   =  0,    \\
( \mathbf{y}_i , \mathbf{y}_j )  &=   5   ( \mathbf{x}, \mathbf{r}_i )  ( \mathbf{x}, \mathbf{r}_j )  +  ( \mathbf{r}_i, \mathbf{r}_j )  =  0 ,  \\
(\mathbf{y}_0^{\prime} , \mathbf{y}_0^{\prime} )  &=  \iota^2 + \kappa^2 + \lambda^2 + \mu^2 + \nu^2  (  \mathbf{x} , \mathbf{x} )  =  0  ,  \\
(\mathbf{y}_0 , \mathbf{y}_0^{\prime} )  &=   \beta \iota  +  \gamma  \kappa  +  \delta  \lambda  +  \varepsilon  \nu  (  \mathbf{x} , \mathbf{x} )  =  0  \text{ since $\beta \iota =  \gamma  \kappa  =  \delta  \lambda  =  \varepsilon  \nu=0$,  and}   \\
(\mathbf{y}_0^{\prime} , \mathbf{y}_i )  &=  \iota  ( \mathbf{x}  ,  \mathbf{r}_i  )   +  \kappa  ( \mathbf{x}  ,  \mathbf{r}_i  )  +  \lambda  ( \mathbf{x}  ,  \mathbf{r}_i  )  +  \mu  ( \mathbf{x}  ,  \mathbf{r}_i  )  +  \nu  ( \mathbf{x}  ,  \mathbf{r}_i  )  =  ( \iota + \kappa + \lambda + \mu + \nu )( \mathbf{x}  ,  \mathbf{r}_i  )   =   0
\end{align*}
for all $i , j = 1 , 2 , \dotsc , m$.

\item  If $(\mathbf{x} , \mathbf{x}) = 3$ and $\alpha^2 + \beta^2 + \gamma^2  + \delta^2 + 3 \varepsilon^2 = 0$, then
\begin{align*}
( \mathbf{y}_0 , \mathbf{y}_0 )  &=  \alpha^2 + \beta^2 + \gamma^2 + \delta^2 + \varepsilon^2  (  \mathbf{x} , \mathbf{x} )  =   \alpha^2 + \beta^2 + \gamma^2 + \delta^2 + 3 \varepsilon^2  =  0,  \\
( \mathbf{y}_0 , \mathbf{y}_i )  &=  \alpha  ( \mathbf{x}  ,  \mathbf{r}_i  )   +  \beta  ( \mathbf{x}  ,  \mathbf{r}_i  )  +  \gamma  ( \mathbf{x}  ,  \mathbf{r}_i  )  +  \delta  ( \mathbf{x}  ,  \mathbf{r}_i  )  +  \varepsilon  ( \mathbf{x}  ,  \mathbf{r}_i  ) = ( \alpha + \beta +  \gamma  +  \delta   +  \varepsilon )  (  \mathbf{x} , \mathbf{r}_i )   =  0,    \\
( \mathbf{y}_i , \mathbf{y}_j )  &=   5   ( \mathbf{x}, \mathbf{r}_i )  ( \mathbf{x}, \mathbf{r}_j )  +  ( \mathbf{r}_i, \mathbf{r}_j )  =  0 ,  \\
(\mathbf{y}_0^{\prime} , \mathbf{y}_0^{\prime} )  &=  \iota^2 + \kappa^2 + \lambda^2 + \mu^2 + \nu^2  (  \mathbf{x} , \mathbf{x} )  =  0  ,  \\
(\mathbf{y}_0 , \mathbf{y}_0^{\prime} )  &=   \beta \iota  +  \gamma  \kappa  +  \delta  \lambda  +  \varepsilon  \nu  (  \mathbf{x} , \mathbf{x} )  =  0  \text{ since $\beta \iota =  \gamma  \kappa  =  \delta  \lambda  =  \varepsilon  \nu=0$,  and}   \\
(\mathbf{y}_0^{\prime} , \mathbf{y}_i )  &=  \iota  ( \mathbf{x}  ,  \mathbf{r}_i  )   +  \kappa  ( \mathbf{x}  ,  \mathbf{r}_i  )  +  \lambda  ( \mathbf{x}  ,  \mathbf{r}_i  )  +  \mu  ( \mathbf{x}  ,  \mathbf{r}_i  )  +  \nu  ( \mathbf{x}  ,  \mathbf{r}_i  )  =  ( \iota + \kappa + \lambda + \mu + \nu )( \mathbf{x}  ,  \mathbf{r}_i  )   =   0
\end{align*}
for all $i , j = 1 , 2 , \dotsc , m$.

\item  If $(\mathbf{x} , \mathbf{x}) = 4$ and $\alpha^2 + \beta^2 + \gamma^2  + \delta^2 + 4 \varepsilon^2 = 0$, then
\begin{align*}
( \mathbf{y}_0 , \mathbf{y}_0 )  &=  \alpha^2 + \beta^2 + \gamma^2 + \delta^2 + \varepsilon^2  (  \mathbf{x} , \mathbf{x} )  =   \alpha^2 + \beta^2 + \gamma^2 + \delta^2 + 4 \varepsilon^2  =  0,  \\
( \mathbf{y}_0 , \mathbf{y}_i )  &=  \alpha  ( \mathbf{x}  ,  \mathbf{r}_i  )   +  \beta  ( \mathbf{x}  ,  \mathbf{r}_i  )  +  \gamma  ( \mathbf{x}  ,  \mathbf{r}_i  )  +  \delta  ( \mathbf{x}  ,  \mathbf{r}_i  )  +  \varepsilon  ( \mathbf{x}  ,  \mathbf{r}_i  ) = ( \alpha + \beta +  \gamma  +  \delta   +  \varepsilon )  (  \mathbf{x} , \mathbf{r}_i )   =  0,    \\
( \mathbf{y}_i , \mathbf{y}_j )  &=   5   ( \mathbf{x}, \mathbf{r}_i )  ( \mathbf{x}, \mathbf{r}_j )  +  ( \mathbf{r}_i, \mathbf{r}_j )  =  0 ,  \\
(\mathbf{y}_0^{\prime} , \mathbf{y}_0^{\prime} )  &=  \iota^2 + \kappa^2 + \lambda^2 + \mu^2 + \nu^2  (  \mathbf{x} , \mathbf{x} )  =  0  ,  \\
(\mathbf{y}_0 , \mathbf{y}_0^{\prime} )  &=   \beta \iota  +  \gamma  \kappa  +  \delta  \lambda  +  \varepsilon  \nu  (  \mathbf{x} , \mathbf{x} )  =  0  \text{ since $\beta \iota =  \gamma  \kappa  =  \delta  \lambda  =  \varepsilon  \nu=0$,  and}   \\
(\mathbf{y}_0^{\prime} , \mathbf{y}_i )  &=  \iota  ( \mathbf{x}  ,  \mathbf{r}_i  )   +  \kappa  ( \mathbf{x}  ,  \mathbf{r}_i  )  +  \lambda  ( \mathbf{x}  ,  \mathbf{r}_i  )  +  \mu  ( \mathbf{x}  ,  \mathbf{r}_i  )  +  \nu  ( \mathbf{x}  ,  \mathbf{r}_i  )  =  ( \iota + \kappa + \lambda + \mu + \nu )( \mathbf{x}  ,  \mathbf{r}_i  )   =   0
\end{align*}
for all $i , j = 1 , 2 , \dotsc , m$.
\end{enumerate}

In all cases, we see that $\mathcal{C}$ is self-orthogonal  of length $n + 5$.
\end{proof}

\begin{example}
By Theorem \ref{thm: SO iff resC SO}, we have a self-orthogonal $(2, 5^2, 2)$ code with generator matrix
\begin{align*}
G_0 = \left(
\begin{array}{cccc}
\mathtt{a}  &   \mathtt{e}
\end{array}
\right)  .
\end{align*}
This code is of type $\{ 1 , 0 \}$ and has weight distribution $[ \langle 0 , 1  \rangle ,   \langle 2, 24 \rangle  ]$.

Using Theorem \ref{thm: SO code of length n+2 (1 generator)} with the base matrix $G_0$, $\mu = \mathtt{b},  \nu = \mathtt{i}$, and $\mathbf{x} = (2, 0)$, we obtain a self-orthogonal $(4, 5^3, 2)$ code with generator matrix
\begin{align*}
G_{0,1} = \left(
\begin{array}{cc|cccc}
0                   &   \mathtt{b}  &   \mathtt{b}   &       0    \\
\hline
\mathtt{e}	  &   \ovariant  &   \mathtt{a}  &   \mathtt{e}
\end{array}
\right) .
\end{align*}
This code is of type $\{ 1 , 1 \}$ and has weight distribution $[ \langle 0 , 1  \rangle ,  \langle 2, 4  \rangle  ,   \langle  3 , 8  \rangle, \langle  4 , 112  \rangle]$.

Now, if we use Theorem \ref{thm: SO code of length n+4 (1 generator)} with the base matrix $G_0$, $\alpha = \mathtt{a}$, $\beta = \mathtt{c},  \gamma =  \mathtt{s}$, $\delta = \mathtt{e}$, and $\mathbf{x} = (1, 0)$, we obtain a self-orthogonal $(6, 5^4 , 4)$ code with generator matrix
\begin{align*}
G_{0,2} = \left(
\begin{array}{cccc|cc}
\mathtt{a}  &  \mathtt{c}  &  \mathtt{s}  &  0  &  \mathtt{e}   &    0  \\
\hline
\mathtt{a}  & \mathtt{a}  &   \mathtt{e}  &   \mathtt{e}  &   \mathtt{a}    &  \mathtt{e}
\end{array}
\right) .
\end{align*}
This code is of type $\{ 2 , 0 \}$ and has weight distribution $[ \langle 0 , 1  \rangle , \langle 4, 28 \rangle, \langle 5 , 88  \rangle, \langle 6 , 508  \rangle ]$.

If we instead use Theorem \ref{thm: SO code of length n+5 (2 generators)} with the base matrix $G_0$, $\alpha = \mathtt{a}$, $\beta = \mathtt{e}$, $\gamma = \ovariant$, $\delta = \mathtt{j}$, $\varepsilon = 0$, $\iota = \mathtt{b}$, $\kappa = \mathtt{d}$, $\lambda = 0$, $\mu = \mathtt{i}$, $\nu = \mathtt{n}$, and $\mathbf{x} = (1, 4)$, we obtain a self-orthogonal $(7, 5^5, 4)$ code with generator matrix
\begin{align*}
G_{0,3} = \left(
\begin{array}{ccccc|cc}
\mathtt{a}   &  \mathtt{e}  &   \ovariant  &  \mathtt{j}   &   0   &  0   &  0     \\
0   &  \mathtt{b}  &   \mathtt{d}  &  0   &  \mathtt{i}    &  \mathtt{n}  &  \mathtt{b}   \\
\hline
\ovariant   &  \ovariant  &   \ovariant &  \ovariant   &  \ovariant   &   \mathtt{a}  &   \mathtt{e}
\end{array}
\right) .
\end{align*}
This code is of type $\{ 2 , 1 \}$ and has weight distribution $[ \langle 0 , 1  \rangle , \langle 4 , 32 \rangle,  \langle  5 , 48  \rangle , \langle 6 , 676 \rangle , \langle 7 , 2368  \rangle ]$.
\end{example}

\subsection{Construction of Quasi Self-Dual Codes}

In this section, we present construction methods for quasi self-dual codes over $I_5$. We start with what is known as the \textbf{multilevel construction}.

\begin{theorem}  \label{thm: multilevelconstruction}
Let $\mathcal{C}_{1}$ be a self-orthogonal  code of length $n$ over $\mathbb{F}_5$, and let $\mathcal{C}_{2}$ be a  code of length $n$ over $\mathbb{F}_5$ such that $\mathcal{C}_{1} \subseteq \mathcal{C}_{2}$. The code $\mathcal{C}$ defined by the relation
\begin{align*}
\mathcal{C} = \mathtt{a} \mathcal{C}_{1}  +  \mathtt{b}   \mathcal{C}_{2}
\end{align*}
is self-orthogonal with residue code $\mathcal{C}_{1}$ and torsion code $\mathcal{C}_{2}$. Furthermore, if $\left|\mathcal{C}_{1}\right|\left|\mathcal{C}_{2}\right|=5^{n}$, then $\mathcal{C}$ is quasi self-dual.
\end{theorem}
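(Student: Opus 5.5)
The plan is to reduce almost everything to results already established. Choose generator matrices $G_1$ of $\mathcal{C}_1$ and $G_2$ of $\mathcal{C}_2$. Since $\mathcal{C}_1 \subseteq \mathcal{C}_2$, Theorem \ref{thm:multilevel_generator_matrix_I5} applies to the matrix with blocks $\mathtt{a}G_1$ and $\mathtt{b}G_2$. It tells us three things at once:
\begin{itemize}
\item the $I_5$-submodule generated by these rows is exactly $\mathtt{a}\mathcal{C}_1+\mathtt{b}\mathcal{C}_2$, so $\mathcal{C}$ is a linear $I_5$-code;
\item $\operatorname{res}(\mathcal{C})=\mathcal{C}_1$;
\item $\operatorname{tor}(\mathcal{C})=\mathcal{C}_2$.
\end{itemize}
This settles the claims about the residue and torsion codes with no further work.

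For self-orthogonality, I will invoke Theorem \ref{thm: SO iff resC SO}. A linear $I_5$-code is self-orthogonal precisely when its residue code is self-orthogonal over $\mathbb{F}_5$. Here the residue code is $\mathcal{C}_1$, which is self-orthogonal by hypothesis, so $\mathcal{C}$ is self-orthogonal. As a sanity check, the direct computation also works: for $\mathbf{u}_1,\mathbf{u}_2\in\mathcal{C}_1$ and $\mathbf{v}_1,\mathbf{v}_2\in\mathcal{C}_2$, the relations $\mathtt{a}^2=\mathtt{b}$ and $\mathtt{ab}=\mathtt{b}^2=0$ give
\[
(\mathtt{a}\mathbf{u}_1+\mathtt{b}\mathbf{v}_1,\ \mathtt{a}\mathbf{u}_2+\mathtt{b}\mathbf{v}_2)=\mathtt{b}(\mathbf{u}_1,\mathbf{u}_2)=0 .
\]

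For the size statement, I will use the identity $|\mathcal{C}|=|\operatorname{res}(\mathcal{C})||\operatorname{tor}(\mathcal{C})|$ recorded in the preliminaries, which comes from the first isomorphism theorem applied to $\pi_{\mathcal{C}}$. Substituting the residue and torsion codes gives $|\mathcal{C}|=|\mathcal{C}_1||\mathcal{C}_2|$. So if $|\mathcal{C}_1||\mathcal{C}_2|=5^n$, then $\mathcal{C}$ is a self-orthogonal code of size $5^n$, which by definition makes it quasi self-dual.

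The main obstacle is the kernel identification behind the size formula, namely that $\ker\pi_{\mathcal{C}}=\mathtt{b}\operatorname{tor}(\mathcal{C})$. If I want to avoid relying on it, I will count directly. The map $(\mathbf{u},\mathbf{v})\mapsto\mathtt{a}\mathbf{u}+\mathtt{b}\mathbf{v}$ from $\mathcal{C}_1\times\mathcal{C}_2$ to $\mathcal{C}$ is surjective. It is also injective, because the $\mathtt{b}$-adic decomposition $\mathtt{a}\mathbf{u}+\mathtt{b}\mathbf{v}$ with $\mathbf{u},\mathbf{v}\in\mathbb{F}_5^n$ is unique, as every element of $I_5$ is uniquely of the form $i\mathtt{a}+j\mathtt{b}$. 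Hence the count $|\mathcal{C}|=|\mathcal{C}_1||\mathcal{C}_2|$ is immediate, and the quasi self-dual conclusion follows.
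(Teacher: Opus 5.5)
Your proof is correct. The paper gives no argument of its own beyond calling the result a direct analogue of Theorem 5 in the $I_3$ paper, and what you wrote is the routine verification that this analogy stands for: residue and torsion read off from the unique $\mathtt{b}$-adic decomposition (packaged here through Theorem~\ref{thm:multilevel_generator_matrix_I5}), orthogonality from $\mathtt{a}^2=\mathtt{b}$ and $\mathtt{ab}=\mathtt{b}^2=0$, and the count $|\mathcal{C}|=|\mathcal{C}_1||\mathcal{C}_2|$. Your version has the advantage of being self-contained within this paper.
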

\begin{proof}
This is just a direct analogue of Theorem 5 in \cite{I3} by extending the case of the ring $I_3$ with characteristic $3$ to $I_5$ with characteristic $5$.
\end{proof}

\begin{prop}[Proposition 6 in \cite{MassFormula_Ip}]
Let $\mathcal{C} = \mathtt{a} \operatorname{res}(\mathcal{C}) +  \mathtt{b}  \operatorname{tor}(\mathcal{C})$ be an $I_{5}$-code. Suppose further that $\operatorname{res} (\mathcal{C})$ is a self-dual $[n, k]$ code over $\mathbb{F}_{5}$. If $\operatorname{tor}(\mathcal{C})=\operatorname{res}(\mathcal{C})$, then $\mathcal{C}$ will be a QSD code.
\end{prop}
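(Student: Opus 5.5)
The plan is to reduce the statement to the multilevel construction, Theorem \ref{thm: multilevelconstruction}, with $\mathcal{C}_1 = \mathcal{C}_2 = \operatorname{res}(\mathcal{C})$.

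First, the hypothesis $\operatorname{tor}(\mathcal{C}) = \operatorname{res}(\mathcal{C})$ lets us rewrite the assumed decomposition as
\begin{align*}
\mathcal{C} = \mathtt{a}\operatorname{res}(\mathcal{C}) + \mathtt{b}\operatorname{res}(\mathcal{C}).
\end{align*}
Set $\mathcal{C}_1 = \mathcal{C}_2 = \operatorname{res}(\mathcal{C})$. A self-dual code is in particular self-orthogonal, so $\mathcal{C}_1$ is a self-orthogonal code over $\mathbb{F}_5$. The inclusion $\mathcal{C}_1 \subseteq \mathcal{C}_2$ holds trivially. Theorem \ref{thm: multilevelconstruction} therefore applies and shows that $\mathcal{C}$ is self-orthogonal, with residue code $\mathcal{C}_1$ and torsion code $\mathcal{C}_2$. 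Alternatively, self-orthogonality follows directly from Theorem \ref{thm: SO iff resC SO}, since $\operatorname{res}(\mathcal{C})$ is self-orthogonal.

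It remains to check the size condition. Since $\operatorname{res}(\mathcal{C})$ is self-dual of length $n$, we have $k = n/2$. Hence
\begin{align*}
|\mathcal{C}_1|\,|\mathcal{C}_2| = 5^{k}\cdot 5^{k} = 5^{n}.
\end{align*}
The second part of Theorem \ref{thm: multilevelconstruction} then gives that $\mathcal{C}$ is quasi self-dual. As a consistency check, the counting formula $|\mathcal{C}| = |\operatorname{res}(\mathcal{C})|\,|\operatorname{tor}(\mathcal{C})|$ from the preliminaries gives $|\mathcal{C}| = 5^n$ directly. In the notation of types, $\mathcal{C}$ has type $\{n/2, 0\}$, so $2k_1 + k_2 = n$.

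There is no serious obstacle here. The only point needing care is to justify $k = n/2$ for a self-dual code over $\mathbb{F}_5$. This follows from $\dim \mathcal{C}_1 + \dim \mathcal{C}_1^{\perp} = n$ for the nondegenerate standard inner product, together with $\mathcal{C}_1 = \mathcal{C}_1^{\perp}$.
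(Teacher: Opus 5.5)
Your proof is correct. This paper gives no argument of its own for the proposition; it only quotes it from Alahmadi et al. Your specialization of the multilevel construction (Theorem~\ref{thm: multilevelconstruction}) to $\mathcal{C}_1=\mathcal{C}_2=\operatorname{res}(\mathcal{C})$, with $k=n/2$ so that $|\mathcal{C}_1||\mathcal{C}_2|=5^{n}$, is exactly the natural route in the paper's framework.

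Your alternative via Theorem~\ref{thm: SO iff resC SO} and $|\mathcal{C}|=|\operatorname{res}(\mathcal{C})||\operatorname{tor}(\mathcal{C})|$ is worth keeping. It shows that the hypothesis $\mathcal{C}=\mathtt{a}\operatorname{res}(\mathcal{C})+\mathtt{b}\operatorname{tor}(\mathcal{C})$ is not needed. For example, the QSD code generated by $(\mathtt{a},\mathtt{f})$ has self-dual $\operatorname{res}(\mathcal{C})=\operatorname{tor}(\mathcal{C})=\langle(1,2)\rangle$, but it is not of that form.
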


We now present some building-up constructions for quasi self-dual codes. The next two theorems give propagation rules of order $2$ for quasi self-dual codes over $I_5$.

\begin{theorem}   \label{thm: QSD code of length n+2 (1 generator)}
Let $\mathcal{C}_0$ be a quasi self-dual code over $I_5$ of length $n$ with generator matrix $G_0 = (\mathbf{r}_i)$ where $(\mathbf{r}_i)$ is the $i$th row of $G_0$ for $i =  1 , 2 , \dotsc , m$. Let $\mathbf{x} \in \mathbb{F}_5^n$ and $\alpha , \beta , \gamma  \in I_5$, not all zero, such that $\alpha + 2 \beta + \gamma = 0$. Then the code $\mathcal{C}$ with generator matrix
\begin{align*}
G  =  \left(    \begin{array}{cc|c}
\alpha  &   \beta   &   \gamma  \mathbf{x}   \\  \hline
(\mathbf{x} , \mathbf{r}_1)   &   2  (\mathbf{x} , \mathbf{r}_1)     &   \mathbf{r}_1    \\
\vdots   &   \vdots   &   \vdots   \\
(\mathbf{x} , \mathbf{r}_m)   &   2  (\mathbf{x} , \mathbf{r}_m)     &   \mathbf{r}_m
\end{array}    \right)
\end{align*}
is a quasi self-dual code of length $n + 2$ if one of the following conditions is satisfied:
\begin{enumerate}[(i)]
\item  $(\mathbf{x} , \mathbf{x}) = 1$ and $\alpha^2 + \beta^2 + \gamma^2  = 0$,   \label{enum: c^2+d^2+e^2=0 (ii)}
\item  $(\mathbf{x} , \mathbf{x}) = 2$ and $\alpha^2 + \beta^2 + 2 \gamma^2  = 0$,
\item  $(\mathbf{x} , \mathbf{x}) = 3$ and $\alpha^2 + \beta^2 + 3 \gamma^2  = 0$, or
\item  $(\mathbf{x} , \mathbf{x}) = 4$ and $\alpha^2 + \beta^2 + 4  \gamma^2  = 0$.
\end{enumerate}
\end{theorem}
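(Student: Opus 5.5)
The plan has two parts. First I check that $\mathcal{C}$ is self-orthogonal, using the same row-by-row computation as in Theorems \ref{thm: SO code of length n+2 (1 generator)} and \ref{thm: SO code of length n+4 (1 generator)}. Then I check that $|\mathcal{C}| = 5^{n+2}$, by computing $\operatorname{res}(\mathcal{C})$ and $\operatorname{tor}(\mathcal{C})$ and using $|\mathcal{C}| = |\operatorname{res}(\mathcal{C})||\operatorname{tor}(\mathcal{C})|$.

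\textbf{Self-orthogonality.} Write $\mathbf{y}_0 = (\alpha, \beta, \gamma\mathbf{x})$ for the first row of $G$, and $\mathbf{y}_i = ((\mathbf{x},\mathbf{r}_i), 2(\mathbf{x},\mathbf{r}_i), \mathbf{r}_i)$ for the remaining rows. In each of the four cases the three types of inner product work out as follows.
\begin{itemize}
\item For the new row with itself, $(\mathbf{y}_0,\mathbf{y}_0) = \alpha^2+\beta^2+\gamma^2(\mathbf{x},\mathbf{x})$. This vanishes because $(\mathbf{x},\mathbf{x}) \in \{1,2,3,4\}$ is matched with the hypothesised quadratic relation.
\item For the new row against an old row, $(\mathbf{y}_0,\mathbf{y}_i) = (\alpha+2\beta+\gamma)(\mathbf{x},\mathbf{r}_i) = 0$.
\item For two old rows, $(\mathbf{y}_i,\mathbf{y}_j) = 5(\mathbf{x},\mathbf{r}_i)(\mathbf{x},\mathbf{r}_j)+(\mathbf{r}_i,\mathbf{r}_j) = 0$. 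This uses characteristic $5$ and the self-orthogonality of $\mathcal{C}_0$.
\end{itemize}
By bilinearity, every pair of codewords in the $I_5$-span is then orthogonal. This part is routine.

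\textbf{Size.} For the residue, $\operatorname{res}(\mathcal{C})$ is spanned by $\pi(\mathbf{y}_0)$ and by the extended residue vectors $((\mathbf{x},\mathbf{u}),2(\mathbf{x},\mathbf{u}),\mathbf{u})$ with $\mathbf{u}\in\operatorname{res}(\mathcal{C}_0)$. For the torsion, I apply Theorem \ref{thm: c=au+bv for any c in C} and use $\mathtt{a}\mathbf{y}_0 = \mathtt{b}\,\pi(\mathbf{y}_0)$. This shows that $\operatorname{tor}(\mathcal{C})$ is spanned by $\pi(\mathbf{y}_0)$, by the torsion vector coming from $\mathtt{b}$-parts of $\mathbf{y}_0$, and by the extended vectors $((\mathbf{x},\mathbf{t}),2(\mathbf{x},\mathbf{t}),\mathbf{t})$ with $\mathbf{t}\in\operatorname{tor}(\mathcal{C}_0)$.

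The extension map $\mathbf{u}\mapsto((\mathbf{x},\mathbf{u}),2(\mathbf{x},\mathbf{u}),\mathbf{u})$ is injective. Its image lies in the subspace where the second coordinate equals twice the first. So the old part contributes exactly $|\operatorname{res}(\mathcal{C}_0)||\operatorname{tor}(\mathcal{C}_0)| = 5^n$. The new row must then contribute one extra dimension to each of $\operatorname{res}$ and $\operatorname{tor}$.

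\textbf{Main obstacle.} I expect this dimension count to be the hardest step, and the statement as written looks insufficient for it. The count needs $\pi(\mathbf{y}_0)\neq 0$ and $\pi(\mathbf{y}_0)$ outside the span of the extended vectors. A sufficient check is $\pi(\beta)\neq 2\pi(\alpha)$. If $\alpha,\beta,\gamma\in J$, then $\mathbf{y}_0$ adds at most a factor $5$ and the size is only $5^{n+1}$, so the conclusion fails. My first attempt would therefore be to show that the quadratic relation, together with $\alpha+2\beta+\gamma=0$ and $(\mathbf{x},\mathbf{x})\neq 0$, forces $\pi(\beta)\neq 2\pi(\alpha)$ whenever $\pi(\alpha,\beta,\gamma)\neq 0$. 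I would also add the implicit non-degeneracy assumption (some of $\alpha,\beta,\gamma$ a unit modulo $J$), which the paper's examples seem to use. Under that assumption, the residue and the torsion each gain exactly one dimension, giving $|\mathcal{C}| = 5^{n+2}$, so $\mathcal{C}$ is quasi self-dual.
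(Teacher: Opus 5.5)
You are right that the statement is false as written, and this is exactly what the paper's proof passes over: it only asserts $|\mathcal{C}|=25|\mathcal{C}_0|$. If $\alpha,\beta,\gamma\in J$, the submodule generated by $\mathbf{y}_0$ is just $\mathbb{F}_5\mathbf{y}_0$, so $|\mathcal{C}|\le 5^{n+1}$. The paper's own table contains such an output: the length-$3$ code generated by $(\mathtt{b},\mathtt{n},\mathtt{b})$ and $(\mathtt{b},\mathtt{d},\mathtt{b})$. This is the construction applied to $(\mathtt{b})$ with $\mathbf{x}=1$, $\alpha=\gamma=\mathtt{b}$, $\beta=\mathtt{n}$, and the table lists it as SO of type $\{0,2\}$, not QSD. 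Adding the hypothesis $(\pi(\alpha),\pi(\beta),\pi(\gamma))\neq 0$ is the right repair, and your self-orthogonality computation is the paper's.

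The gap is in your route to the size count under that hypothesis. The claim you plan to prove is false: the hypotheses do not force $\pi(\beta)\neq 2\pi(\alpha)$ once $\pi(\alpha,\beta,\gamma)\neq 0$. Take $\alpha=\mathtt{a}$, $\beta=\mathtt{e}$, $\gamma=0$. Then $\alpha+2\beta+\gamma=5\mathtt{a}=0$ and $\alpha^2+\beta^2+t\gamma^2=\mathtt{b}+\mathtt{n}=0$ for every $t$, yet $\pi(\beta)=2\pi(\alpha)$. The output is still QSD: $(\mathtt{a},\mathtt{e},0)$ and $(\mathtt{b},\mathtt{d},\mathtt{b})$ generate a code of size $5^3$. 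So the criterion $\pi(\beta)\neq 2\pi(\alpha)$ is too crude, and you must also use the first coordinate of the extended vectors.

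Put $E(\mathbf{t})=((\mathbf{x},\mathbf{t}),2(\mathbf{x},\mathbf{t}),\mathbf{t})$ and $\mathbf{u}_0=\pi(\mathbf{y}_0)=(a_1,b_1,c_1\mathbf{x})$. Suppose $\mathbf{u}_0=E(\mathbf{t})$ for some $\mathbf{t}\in\mathbb{F}_5^n$. Then $\mathbf{t}=c_1\mathbf{x}$, $a_1=c_1(\mathbf{x},\mathbf{x})$ and $b_1=2a_1$. The linear relation $a_1+2b_1+c_1=0$ then gives $c_1=-5a_1=0$, hence $a_1=b_1=0$. Thus $\mathbf{u}_0\notin E(\mathbb{F}_5^n)$ whenever $\mathbf{u}_0\neq 0$; the quadratic relation plays no role here.

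From this you get $\operatorname{res}(\mathcal{C})=\mathbb{F}_5\mathbf{u}_0+E(\operatorname{res}(\mathcal{C}_0))$ and $\operatorname{tor}(\mathcal{C})=\mathbb{F}_5\mathbf{u}_0+E(\operatorname{tor}(\mathcal{C}_0))$, each one dimension larger than for $\mathcal{C}_0$. Do not list the $\mathtt{b}$-part of $\mathbf{y}_0$ among your torsion generators: it need not lie in $\operatorname{tor}(\mathcal{C})$, and including it would spoil the count. A more direct alternative: the submodule generated by $\mathbf{y}_0$ is $\mathbb{F}_5\mathbf{y}_0+\mathbb{F}_5\mathtt{b}\mathbf{u}_0$, of order $25$. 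The same computation, first modulo $J$ and then on the $\mathtt{b}$-multiples, shows it meets $E(\mathcal{C}_0)$ trivially, so $|\mathcal{C}|=25\cdot 5^n$.
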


\begin{proof}
Let  $\mathbf{y}_{0} =  ( \begin{array}{ccc}
\alpha  &   \beta   &   \gamma  \mathbf{x}  \end{array}  )$ be the first row of  $G$, and let $\mathbf{y}_{i} =  ( \begin{array}{ccc}
(\mathbf{x} , \mathbf{r}_i)   &   2  (\mathbf{x} , \mathbf{r}_i)     &   \mathbf{r}_i   \end{array}  )$ be the $(i+1)$th row of $G$ for $i = 1 , 2 , \dotsc , m$.
\begin{enumerate}[(i)]
\item  If $(\mathbf{x} , \mathbf{x}) = 1$ and $\alpha^2 + \beta^2 + \gamma^2  = 0$, then
\begin{align*}
(\mathbf{y}_0 , \mathbf{y}_0 )  &=  \alpha^2 + \beta^2 + \gamma^2  ( \mathbf{x} , \mathbf{x}  )  = \alpha^2 + \beta^2 + \gamma^2  =  0  ,  \\
(\mathbf{y}_0 , \mathbf{y}_i )  &=  \alpha  ( \mathbf{x}  ,  \mathbf{r}_i  )  +  2 \beta ( \mathbf{x}  ,  \mathbf{r}_i  )  +  \gamma  ( \mathbf{x}  ,  \mathbf{r}_i  )    =  (  \alpha +  2 \beta + \gamma  )  ( \mathbf{x}  ,  \mathbf{r}_i  )  =  0  ,  \text{ and}   \\
(\mathbf{y}_i , \mathbf{y}_j )  &=    ( \mathbf{x}  ,  \mathbf{r}_i  )   ( \mathbf{x}  ,  \mathbf{r}_j  ) +  4  ( \mathbf{x}  ,  \mathbf{r}_i  )   ( \mathbf{x}  ,  \mathbf{r}_j  )  +  ( \mathbf{r}_i  ,  \mathbf{r}_j  )  =   0
\end{align*}
for all $i , j = 1 , 2 , \dotsc , m$.

\item  If $(\mathbf{x} , \mathbf{x}) = 2$ and $\alpha^2 + \beta^2 +2   \gamma^2  = 0$, then
\begin{align*}
(\mathbf{y}_0 , \mathbf{y}_0 )  &=  \alpha^2 + \beta^2 + \gamma^2  ( \mathbf{x} , \mathbf{x}  )  = \alpha^2 + \beta^2 + 2  \gamma^2  =  0  ,  \\
(\mathbf{y}_0 , \mathbf{y}_i )  &=  \alpha  ( \mathbf{x}  ,  \mathbf{r}_i  )  +  2 \beta ( \mathbf{x}  ,  \mathbf{r}_i  )  +  \gamma  ( \mathbf{x}  ,  \mathbf{r}_i  )    =  (  \alpha +  2 \beta + \gamma  )  ( \mathbf{x}  ,  \mathbf{r}_i  )  =  0   ,   \text{ and}  \\
(\mathbf{y}_i , \mathbf{y}_j )  &=    ( \mathbf{x}  ,  \mathbf{r}_i  )   ( \mathbf{x}  ,  \mathbf{r}_j  ) +  4  ( \mathbf{x}  ,  \mathbf{r}_i  )   ( \mathbf{x}  ,  \mathbf{r}_j  )  +  ( \mathbf{r}_i  ,  \mathbf{r}_j  )  =   0
\end{align*}
for all $i , j = 1 , 2 , \dotsc , m$.

\item  If $(\mathbf{x} , \mathbf{x}) = 3$ and $\alpha^2 + \beta^2 + 3   \gamma^2  = 0$, then
\begin{align*}
(\mathbf{y}_0 , \mathbf{y}_0 )  &=  \alpha^2 + \beta^2 + \gamma^2  ( \mathbf{x} , \mathbf{x}  )  = \alpha^2 + \beta^2 + 3  \gamma^2  =  0   , \\
(\mathbf{y}_0 , \mathbf{y}_i )  &=  \alpha  ( \mathbf{x}  ,  \mathbf{r}_i  )  +  2 \beta ( \mathbf{x}  ,  \mathbf{r}_i  )  +  \gamma  ( \mathbf{x}  ,  \mathbf{r}_i  )    =  (  \alpha +  2 \beta + \gamma  )  ( \mathbf{x}  ,  \mathbf{r}_i  )  =  0   ,   \text{ and} \\
(\mathbf{y}_i , \mathbf{y}_j )  &=    ( \mathbf{x}  ,  \mathbf{r}_i  )   ( \mathbf{x}  ,  \mathbf{r}_j  ) +  4  ( \mathbf{x}  ,  \mathbf{r}_i  )   ( \mathbf{x}  ,  \mathbf{r}_j  )  +  ( \mathbf{r}_i  ,  \mathbf{r}_j  )  =   0
\end{align*}
for all $i , j = 1 , 2 , \dotsc , m$.

\item  If $(\mathbf{x} , \mathbf{x}) = 4$ and $\alpha^2 + \beta^2 + 4 \gamma^2  = 0$, then
\begin{align*}
(\mathbf{y}_0 , \mathbf{y}_0 )  &=  \alpha^2 + \beta^2 + \gamma^2  ( \mathbf{x} , \mathbf{x}  )  = \alpha^2 + \beta^2 + 4 \gamma^2  =  0   ,   \\
(\mathbf{y}_0 , \mathbf{y}_i )  &=  \alpha  ( \mathbf{x}  ,  \mathbf{r}_i  )  +  2 \beta ( \mathbf{x}  ,  \mathbf{r}_i  )  +  \gamma  ( \mathbf{x}  ,  \mathbf{r}_i  )    =  (  \alpha +  2 \beta + \gamma  )  ( \mathbf{x}  ,  \mathbf{r}_i  )  =  0   ,  \text{ and}  \\
(\mathbf{y}_i , \mathbf{y}_j )  &=    ( \mathbf{x}  ,  \mathbf{r}_i  )   ( \mathbf{x}  ,  \mathbf{r}_j  ) +  4  ( \mathbf{x}  ,  \mathbf{r}_i  )   ( \mathbf{x}  ,  \mathbf{r}_j  )  +  ( \mathbf{r}_i  ,  \mathbf{r}_j  )  =   0
\end{align*}
for all $i , j = 1 , 2 , \dotsc , m$.

\end{enumerate}
In all cases, we see that $\mathcal{C}$ is a self-orthogonal code. Moreover, $| \mathcal{C} | = 25^1  | \mathcal{C}_0 | = 5^{n+2}$. Thus, $\mathcal{C}$ is a quasi self-dual code of length $n + 2$.
\end{proof}

\begin{theorem}  \label{thm: QSD code of length n+2 (2 generators)}
Let $\mathcal{C}_{0}$ be a quasi self-dual code of length $n$ over $I_{5}$ with generator matrix $G_{0}=\left(\mathbf{r}_{i}\right)$, where $r_{i}$ is the $i$th row of $G_{0}$ for $i = 1 , 2 , \dotsc , m$. If $\mathbf{x} \in \mathbb{F}_{5}^{n}$ and $\mu, \nu \in J = \{  0, \mathtt{b} , \mathtt{d} , \mathtt{i} , \mathtt{n} \}$ are nonzero elements, then the code $\mathcal{C}$ with generator matrix
\begin{align*}
G=\left(\begin{array}{cc|c}
4 \mu & 0  & \mu \mathbf{x} \\
0 & 2 \nu   & \nu \mathbf{x} \\   \hline
(\mathbf{x} , \mathbf{r}_1)   &   2  (\mathbf{x} , \mathbf{r}_1)     &   \mathbf{r}_1    \\
\vdots   &   \vdots   &   \vdots   \\
(\mathbf{x} , \mathbf{r}_m)   &   2  (\mathbf{x} , \mathbf{r}_m)     &   \mathbf{r}_m
\end{array}\right)
\end{align*}
is a quasi self-dual code of length $n + 2$.
\end{theorem}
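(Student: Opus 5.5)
The plan is to check the two defining properties of a quasi self-dual code directly: that $\mathcal{C}$ is self-orthogonal, and that $|\mathcal{C}| = 5^{n+2}$. Write $\mathbf{y}_0=(4\mu,0,\mu\mathbf{x})$, $\mathbf{y}_0'=(0,2\nu,\nu\mathbf{x})$ and $\mathbf{y}_i=((\mathbf{x},\mathbf{r}_i),2(\mathbf{x},\mathbf{r}_i),\mathbf{r}_i)$ for $i=1,\dots,m$.

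\emph{Self-orthogonality.} It suffices to check the generators pairwise. From Table \ref{mult table}, every element of $J$ annihilates all of $I_5$, i.e.\ $J\cdot I_5=0$. Every entry of $\mathbf{y}_0$ and $\mathbf{y}_0'$ is an element of $J$, because $\mu,\nu\in J$, $J$ is an ideal, and $\mu\mathbf{x}$ has entries $x_k\mu\in J$. Hence every inner product involving $\mathbf{y}_0$ or $\mathbf{y}_0'$ is $0$; in particular $(\mathbf{y}_0,\mathbf{y}_i)=(4\mu+\mu)(\mathbf{x},\mathbf{r}_i)=0$ anyway. For the remaining pairs, the computation in the proof of Theorem \ref{thm: QSD code of length n+2 (1 generator)} gives $(\mathbf{y}_i,\mathbf{y}_j)=5(\mathbf{x},\mathbf{r}_i)(\mathbf{x},\mathbf{r}_j)+(\mathbf{r}_i,\mathbf{r}_j)=0$, using that $\mathcal{C}_0$ is self-orthogonal.

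\emph{Upper bound on size.} Let $L\colon I_5^n\to I_5^{n+2}$ be the additive map $\mathbf{c}\mapsto((\mathbf{x},\mathbf{c}),2(\mathbf{x},\mathbf{c}),\mathbf{c})$. It is injective and $I_5$-linear, so the span of the $\mathbf{y}_i$ is exactly $L(\mathcal{C}_0)$. Since $\mu,\nu$ are nonzero in $J$, each is a unit multiple of $\mathtt{b}$. Then $r\mathbf{y}_0=0$ for all $r\in I_5$, so the $I_5$-span of $\mathbf{y}_0$ is just $\{s\mathbf{y}_0 : s\in\mathbb{F}_5\}$, of size $5$; the same holds for $\mathbf{y}_0'$. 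Therefore
\[
\mathcal{C}=L(\mathcal{C}_0)+\mathbb{F}_5\mathbf{y}_0+\mathbb{F}_5\mathbf{y}_0',
\]
which gives $|\mathcal{C}|\le 25\,|\mathcal{C}_0|=5^{n+2}$.

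\emph{Matching lower bound.} This is the step I expect to need the most care: I must show the sum above is direct. Suppose $L(\mathbf{c})+s\mathbf{y}_0+t\mathbf{y}_0'=\mathbf{0}$ with $s,t\in\mathbb{F}_5$.
\begin{enumerate}[(a)]
\item Applying $\pi$ kills $\mathbf{y}_0$ and $\mathbf{y}_0'$ and gives $\pi(\mathbf{c})=\mathbf{0}$. Hence $\mathbf{c}=\mathtt{b}\mathbf{v}$ with $\mathbf{v}\in\mathbb{F}_5^n$.
\item Normalizing $\mu=\mathtt{b}\mu'$ and $\nu=\mathtt{b}\nu'$, the relation becomes an $\mathbb{F}_5$-relation
\[
((\mathbf{x},\mathbf{v}),2(\mathbf{x},\mathbf{v}),\mathbf{v})+s'(4,0,\mathbf{x})+t'(0,2,\mathbf{x})=\mathbf{0}.
\]
\item The last $n$ coordinates give $\mathbf{v}=-(s'+t')\mathbf{x}$.
\item The first two coordinates then force $4s'=(s'+t')(\mathbf{x},\mathbf{x})$ and $2t'=2(s'+t')(\mathbf{x},\mathbf{x})$. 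Together these give $t'=4s'$, so $s'+t'=0$, hence $\mathbf{v}=\mathbf{0}$ and then $s'=t'=0$.
\end{enumerate}
So $|\mathcal{C}|=25\,|\mathcal{C}_0|=5^{n+2}$, and $\mathcal{C}$ is quasi self-dual of length $n+2$.
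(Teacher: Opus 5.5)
Your proof is correct and has the same outline as the paper's: you check that the generators are pairwise orthogonal, and then you count. Your observation $J\cdot I_5=0$ handles every product involving $\mathbf{y}_0$ or $\mathbf{y}_0'$ at once, whereas the paper computes each product separately.

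The real difference is in the counting step. The paper simply asserts $|\mathcal{C}|=5^2|\mathcal{C}_0|=5^{n+2}$. You prove it in two parts:
\begin{enumerate}[(i)]
\item an upper bound, using that the $I_5$-spans of $\mathbf{y}_0$ and $\mathbf{y}_0'$ are just $\mathbb{F}_5\mathbf{y}_0$ and $\mathbb{F}_5\mathbf{y}_0'$, together with the injective $I_5$-linear extension map $L$;
\item a matching lower bound, from your directness argument via $\pi$ and the small linear system over $\mathbb{F}_5$.
\end{enumerate}

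The directness check is needed, because the first two coordinates alone do not separate $\mathbb{F}_5\mathbf{y}_0+\mathbb{F}_5\mathbf{y}_0'$ from $L(\mathcal{C}_0)$. For example, with $\mu=\nu=\mathtt{b}$, the combination $\mathbf{y}_0+4\mathbf{y}_0'=(\mathtt{n},\mathtt{i},\mathbf{0})$ has first two coordinates of the form $(w,2w)$, the same shape as the first two coordinates of $L(\mathbf{c})$. It is the interplay with the last $n$ coordinates, exactly as in your steps (c)--(d), that rules out a dependence. Your version therefore supplies a step the paper leaves unjustified.
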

\begin{proof}
Let $\mathbf{y}_{0} =  \left(\begin{array}{ccc}  4 \mu & 0  & \mu  \mathbf{x}   \end{array}   \right)$, $\mathbf{y}_{0}^{\prime} =  \left(\begin{array}{cccc}  0 & 2 \nu   & \nu \mathbf{x}    \end{array}   \right)$, and $\mathbf{y}_{i} =  ( \begin{array}{ccc}     (\mathbf{x} , \mathbf{r}_i)   &   2  (\mathbf{x} , \mathbf{r}_i)     &   \mathbf{r}_i   \end{array}  )$. For $i , j = 1 , 2 , \dotsc , m$, we have
\begin{align*}
(  \mathbf{y}_0 , \mathbf{y}_0  )  &=  \mu^2 + \mu^2 ( \mathbf{x} , \mathbf{x}  )  =  0,   \\
(  \mathbf{y}_0 , \mathbf{y}_0^{\prime}  )  &= \mu \nu  ( \mathbf{x} , \mathbf{x}  )  =  0,   \\
(  \mathbf{y}_0^{\prime} , \mathbf{y}_0^{\prime}  )  &= 4 \nu^2 + \nu^2 ( \mathbf{x} , \mathbf{x}  )  =  0,   \\
(  \mathbf{y}_0 , \mathbf{y}_i  )  &=  4 \mu ( \mathbf{x} , \mathbf{r}_i  ) + \mu ( \mathbf{x} , \mathbf{r}_i  )  =  0  ,   \\
(  \mathbf{y}_0^{\prime} , \mathbf{y}_i  )  &=  4 \nu ( \mathbf{x} , \mathbf{r}_i  ) + \nu ( \mathbf{x} , \mathbf{r}_i  )  =  0  ,  \text{ and}   \\
(\mathbf{y}_i , \mathbf{y}_j )  &=    ( \mathbf{x}  ,  \mathbf{r}_i  )   ( \mathbf{x}  ,  \mathbf{r}_j  ) +  4  ( \mathbf{x}  ,  \mathbf{r}_i  )   ( \mathbf{x}  ,  \mathbf{r}_j  )  +  ( \mathbf{r}_i  ,  \mathbf{r}_j  )  =   0
\end{align*}
as $\mu , \nu \in J$. Thus, $\mathcal{C}$ is a self-orthogonal code. Furthermore, since $|\mathcal{C}|=5^{2} \left|\mathcal{C}_{0}\right| = 5^{n+2}, \mathcal{C}$ is a quasi self-dual code of length $n+2$.
\end{proof}

\begin{example}
The code $J = \{ 0 , \mathtt{b} ,  \mathtt{d} ,  \mathtt{i}  ,  \mathtt{n}   \}$ is a quasi self-dual $(1, 5^1, 1)$ code and has generator matrix given by $G_1 = \left( \mathtt{b}  \right)$. Note that the code $J$ is of type $\{ 0 , 1 \}$ and has weight distribution $[ \langle 0 , 1  \rangle  ,  \langle  1  ,  4  \rangle  ]$.

Using Theorem \ref{thm: QSD code of length n+2 (1 generator)} from the matrix $G_1$ with $\alpha = 0$, $\beta = \mathtt{a}$, $\gamma = \mathtt{j}$, and $\mathbf{x} = 1$, we obtain a  self-orthogonal $(3, 5^3, 1)$ code with generator matrix
\begin{align*}
G_{1,1} = \left(
\begin{array}{cc|c}
0  &  \mathtt{a}  &  \mathtt{j}  \\
\hline
\mathtt{b}  &  \mathtt{d}  &  \mathtt{b}
\end{array}
\right) .
\end{align*}
This code is of type $\{ 1 , 1 \}$ and has weight distribution $[ \langle  0 , 1  \rangle , \langle  1 , 4   \rangle , \langle  2 , 24 \rangle ,  \langle 3 , 96  \rangle ]$.

Using Theorem \ref{thm: QSD code of length n+2 (2 generators)}, from $G_1$ with $\mu = \mathtt{n}$ and $\nu = \mathtt{i}$ and $\mathbf{x} = 1$, we obtain another quasi self-dual $(3, 5^3, 1)$ code with generator matrix
	\begin{align*}
		G_{1 , 2}  =   \left(
		\begin{array}{cc|c}
		\mathtt{b} &  0    &    \mathtt{n}       \\
		0 &  \mathtt{b}    &    \mathtt{i}       \\
		\hline
		\mathtt{b} &  \mathtt{d}    &    \mathtt{b}
		\end{array}
		\right)
		\end{align*}
and weight distribution $[ \langle 0 , 1 \rangle  ,   \langle  1  ,  12  \rangle  ,  \langle  2 ,  48 \rangle  ,  \langle  3  ,  64  \rangle ]$. This code is of type $\{ 0 , 3 \}$ and is monomially equivalent to the code with generator matrix
$
		\left(
		\begin{array}{ccc}
		\mathtt{b} &  0    &    0       \\
		0 &  \mathtt{b}    &    0       \\
		0 &  0    &    \mathtt{b}
		\end{array}
		\right)
$.
\end{example}

\begin{cor}  \label{cor: QSD code of length n+4 (3 generators)}
Let $\mathcal{C}_0$ be a quasi self-dual code over $I_5$ of length $n$ with generator matrix $G_0 = (\mathbf{r}_i)$ where $(\mathbf{r}_i)$ is the $i$th row of $G_0$ for $i =  1 , 2 , \dotsc , m$. Let $\mathbf{x} \in \mathbb{F}_5^n$ and $\alpha , \beta , \gamma, \delta  \in I_5$, not all zero, such that $\alpha + \beta + 2 \gamma + \delta  = 0$. Let $\iota, \kappa, \mu , \nu  \in J = \{ 0 , \mathtt{b} , \mathtt{d} , \mathtt{i} , \mathtt{n}   \}$, not all zero, such that $\iota + \kappa = \mu + \nu = 0$. Then the code $\mathcal{C}$ with generator matrix
			\begin{align*}
			G = \left(
			\begin{array}{cccc|c}
			\alpha             & \beta      & \gamma   &       0                   & \delta   \mathbf{x} \\
			0                     &   \iota      &         0              &     0                 &    \kappa \mathbf{x}     \\
			0                     &   0     &         \mu                &     0                 &    \nu   \mathbf{x}     \\
			\hline
			(\mathbf{x} , \mathbf{r}_1) & (\mathbf{x} , \mathbf{r}_1) & 2 (\mathbf{x} , \mathbf{r}_1)  & 2 (\mathbf{x} , \mathbf{r}_1)      & \mathbf{r}_1      \\
			\vdots            & \vdots            & \vdots                 & \vdots    & \vdots     \\
			(\mathbf{x} , \mathbf{r}_m) & (\mathbf{x} , \mathbf{r}_m) & 2 (\mathbf{x} , \mathbf{r}_m)   & 2 (\mathbf{x} , \mathbf{r}_m)      & \mathbf{r}_m
			\end{array}
			\right)
			\end{align*}
			is a quasi self-dual code of length $n + 4$  if one of the following conditions is satisfied:
			\begin{enumerate}[(i)]
			\item  $(\mathbf{x} , \mathbf{x}) = 1$ and $\alpha^2 + \beta^2 + \gamma^2  + \delta^2  = 0$,
			\item  $(\mathbf{x} , \mathbf{x}) = 2$ and $\alpha^2 + \beta^2 + \gamma^2  + 2 \delta^2  = 0$,
			\item  $(\mathbf{x} , \mathbf{x}) = 3$ and $\alpha^2 + \beta^2 + \gamma^2  + 3 \delta^2  = 0$, or
			\item  $(\mathbf{x} , \mathbf{x}) = 4$ and $\alpha^2 + \beta^2 + \gamma^2  + 4  \delta^2 = 0$.
			\end{enumerate}
\end{cor}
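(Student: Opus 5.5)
The plan is to split the verification into two parts: self-orthogonality of the new code $\mathcal{C}$, and the size count $|\mathcal{C}| = 5^{n+4}$. Write $\mathbf{y}_0 = (\alpha,\beta,\gamma,0,\delta\mathbf{x})$, $\mathbf{z}_1 = (0,\iota,0,0,\kappa\mathbf{x})$, $\mathbf{z}_2 = (0,0,\mu,0,\nu\mathbf{x})$, and $\mathbf{y}_i = ((\mathbf{x},\mathbf{r}_i),(\mathbf{x},\mathbf{r}_i),2(\mathbf{x},\mathbf{r}_i),2(\mathbf{x},\mathbf{r}_i),\mathbf{r}_i)$ for $i=1,\dotsc,m$.

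\textbf{Self-orthogonality.} The submatrix formed by $\mathbf{y}_0$ and the $\mathbf{y}_i$ is exactly the generator matrix of Theorem~\ref{thm: SO code of length n+4 (1 generator)}. Under any of the conditions (i)--(iv), that theorem already gives $(\mathbf{y}_0,\mathbf{y}_0)=0$, $(\mathbf{y}_0,\mathbf{y}_i)=0$ and $(\mathbf{y}_i,\mathbf{y}_j)=0$. For the two new rows, every nonzero coordinate of $\mathbf{z}_1$ and $\mathbf{z}_2$ is a multiple of an element of $J$. Since $\mathtt{b}r = 0$ for all $r\in I_5$ (Table~\ref{mult table}), $J$ annihilates all of $I_5$. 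Hence $(\mathbf{z}_k,\mathbf{z}_l)$, $(\mathbf{z}_k,\mathbf{y}_0)$ and $(\mathbf{z}_k,\mathbf{y}_i)$ all vanish at once. I will still record the explicit sums, e.g.\ $(\mathbf{z}_1,\mathbf{y}_i) = (\iota+\kappa)(\mathbf{x},\mathbf{r}_i)$, to stay parallel with the earlier proofs, but none of them requires the relations $\iota+\kappa=\mu+\nu=0$. So $\mathcal{C}$ is self-orthogonal.

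\textbf{Size count.} It remains to show $|\mathcal{C}| = 5^4|\mathcal{C}_0| = 5^{n+4}$. I will argue through the residue and torsion codes, using $|\mathcal{C}| = |\operatorname{res}(\mathcal{C})||\operatorname{tor}(\mathcal{C})|$.
\begin{itemize}
\item Since $\mathbf{r}_i \mapsto \mathbf{y}_i$ is injective and preserves the $\mathtt{a}/\mathtt{b}$ decomposition, the $\mathbf{y}_i$ span a copy of $\mathcal{C}_0$ inside $\mathcal{C}$.
\item The relations $\kappa=-\iota$ and $\nu=-\mu$ give $\mathbf{z}_1 = \iota'\,\mathtt{b}(0,1,0,0,-\mathbf{x})$ and $\mathbf{z}_2 = \mu'\,\mathtt{b}(0,0,1,0,-\mathbf{x})$ for scalars $\iota',\mu'\in\mathbb{F}_5$. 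When these scalars are nonzero, the torsion code gains the two vectors $(0,1,0,0,-\mathbf{x})$ and $(0,0,1,0,-\mathbf{x})$.
\item When $\pi(\mathbf{y}_0)\neq\mathbf{0}$, the row $\mathbf{y}_0$ adds $\pi(\mathbf{y}_0)$ to the residue code and $\pi(\mathbf{y}_0)$ to the torsion code (via $\mathtt{a}\mathbf{y}_0 = \mathtt{b}\,\pi(\mathbf{y}_0)$). This contributes a factor $25$.
\item If the new residue and torsion vectors are linearly independent modulo the lifts of $\operatorname{res}(\mathcal{C}_0)$ and $\operatorname{tor}(\mathcal{C}_0)$, the total factor is $25\cdot5\cdot5 = 5^4$.
\end{itemize}

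\textbf{Main obstacle.} The hard step is this independence, together with the nondegeneracy hypotheses behind it. The statement only says ``not all zero'', whereas the count needs $\iota,\mu\neq0$ and $\pi(\alpha,\beta,\gamma)\neq 0$. The check I would try first uses the coordinate pattern of the lifted rows. Every vector coming from $\mathcal{C}_0$ has first four coordinates $(t,t,2t,2t)$ for some $t$. The new torsion vectors have $(0,1,0,0)$ and $(0,0,1,0)$ there, and these are independent from $(1,1,2,2)$ and from each other. For $\pi(\mathbf{y}_0)$, whose fourth coordinate is $0$, I would rule out membership in the span of $(t,t,2t,2t)$, $(0,1,0,0)$ and $(0,0,1,0)$ on the first four coordinates. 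The fourth coordinate forces $t=0$, and then the first coordinate forces $\pi(\alpha)=0$. So independence holds whenever $\pi(\alpha)\neq 0$, and I would add this as an explicit assumption in the proof. If instead $\pi(\alpha)=0$, the combination $\alpha+\beta+2\gamma+\delta=0$ must be examined case by case, and the size may drop, in which case $\mathcal{C}$ is only self-orthogonal.
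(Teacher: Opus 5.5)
Your self-orthogonality argument is the paper's. The rows $\mathbf{y}_0,\mathbf{y}_i$ are handled by Theorem~\ref{thm: SO code of length n+4 (1 generator)}, and the two $J$-rows vanish against everything. You are right that $J$ annihilates $I_5$, so $\iota+\kappa=\mu+\nu=0$ plays no role in orthogonality; the paper's step $(\iota+\kappa)(\mathbf{x},\mathbf{r}_i)=0$ is redundant, since each term is already $0$.

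For the size, the paper gives no argument: it simply writes $|\mathcal{C}|=25\cdot 5^2|\mathcal{C}_0|$. Your suspicion is justified, because the corollary is false as stated.
With $\iota=\kappa=0$, $\mu=\mathtt{b}$, $\nu=\mathtt{n}$, the row $\mathbf{z}_1$ vanishes. With $\alpha=\mathtt{b}$, $\beta=\mathtt{n}$, $\gamma=\delta=0$, all four squares vanish, so the condition matching $(\mathbf{x},\mathbf{x})$ holds, yet $\mathbf{y}_0$ generates only $5$ elements. Either way $|\mathcal{C}|\le 5^{n+3}$.
A subtler failure survives $\iota,\mu\neq0$ and $\pi(\mathbf{y}_0)\neq\mathbf{0}$. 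Take $n=1$, $\mathcal{C}_0=(\mathtt{b})$, $\mathbf{x}=(2)$, $\alpha=\gamma=0$, $\beta=\mathtt{a}$, $\delta=\ovariant$, $\iota=\mu=\mathtt{b}$, $\kappa=\nu=\mathtt{n}$. Every hypothesis and condition (iv) hold, and $\mathbf{y}_0=(0,\mathtt{a},0,0,\mathtt{j})$. But $\mathtt{a}\mathbf{y}_0=(0,\mathtt{b},0,0,\mathtt{i})=\mathbf{z}_1$, so $\mathcal{C}$ is the $\mathbb{F}_5$-span of its four rows and $|\mathcal{C}|\le 5^4<5^5$.

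So an extra hypothesis is unavoidable, and your residue/torsion count is the right way to justify the size. However, $\pi(\alpha)\neq0$ is stronger than necessary; it even excludes the paper's own example after the corollary ($\alpha=\mathtt{b}$, $\gamma=\mathtt{a}$). The case you defer closes at once.
Suppose $a\pi(\mathbf{y}_0)+b(0,1,0,0,-\mathbf{x})+c(0,0,1,0,-\mathbf{x})=(t,t,2t,2t,\mathbf{w})$ with $t=(\mathbf{x},\mathbf{w})$.
\begin{itemize}
\item The fourth coordinate gives $t=0$.
\item The first three coordinates give $a\pi(\alpha)=0$, $b=-a\pi(\beta)$, $c=-a\pi(\gamma)$.
\item The last block is $\mathbf{w}=(a\pi(\delta)-b-c)\mathbf{x}$. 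Since $(\mathbf{x},\mathbf{x})\neq0$, the condition $(\mathbf{x},\mathbf{w})=0$ gives $a(\pi(\beta)+\pi(\gamma)+\pi(\delta))=0$.
\item By the linear relation this reads $a(\pi(\alpha)+\pi(\gamma))=0$.
\end{itemize}
Hence $a=b=c=0$ whenever $(\pi(\alpha),\pi(\gamma))\neq(0,0)$. The special case $b=c=0$ also shows that $\pi(\mathbf{y}_0)$ is new in the residue code.

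Conversely, suppose $\pi(\alpha)=\pi(\gamma)=0$. The two relations force $\pi(\delta)=-\pi(\beta)$ and $\pi(\beta)^2(1+(\mathbf{x},\mathbf{x}))=0$. So either $\pi(\mathbf{y}_0)=\mathbf{0}$, or we are in case (iv) with $\pi(\mathbf{y}_0)=\pi(\beta)(0,1,0,0,-\mathbf{x})$. In the latter case $\mathtt{a}\mathbf{y}_0\in\mathbb{F}_5\mathbf{z}_1$, which is exactly the collision above.

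The correct statement is therefore: $\mathcal{C}$ is always self-orthogonal, and it is quasi self-dual if and only if $\iota\neq0$, $\mu\neq0$ and $(\pi(\alpha),\pi(\gamma))\neq(0,0)$.
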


\begin{proof}
Let $\mathbf{y}_{0}= ( \begin{array}{ccccc}  \alpha       & \beta      & \gamma       &   0                  & \delta  \mathbf{x}     \end{array}  )$, $\mathbf{y}_0' =  \left(  \begin{array}{ccccc}      0                     &   \iota      &         0              &     0                 &    \kappa \mathbf{x}           \end{array}   \right)$, $\mathbf{y}_0'' =  \left(  \begin{array}{ccccc}      0                     &   0     &         \mu                   &     0                 &    \nu   \mathbf{x}            \end{array}   \right)$,
and let $\mathbf{y}_i = \left(  \begin{array}{ccccc}      (\mathbf{x} , \mathbf{r}_i) & (\mathbf{x} , \mathbf{r}_i) & 2 (\mathbf{x} , \mathbf{r}_i)  & 2 (\mathbf{x} , \mathbf{r}_i)     & \mathbf{r}_i           \end{array}   \right) $. We have
\begin{align*}
( \mathbf{y}_0 , \mathbf{y}_0' ) &=  \beta \iota + \delta \kappa  ( \mathbf{x} , \mathbf{x} ) = 0
\end{align*}
since $\beta \iota = \delta \kappa = 0$. Similar computations show that $\mathbf{y}_0$, $\mathbf{y}_0'$, and $\mathbf{y}_0''$ are orthogonal to one another and to themselves since $\iota, \kappa, \mu , \nu \in J$. From Theorem \ref{thm: SO code of length n+4 (1 generator)}, we have already seen that $( \mathbf{y}_0 , \mathbf{y}_i ) = 0$ and $( \mathbf{y}_i , \mathbf{y}_j ) = 0$. Now,
\begin{align*}
( \mathbf{y}_0' , \mathbf{y}_i ) &=  \iota  ( \mathbf{x} , \mathbf{r}_i  )  +   \kappa  ( \mathbf{x} , \mathbf{r}_i  )   =  ( \iota + \kappa  ) ( \mathbf{x} , \mathbf{r}_i  )     =  0  \   \text{ and}   \\
( \mathbf{y}_0'' , \mathbf{y}_i ) &=  \mu  ( \mathbf{x} , \mathbf{r}_i  )  +   \nu  ( \mathbf{x} , \mathbf{r}_i  )   =  ( \mu + \nu  ) ( \mathbf{x} , \mathbf{r}_i  )     =  0 .
\end{align*}
Thus, $\mathcal{C}$ is self-orthogonal. Moreover, since $| \mathcal{C} | = 25^1 5^2 | \mathcal{C}_0 | = 5^{n+4}$, $\mathcal{C}$ is a quasi self-dual code over $I_5$  of length $n+4$.
\end{proof}

\begin{example}
Using Corollary \ref{cor: QSD code of length n+4 (3 generators)} with the matrix $G_{1,2}$, $\alpha = \mathtt{b}$, $\beta = \mathtt{q}$, $\gamma = \mathtt{a}$, $\delta = \mathtt{e}$, $\iota = \mathtt{d}$, $\kappa = \mathtt{i}$, $\mu = \mathtt{b}$, $\nu = \mathtt{n}$, and $\mathbf{x} = ( 1, 1, 0)$, we obtain a quasi self-dual $(7, 5^7, 1)$ code with generator matrix
\begin{align*}
G_{1,2,1}  =
\left(
\begin{array}{cccc|ccc}
\mathtt{b}  &  \mathtt{q}  &  \mathtt{a}  &  0  &  \mathtt{e}   &  \mathtt{e}  &  0  \\
0  &  \mathtt{d}  &  0  &  0  &  \mathtt{i}   &  \mathtt{i}  &  0  \\
0  &  0  &  \mathtt{b}  &  0  &  \mathtt{n}   &  \mathtt{n}  &  0  \\
\hline
\mathtt{b}  &  \mathtt{b}  &  \mathtt{d}  &  \mathtt{d}  &  \mathtt{b}   &  0 &  \mathtt{n}  \\
\mathtt{b}  &  \mathtt{b}  &  \mathtt{d}  &  \mathtt{d}  &  0   &  \mathtt{b} &  \mathtt{i}  \\
\mathtt{i}  &  \mathtt{i}  &  \mathtt{b}  &  \mathtt{b}  &  \mathtt{b}   &  \mathtt{d} &  \mathtt{b}
\end{array}
\right)
\end{align*}
and weight distribution $[  \langle 0 , 1  \rangle ,  \langle  1 , 20  \rangle ,  \langle  2 , 164 \rangle ,  \langle  3 , 720 \rangle ,  \langle  4 ,  1920  \rangle ,  \langle   5  ,  8584   \rangle ,  \langle   6  ,  32620 \rangle ,   \langle   7 ,  34096 \rangle  ]$.
\end{example}

The next theorem is an extension of Theorem 6 in \cite{I3} from the ring $I_3$ to the ring $I_5$ and gives a propagation rule of order $5$.

\begin{theorem}   \label{thm: QSD codes of length n+5}
Let $\mathcal{C}_{0}$ be a quasi self-dual code of length $n$ over $I_{5}$ with generator matrix $G_{0}=\left(\mathbf{r}_{i}\right)$, where $r_{i}$ is the $i$th row of $G_{0}$ for $i = 1 , 2 , \dotsc , m$. If $\mathbf{x} \in \mathbb{F}_{5}^{n}$ and $\iota , \kappa ,  \lambda , \mu , \nu   \in J = \{0, \mathtt{b} , \mathtt{d} , \mathtt{i} , \mathtt{n}  \}$ are nonzero elements, then the code $\mathcal{C}$ with generator matrix
\begin{align*}
G=\left(\begin{array}{ccccc|c}
\iota & 0 & 0 & 0  & 0   & \iota \mathbf{x} \\
0 &  \kappa & 0 & 0 &  0   & \kappa \mathbf{x} \\
0 &  0 & \lambda & 0 &  0   & \lambda  \mathbf{x} \\
0 &  0 & 0 & \mu &  0   & \mu  \mathbf{x} \\
0 &  0 & 0 & 0 &  \nu   & \nu  \mathbf{x} \\     \hline
4 (\mathbf{x} , \mathbf{r}_1)   &   4  (\mathbf{x} , \mathbf{r}_1)  & 4 (\mathbf{x} , \mathbf{r}_1)  &  4 (\mathbf{x} , \mathbf{r}_1)  &  4 (\mathbf{x} , \mathbf{r}_1)    &   \mathbf{r}_1    \\
\vdots   &   \vdots   &   \vdots  &  \vdots   &   \vdots   &   \vdots   \\
4 (\mathbf{x} , \mathbf{r}_m)   &   4  (\mathbf{x} , \mathbf{r}_m)  & 4 (\mathbf{x} , \mathbf{r}_m)  &  4 (\mathbf{x} , \mathbf{r}_m)  &  4 (\mathbf{x} , \mathbf{r}_m)    &   \mathbf{r}_m
\end{array}\right)
\end{align*}
is a quasi self-dual code of length $n + 5$.
\end{theorem}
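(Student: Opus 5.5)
The plan is to show that $\mathcal{C}$ is self-orthogonal, then count $|\mathcal{C}| = 5^{n+5}$, following the pattern of Theorems \ref{thm: QSD code of length n+2 (1 generator)} and \ref{thm: QSD code of length n+2 (2 generators)}.

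\emph{Self-orthogonality.} For $k=1,\dots,5$, let $\mathbf{y}_k$ be the row whose first five coordinates are $\theta_k$ in position $k$ and $0$ elsewhere, followed by $\theta_k\mathbf{x}$, where $\theta_k\in\{\iota,\kappa,\lambda,\mu,\nu\}$. For $i=1,\dots,m$, let $\mathbf{y}_i'=(4(\mathbf{x},\mathbf{r}_i),\dots,4(\mathbf{x},\mathbf{r}_i),\mathbf{r}_i)$. There are three kinds of inner products to check.
\begin{itemize}
\item For $k,l\le 5$, every term of $(\mathbf{y}_k,\mathbf{y}_l)$ is a product of two elements of $J$. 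Since $J\cdot I_5=0$ (from $\mathtt{b}^2=\mathtt{ab}=0$), this inner product is $0$.
\item For the mixed products, $(\mathbf{y}_k,\mathbf{y}_i')=4\theta_k(\mathbf{x},\mathbf{r}_i)+\theta_k(\mathbf{x},\mathbf{r}_i)=5\theta_k(\mathbf{x},\mathbf{r}_i)=0$, because the characteristic is $5$.
\item Finally, $(\mathbf{y}_i',\mathbf{y}_j')=5\cdot 16\,(\mathbf{x},\mathbf{r}_i)(\mathbf{x},\mathbf{r}_j)+(\mathbf{r}_i,\mathbf{r}_j)=(\mathbf{r}_i,\mathbf{r}_j)=0$, since $\mathcal{C}_0$ is self-orthogonal.
\end{itemize}
Hence $\mathcal{C}$ is self-orthogonal.

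\emph{Size.} I expect this to be the only step requiring care, since the rows of $G$ need not be independent. Define $L\colon I_5^n\to I_5^{n+5}$ by $L(\mathbf{c})=(4(\mathbf{x},\mathbf{c}),\dots,4(\mathbf{x},\mathbf{c}),\mathbf{c})$. This map is $I_5$-linear, because $(\mathbf{x},\cdot)$ is additive and commutes with ring multiplication. It is injective, because the last $n$ coordinates recover $\mathbf{c}$. The $I_5$-span of the rows $\mathbf{y}_i'$ is therefore $L(\mathcal{C}_0)$, which has $|\mathcal{C}_0|=5^n$ elements.

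Next consider the span $D$ of $\mathbf{y}_1,\dots,\mathbf{y}_5$. Ring multiples of these rows vanish, since $J$ is annihilated by $I_5$, so $D$ consists only of integer multiples. Each $\theta_k$ is a nonzero multiple of $\mathtt{b}$, so integer multiples of $\mathbf{y}_k$ give all of $\mathtt{b}\,\mathbb{F}_5(\mathbf{e}_k,\mathbf{x})$. Hence
\[
D=\{\mathtt{b}(\mathbf{z},(z_1+\dots+z_5)\mathbf{x}) : \mathbf{z}\in\mathbb{F}_5^5\},
\]
and $|D|=5^5$.

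\emph{Trivial intersection.} Since $\mathcal{C}=D+L(\mathcal{C}_0)$, it remains to show $D\cap L(\mathcal{C}_0)=\{\mathbf{0}\}$. Suppose an element lies in both. Being in $L(\mathcal{C}_0)$ forces its first five coordinates to be equal, so $\mathbf{z}=(t,\dots,t)$. Then its tail is $\mathtt{b}\cdot 5t\,\mathbf{x}=\mathbf{0}$. By injectivity of $L$, the corresponding $\mathbf{c}$ is $\mathbf{0}$, so the head is also $\mathbf{0}$, and the element is $\mathbf{0}$.

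It follows that $|\mathcal{C}|=5^5\cdot 5^n=5^{n+5}$. Combined with self-orthogonality, this shows $\mathcal{C}$ is quasi self-dual of length $n+5$.
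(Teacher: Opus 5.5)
Your proof is correct and follows the same route as the paper: you verify that the generator rows of $G$ are pairwise orthogonal, using $J\cdot I_5=0$, the identity $4\theta+\theta=0$, and the self-orthogonality of $\mathcal{C}_0$, and then you conclude from $|\mathcal{C}|=5^{n+5}$. The one difference is that the paper simply asserts $|\mathcal{C}|=5^5|\mathcal{C}_0|$, whereas your injective $I_5$-linear map $L$, your explicit description of $D$, and your check that $D\cap L(\mathcal{C}_0)=\{\mathbf{0}\}$ actually justify that count, which is worthwhile since the rows of $G$ need not be independent.
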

\begin{proof}
Let $\mathbf{y}_{0} = \left(\begin{array}{cccccc}  \iota & 0 & 0 & 0  & 0   & \iota \mathbf{x}    \end{array}   \right)$, $\mathbf{y}_{0}^{\prime} = \left(\begin{array}{cccccc}   0 &  \kappa & 0 & 0 &  0   & \kappa \mathbf{x}  \end{array}   \right)$, $\mathbf{y}_{0}^{\prime\prime} = \left(\begin{array}{cccccc}   0 &  0 & \lambda & 0 &  0   & \lambda  \mathbf{x}  \end{array}   \right)$,  $\mathbf{y}_{0}^{\prime\prime\prime} = \left(\begin{array}{cccccc}  0 &  0 & 0 & \mu &  0   & \mu  \mathbf{x}  \end{array}   \right)$, and  $\mathbf{y}_{0}^{\prime\prime\prime\prime} = \left(\begin{array}{cccccc}  0 &  0 & 0 & 0 &  \nu   & \nu  \mathbf{x}  \end{array}  \right)$. Observe that
\begin{align*}
(  \mathbf{y}_0 , \mathbf{y}_0^{\prime}  )  =   \iota  \kappa  (  \mathbf{x} ,  \mathbf{x}  )  =  0
\end{align*}
since $\iota \kappa = 0$. Similarly, it can be shown that $\mathbf{y}_0$, $\mathbf{y}_0^\prime$, $\mathbf{y}_0^{\prime\prime}$, $\mathbf{y}_0^{\prime\prime\prime}$, and $\mathbf{y}_0^{\prime\prime\prime\prime}$ are orthogonal to each other and to themselves because $\iota , \kappa ,  \lambda , \mu , \nu   \in J = \{0, \mathtt{b}, \mathtt{d} , \mathtt{i} , \mathtt{n} \}$.

Now, let $\mathbf{y}_{i} =  \left(\begin{array}{cccccc}  4 (\mathbf{x} , \mathbf{r}_i)   &   4  (\mathbf{x} , \mathbf{r}_i)  & 4 (\mathbf{x} , \mathbf{r}_i)  &  4 (\mathbf{x} , \mathbf{r}_i)  &  4 (\mathbf{x} , \mathbf{r}_i)    &   \mathbf{r}_i     \end{array}   \right)$ for $i = 1 , 2 , \dotsc , m$. Then we have
\begin{align*}
(  \mathbf{y}_0 , \mathbf{y}_i  ) =  4 \iota   (\mathbf{x} , \mathbf{r}_i)  +  \iota  (\mathbf{x} , \mathbf{r}_i)  =  0  .
\end{align*}
Similarly, it can be shown that $(  \mathbf{y}_0^{\prime} , \mathbf{y}_i  ) = 0$,  $(  \mathbf{y}_0^{\prime\prime} , \mathbf{y}_i  ) = 0$, $(  \mathbf{y}_0^{\prime\prime\prime} , \mathbf{y}_i  ) = 0$,  and  $(  \mathbf{y}_0^{\prime\prime\prime\prime} , \mathbf{y}_i  ) = 0$. Moreover,
\begin{align*}
(  \mathbf{y}_i , \mathbf{y}_j  )   =  5  (\mathbf{x} , \mathbf{r}_i)  (\mathbf{x} , \mathbf{r}_j)  +   (\mathbf{r}_i , \mathbf{r}_j)  =  0
\end{align*}
for $j = 1 , 2 , \dotsc , m$.
Hence, $\mathcal{C}$ is a self-orthogonal code. Furthermore, since $|\mathcal{C}|=5^{5}\left|\mathcal{C}_{0}\right|=5^{n+5}, \mathcal{C}$ is a quasi self-dual code over $I_5$ of length $n + 5$.
\end{proof}

\subsection{Construction of Self-Dual Codes}

We start with some characterizations of self-dual codes over $I_5$.

\begin{theorem}  \label{thm: SD code iff resC is SD and torC is F_5^n}
A linear code $\mathcal{C}$ of length $n$ over $I_{5}$ is self-dual if and only if $\operatorname{res}(\mathcal{C})$ is a self-dual code over $\mathbb{F}_5$ and $\operatorname{tor}(\mathcal{C})=\mathbb{F}_{5}^{n}$. Furthermore, the code $\mathcal{C}$ is given by the relation $\mathcal{C} = \mathtt{a} \operatorname{res} (\mathcal{C}) + \mathtt{b} \mathbb{F}_5^n$.  Additionally, $|\mathcal{C}| = 5^{\frac{3n}{2}}$.
\end{theorem}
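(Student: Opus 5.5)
The plan is to use the description of the dual from Theorem \ref{thm: dual of a code over I5}, $\mathcal{C}^\perp = \mathtt{a}\operatorname{res}(\mathcal{C})^\perp \oplus \mathtt{b}\mathbb{F}_5^n$, together with the fact that an $I_5$-code is determined by its residue and torsion codes. That second fact follows from Theorem \ref{thm: multilevel_generator_matrix_I5} once we know the code has the form $\mathtt{a}\mathcal{C}_1+\mathtt{b}\mathcal{C}_2$.

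\emph{Forward direction.} Suppose $\mathcal{C}=\mathcal{C}^\perp$. Applying $\operatorname{tor}$ and $\operatorname{res}$ to both sides and using Theorem \ref{thm: residue and torsion of dual code Cperp}, we immediately get:
\begin{itemize}
\item $\operatorname{tor}(\mathcal{C})=\operatorname{tor}(\mathcal{C}^\perp)=\mathbb{F}_5^n$;
\item $\operatorname{res}(\mathcal{C})=\operatorname{res}(\mathcal{C}^\perp)=\operatorname{res}(\mathcal{C})^\perp$, so $\operatorname{res}(\mathcal{C})$ is self-dual over $\mathbb{F}_5$.
\end{itemize}
Moreover, $\mathcal{C}=\mathcal{C}^\perp=\mathtt{a}\operatorname{res}(\mathcal{C})^\perp+\mathtt{b}\mathbb{F}_5^n=\mathtt{a}\operatorname{res}(\mathcal{C})+\mathtt{b}\mathbb{F}_5^n$, which already gives the stated structural relation.

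\emph{Converse.} Assume $\operatorname{res}(\mathcal{C})$ is self-dual and $\operatorname{tor}(\mathcal{C})=\mathbb{F}_5^n$. First I show $\mathcal{C}=\mathtt{a}\operatorname{res}(\mathcal{C})+\mathtt{b}\mathbb{F}_5^n$.
\begin{itemize}
\item For $\subseteq$: by Theorem \ref{thm: c=au+bv for any c in C}, every codeword has the form $\mathtt{a}\mathbf{u}+\mathtt{b}\mathbf{v}$ with $\mathbf{u}\in\operatorname{res}(\mathcal{C})$ and $\mathbf{v}\in\mathbb{F}_5^n$.
\item For $\supseteq$: given $\mathbf{u}\in\operatorname{res}(\mathcal{C})$, Theorem \ref{thm: c=au+bv for any c in C} gives some $\mathtt{a}\mathbf{u}+\mathtt{b}\mathbf{v}_0\in\mathcal{C}$. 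Since $\mathtt{b}\mathbb{F}_5^n\subseteq\mathcal{C}$ by the torsion hypothesis, subtracting $\mathtt{b}\mathbf{v}_0$ and adding any $\mathtt{b}\mathbf{v}$ keeps us in $\mathcal{C}$.
\end{itemize}
Then Theorem \ref{thm: dual of a code over I5} together with $\operatorname{res}(\mathcal{C})^\perp=\operatorname{res}(\mathcal{C})$ gives $\mathcal{C}^\perp=\mathtt{a}\operatorname{res}(\mathcal{C})+\mathtt{b}\mathbb{F}_5^n=\mathcal{C}$.

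\emph{Cardinality.} A self-dual code over $\mathbb{F}_5$ has dimension $n/2$, so $n$ must be even. The code has type $\{n/2,n/2\}$, and the formula $|\mathcal{C}|=|\operatorname{res}(\mathcal{C})||\operatorname{tor}(\mathcal{C})|=5^{n/2}\cdot 5^n=5^{3n/2}$ finishes the proof.

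The only mildly delicate point is the equality $\mathcal{C}=\mathtt{a}\operatorname{res}(\mathcal{C})+\mathtt{b}\mathbb{F}_5^n$ in the converse. This is where the full torsion is essential: it lets us adjust the $\mathtt{b}$-part freely. It also ensures the sum in Theorem \ref{thm: dual of a code over I5} matches $\mathcal{C}$ exactly, not merely up to inclusion. Everything else is a direct application of the earlier theorems.
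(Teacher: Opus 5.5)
Your proof is correct and follows essentially the paper's argument: the forward direction comes from Theorem \ref{thm: residue and torsion of dual code Cperp}, and the converse combines Theorems \ref{thm: c=au+bv for any c in C} and \ref{thm: dual of a code over I5}. The only difference is in closing the converse: you get the reverse inclusion $\mathtt{a}\operatorname{res}(\mathcal{C})+\mathtt{b}\mathbb{F}_5^n\subseteq\mathcal{C}$ directly from $\mathtt{b}\mathbb{F}_5^n\subseteq\mathcal{C}$, whereas the paper proves only $\mathcal{C}\subseteq\mathcal{C}^\perp$ and concludes equality from $|\mathcal{C}|=|\operatorname{res}(\mathcal{C})||\operatorname{tor}(\mathcal{C})|=|\mathcal{C}^\perp|$.
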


\begin{proof}
For the forward implication, suppose that $\mathcal{C}$ is self-dual, i.e., $\mathcal{C} = \mathcal{C}^\perp$. Then $\operatorname{res} ( \mathcal{C} ) = \operatorname{res}  ( \mathcal{C}^\perp )$ and $\operatorname{tor} ( \mathcal{C} ) = \operatorname{tor} ( \mathcal{C}^\perp  ) $.  By Theorem \ref{thm: residue and torsion of dual code Cperp}, we have  $\operatorname{res} ( \mathcal{C}^\perp ) = \operatorname{res} ( \mathcal{C} )^\perp$ and $\operatorname{tor} ( \mathcal{C} ) = \mathbb{F}_5^n$.

For the reverse implication, suppose that $\operatorname{res} ( \mathcal{C} )$ is self-dual and $\operatorname{tor} ( \mathcal{C} ) =  \mathbb{F}_5^n$. By Theorems \ref{thm: c=au+bv for any c in C} and \ref{thm: dual of a code over I5}, we have
\begin{align*}
\mathcal{C}  \subseteq  \mathtt{a}  \operatorname{res}  ( \mathcal{C} ) + \mathtt{b}  \mathbb{F}_5^n  =  \mathtt{a}  \operatorname{res}  (  \mathcal{C} )^\perp  +  \mathtt{b} \mathbb{F}_5^n  =  \mathcal{C}^\perp  .
\end{align*}
Since $| \mathcal{C} | =  | \operatorname{res}  ( \mathcal{C} ) |   | \operatorname{tor}  ( \mathcal{C} ) |  =  | \operatorname{res}  ( \mathcal{C} )^\perp |  | \mathbb{F}_5^n | =  | \mathcal{C}^\perp |$, it follows that $| \mathcal{C} | = | \mathcal{C}^\perp |$, i.e., $\mathcal{C} = \mathcal{C}^\perp$. Hence, $\mathcal{C}$ is self-dual. Also, $| \mathcal{C} | = | \operatorname{res} ( \mathcal{C}  ) |    | \operatorname{tor} ( \mathcal{C}  ) | = 5^{\frac{n}{2}} 5^n = 5^{\frac{3n}{2}}$.
\end{proof}

\begin{remark}
Theorem \ref{thm: SD code iff resC is SD and torC is F_5^n} tells us that the length of a self-dual code over $I_5$ is always even.
\end{remark}

\begin{cor}
The minimum distance of a non-trivial self-dual code over $I_5$ is $1$.
\end{cor}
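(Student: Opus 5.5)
The plan is to combine Theorem \ref{thm: SD code iff resC is SD and torC is F_5^n} with Theorem \ref{thm: minimum distance of I5 code C}. Let $\mathcal{C}$ be a non-trivial self-dual code over $I_5$ of length $n$. By Theorem \ref{thm: SD code iff resC is SD and torC is F_5^n}, $\operatorname{tor}(\mathcal{C}) = \mathbb{F}_5^n$. By Theorem \ref{thm: minimum distance of I5 code C}, the minimum distance of $\mathcal{C}$ equals the minimum distance of $\operatorname{tor}(\mathcal{C}) = \mathbb{F}_5^n$. The latter code contains the unit vector $\mathbf{e}_1 = (1,0,\dotsc,0)$ of weight $1$, and no nonzero vector has smaller weight, so its minimum distance is $1$. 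Hence $d(\mathcal{C}) = 1$.

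To avoid relying on Theorem \ref{thm: minimum distance of I5 code C}, one can instead argue directly. Since $\mathtt{b}\,\operatorname{tor}(\mathcal{C}) \subseteq \mathcal{C}$, the vector $\mathtt{b}\mathbf{e}_1 = (\mathtt{b},0,\dotsc,0)$ lies in $\mathcal{C}$ and has Hamming weight $1$. Every nonzero codeword has weight at least $1$, so $d(\mathcal{C}) = 1$.

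The only point needing care is the meaning of ``non-trivial''. Self-duality forces $n \geq 1$: by the remark following Theorem \ref{thm: SD code iff resC is SD and torC is F_5^n}, $n$ is even and positive. So $\mathbb{F}_5^n \neq \{\mathbf{0}\}$, and $\mathbf{e}_1$ exists. No real obstacle is expected; the argument is a one-line consequence of the torsion code being the full space.
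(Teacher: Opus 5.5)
Your proof is correct and matches the paper's, which likewise just combines Theorem \ref{thm: minimum distance of I5 code C} with $\operatorname{tor}(\mathcal{C})=\mathbb{F}_5^n$ from Theorem \ref{thm: SD code iff resC is SD and torC is F_5^n}; your direct variant via $\mathtt{b}\mathbf{e}_1\in\mathcal{C}$ is an equally valid shortcut. One small correction: $n\geq 1$ comes from $\mathcal{C}$ being non-trivial, not from the remark, which only says that $n$ is even.
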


\begin{proof}
This is immediate from Theorems \ref{thm: minimum distance of I5 code C} and \ref{thm: SD code iff resC is SD and torC is F_5^n}.
\end{proof}

We now present a building-up construction for self-dual codes over $I_5$.

\begin{theorem}   \label{thm: SD code of length n+6 (5 generators)}
Let $n$ be even and let $\mathcal{C}_{0}$ be a self-dual code of length $n$ over $I_{5}$ with generator matrix $G_{0}=\left(\mathbf{r}_{i}\right)$, where $r_{i}$ is the $i$th row of $G_{0}$ for $i = 1 , 2 , \dotsc , m$. Let $\mathbf{x}_1 , \mathbf{x}_2  \in \mathbb{F}_{5}^{n}$ such that $( \mathbf{x}_1 , \mathbf{x}_2 ) = 0$ and $( \mathbf{x}_i , \mathbf{x}_i ) = 4$ for $i = 1 , 2$. For $1 \leq i \leq m$, define $u_i = ( \mathbf{x}_1 , \mathbf{r}_i ) $ and $v_i = ( \mathbf{x}_2 , \mathbf{r}_i ) $. Then the code $\mathcal{C}$ with generator matrix
\begin{align*}
G=\left(\begin{array}{cccccc|c}
\mathtt{a} & 0 & 0 & 0  & 0  & 0  & \mathtt{a} \mathbf{x}_1    \\
0 &  \mathtt{a} & 0 & 0 &  0  & 0  & \mathtt{a} \mathbf{x}_2    \\
0 &  0 & \mathtt{b} & 0 &  0  & 0  &    \mathtt{n}   \mathbf{x}_2    \\
0 &  0 & 0 & \mathtt{b} &  0 & 0   & \mathtt{b} ( 4 \mathbf{x}_1 +  4 \mathbf{x}_2 )    \\
0 &  0 & 0 & 0 &  \mathtt{b}  & 0  & \mathtt{b} ( 3 \mathbf{x}_1 + 4 \mathbf{x}_2 )    \\
0 &  0 & 0 & 0 & 0  &  \mathtt{b}  & \mathtt{b} ( 2 \mathbf{x}_1 +  \mathbf{x}_2 )    \\
 \hline
4u_1   &   4v_1  &  v_1 & u_1 + v_1  &  2 u_1 + v_1  & 3 u_1 + 4 v_1    &   \mathbf{r}_1    \\
\vdots   &   \vdots   &   \vdots  &  \vdots   &   \vdots   &   \vdots  &  \vdots   \\
4u_m   &   4  v_m  &  v_m  & u_m + v_m  &  2 u_m + v_m  & 3 u_m + 4 v_m     &   \mathbf{r}_m
\end{array}\right)
\end{align*}
is a self-dual code of length $n + 6$.
\end{theorem}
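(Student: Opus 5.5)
The plan is to verify the two conditions of Theorem \ref{thm: SD code iff resC is SD and torC is F_5^n}: first that $\mathcal{C}$ is self-orthogonal, then that $\operatorname{res}(\mathcal{C})$ is self-dual over $\mathbb{F}_5$ and $\operatorname{tor}(\mathcal{C})=\mathbb{F}_5^{n+6}$. Equivalently, one can check $|\mathcal{C}|=5^{3(n+6)/2}$ once self-orthogonality and the residue are known.

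\textbf{Self-orthogonality.} I would name the rows $\mathbf{y}_1,\mathbf{y}_2$ (the $\mathtt{a}$-rows), $\mathbf{z}_3,\dots,\mathbf{z}_6$ (the $\mathtt{b}$-rows) and $\mathbf{w}_i$ for $1\le i\le m$ (the lifted rows), and check inner products class by class.
\begin{itemize}
\item Since $\mathtt{b}$ annihilates all of $I_5$, every $\mathbf{z}_k$ is orthogonal to every row, including itself.
\item For the $\mathtt{a}$-rows, $(\mathbf{y}_1,\mathbf{y}_1)=\mathtt{b}\bigl(1+(\mathbf{x}_1,\mathbf{x}_1)\bigr)=5\mathtt{b}=0$. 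Similarly $(\mathbf{y}_2,\mathbf{y}_2)=0$ and $(\mathbf{y}_1,\mathbf{y}_2)=\mathtt{b}(\mathbf{x}_1,\mathbf{x}_2)=0$.
\item Against the lifted rows, $(\mathbf{y}_1,\mathbf{w}_i)=4\mathtt{a}u_i+\mathtt{a}(\mathbf{x}_1,\mathbf{r}_i)=5\mathtt{a}u_i=0$, and likewise $(\mathbf{y}_2,\mathbf{w}_i)=0$.
\item The real computation is $(\mathbf{w}_i,\mathbf{w}_j)$. Expanding the six new coordinates gives the following coefficients:
\begin{itemize}
\item $u_iu_j$: $16+1+4+9=30\equiv 0$;
\item $v_iv_j$: $16+1+1+1+16=35\equiv 0$;
\item each cross term $u_iv_j$ and $u_jv_i$: $1+2+12=15\equiv0$.
\end{itemize}
So $(\mathbf{w}_i,\mathbf{w}_j)=(\mathbf{r}_i,\mathbf{r}_j)=0$, because $\mathcal{C}_0$ is self-dual. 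Here I use only that $I_5$ has characteristic $5$, so these integer coefficients reduce mod $5$.
\end{itemize}

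\textbf{Residue and torsion.} Next I would identify $\operatorname{res}(\mathcal{C})$ using Theorem \ref{thm: c=au+bv for any c in C}. It is spanned by:
\begin{itemize}
\item $(1,0,0,0,0,0,\mathbf{x}_1)$ and $(0,1,0,0,0,0,\mathbf{x}_2)$;
\item the vectors $\bigl(4\bar u,4\bar v,\bar v,\bar u+\bar v,2\bar u+\bar v,3\bar u+4\bar v,\mathbf{s}\bigr)$ for $\mathbf{s}\in\operatorname{res}(\mathcal{C}_0)$, where $\bar u=(\mathbf{x}_1,\mathbf{s})$ and $\bar v=(\mathbf{x}_2,\mathbf{s})$. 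Here I would use that $(\mathbf{x},\cdot)$ commutes with $\pi$.
\end{itemize}
By Theorem \ref{thm: SO iff resC SO} this code is self-orthogonal.

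For the torsion, I would show $\operatorname{tor}(\mathcal{C})=\mathbb{F}_5^{n+6}$. It contains:
\begin{itemize}
\item $\pi$ of the residue generators (since $\mathtt{a}\mathbf{y}_k\in\mathcal{C}$);
\item the four vectors coming from $\mathbf{z}_3,\dots,\mathbf{z}_6$;
\item the vectors $(4u',4v',v',\dots,\mathbf{t})$ obtained from $\mathtt{b}\mathbf{t}\in\mathcal{C}_0$ with $\mathbf{t}\in\operatorname{tor}(\mathcal{C}_0)=\mathbb{F}_5^n$.
\end{itemize}
The last family already projects onto all of $\mathbb{F}_5^n$ in the last $n$ coordinates. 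So it suffices to show that the $\mathbb{F}_5$-span of the first six coordinates of the remaining vectors, after subtracting off the $\mathbf{t}$-parts, is all of $\mathbb{F}_5^6$. This is a rank computation of a $6\times 6$ matrix over $\mathbb{F}_5$ built from the chosen coefficients, using $(\mathbf{x}_1,\mathbf{x}_2)=0$ and $(\mathbf{x}_i,\mathbf{x}_i)=4$.

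\textbf{Main obstacle.} I expect the hard step to be the dimension of the residue. The generators listed above give only $\dim\operatorname{res}(\mathcal{C})\le n/2+2$, whereas self-duality of length $n+6$ requires dimension $n/2+3$. I would first recheck whether additional residue vectors arise, for instance from $\mathtt{a}$-multiples combining $\mathbf{w}_i$ with $\mathbf{y}_1,\mathbf{y}_2$. Then I would compare $|\mathcal{C}|$ directly against $5^{3(n+6)/2}$ by counting $|\operatorname{res}|\cdot|\operatorname{tor}|$. If this count falls short, the argument as stated cannot close. In that case I would look for the missing residue generator, most likely a hypothesis that an extra $\mathtt{a}$-row is intended, and the proof of self-duality would go through the same orthogonality checks with that row included.
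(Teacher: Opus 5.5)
Your self-orthogonality computation and your description of $\operatorname{res}(\mathcal{C})$ are correct. The ``main obstacle'' you isolate is not a gap you can close: the statement is false as written, and you should say so rather than hedge.

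The escape route you leave open does not exist. Since $\mathtt{a}^2=\mathtt{b}$ and $\mathtt{a}\mathtt{b}=\mathtt{b}^2=0$, every product of two elements of $I_5$ lies in $J$. So every $I_5$-multiple of a row of $G$ lies in $J^{n+6}$ and has zero residue. Hence $\operatorname{res}(\mathcal{C})$ is exactly the $\mathbb{F}_5$-span of $(1,0,0,0,0,0,\mathbf{x}_1)$, $(0,1,0,0,0,0,\mathbf{x}_2)$ and the $n/2$-dimensional lift of $\operatorname{res}(\mathcal{C}_0)$. This gives $\dim\operatorname{res}(\mathcal{C})\le n/2+2<(n+6)/2$, and in fact equality holds. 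By the characterization of self-dual $I_5$-codes (self-dual residue and full torsion), $\mathcal{C}$ is not self-dual.

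Your torsion step does go through: the $6\times 6$ rank computation comes out to full rank over $\mathbb{F}_5$. So $\mathcal{C}$ is self-orthogonal with full torsion, and $|\mathcal{C}|=5^{n/2+2}\cdot 5^{n+6}=5^{3n/2+8}$. That is one factor of $5$ short of $5^{3(n+6)/2}$.

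Here is a concrete counterexample. Take $n=2$, $\mathcal{C}_0$ generated by $(\mathtt{a},\mathtt{e})$ and $(0,\mathtt{b})$, $\mathbf{x}_1=(2,0)$ and $\mathbf{x}_2=(0,2)$. The residue of the resulting length-$8$ code is spanned by $(1,0,0,0,0,0,2,0)$, $(0,1,0,0,0,0,0,2)$ and $(3,1,4,1,3,2,1,2)$. It has dimension $3$, not $4$.

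The paper's proof fails at the same point. It asserts $|\mathcal{C}|=5^{3}\cdot 25^{3}\,|\mathcal{C}_0|$, which treats three of the six new rows as $\mathtt{a}$-rows. But $G$ has two $\mathtt{a}$-rows and four $\mathtt{b}$-rows, so the correct factor is $25^{2}\cdot 5^{4}=5^{8}$. Its conclusion $|\operatorname{res}(\mathcal{C})|=5^{(n+6)/2}$ is therefore unsupported.

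As for your repair, merely changing $\mathtt{b}$ to $\mathtt{a}$ in one of rows $3$--$6$ fails:
\begin{itemize}
\item the underlying vector of row $3$ has inner product $4(\mathbf{x}_2,\mathbf{x}_2)=1$ with $(0,1,0,0,0,0,\mathbf{x}_2)$;
\item the underlying vectors of rows $4,5,6$ have norms $4,1,1$.
\end{itemize}
Adjoining a new row such as $\mathtt{a}(0,0,4,2,1,2,\mathbf{0})$ does work:
\begin{itemize}
\item the vector is isotropic and orthogonal to both $\mathtt{a}$-row residues;
\item it pairs with every lifted residue to $10\bar u+15\bar v=0$;
\item it is not in $\operatorname{res}(\mathcal{C})$, because any element of $\operatorname{res}(\mathcal{C})$ with zero tail has first six coordinates $(0,0,b,a+b,2a+b,3a+4b)$ for some $a,b\in\mathbb{F}_5$.
\end{itemize}
So this row raises the residue to a self-dual code of dimension $n/2+3$, and the resulting code is self-dual.
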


\begin{proof}
It can be shown that $\mathcal{C}$ is self-orthogonal. Now, let $\widehat{  \mathcal{C}_0  }$ be  the span of the last $m$ generators. Furthermore, let $D$ be the $\mathbb{F}_{5}$-span of $\left\{\mathbf{v}_{1}, \mathbf{v}_{2}, \mathbf{v}_{3}, \mathbf{v}_{4} , \mathbf{v}_5 , \mathbf{v}_{6}  \right\}$ where $\mathbf{v}_1 = (1 , 0 , 0 , 0 , 0 , 0 , \mathbf{x}_{1}  )$, $\mathbf{v}_2 = (0 , 1 , 0 , 0 , 0 , 0 ,  \mathbf{x}_{2} )$, $\mathbf{v}_3 = (0 , 0 , 1 , 0 , 0 , 0 , 4 \mathbf{x}_{2} )$,  $\mathbf{v}_4 = ( 0,0,0,1, 0 , 0 , ( 4 \mathbf{x}_{1} + 4 \mathbf{x}_{2} )  )$,  $\mathbf{v}_{5} = ( 0 , 0 , 0 , 0 , 1 , 0 , ( 3 \mathbf{x}_{1} + 4 \mathbf{x}_{2} )  )$, and $\mathbf{v}_6 = ( 0 , 0 , 0 , 0 , 0 , 1 ,  (  2 \mathbf{x}_1    +  \mathbf{x}_2 )   )$. Thus, the construction of $\operatorname{tor}(\mathcal{C})$ in the theorem is equivalent to
\begin{align*}
\operatorname{tor}  (\mathcal{C}) = D + \operatorname{tor} ( \widehat{\mathcal{C}_{0}} )  .
\end{align*}
By Theorem \ref{thm: SD code iff resC is SD and torC is F_5^n}, it follows that $\operatorname{tor}\left(\mathcal{C}_{0}\right) = \mathbb{F}_5^n$ since $\mathcal{C}_0$ is self-dual. Thus, we can infer that there is a ratio of $5^6$ on the sizes since $|D|=5^6$ and $\operatorname{tor}(\mathcal{C})=\mathbb{F}_{5}^{n+6}$.

Now, it suffices to show that $\operatorname{res}(\mathcal{C})$ is a self-dual  code over $\mathbb{F}_5$. Since $\mathcal{C}$ is self-orthogonal, $\operatorname{res}(\mathcal{C})$ is also self-orthogonal, i.e., $\operatorname{res}(\mathcal{C}) \subseteq \operatorname{res}(\mathcal{C})^{\perp}$. By Theorem \ref{thm: SD code iff resC is SD and torC is F_5^n}, $| \mathcal{C}_{0} |=5^{\frac{3 n}{2}}$  and because $G$ has six additional generators, it follows that $| \mathcal{C}|=5^{3} 25^{3} | \mathcal{C}_{0} | = 5^{\frac{3(n+6)}{2}}$. Thus, $|\operatorname{res}(\mathcal{C})|= | \operatorname{res}(\mathcal{C})^{\perp} |=5^{\frac{n+6}{2}}$ and so $\operatorname{res}(\mathcal{C})$ is a self-dual code over $\mathbb{F}_5$.

Hence, $\mathcal{C} = \mathtt{a} \operatorname{res}(\mathcal{C}) + \mathtt{b} \mathbb{F}_{5}^{n+6}$ and the code $\mathcal{C}$ obtained from the self-dual code $\mathcal{C}_{0}$ by the building-up construction is also self-dual.
\end{proof}

The following theorem is a special case of Lemma 12 in \cite{MassFormula_Ip} when dealing with $p=5$.

\begin{theorem}
Two self-dual codes $\mathcal{C}_1$ and $\mathcal{C}_2$ over $I_5$ are monomially equivalent if and only if their residue codes are equivalent.
\end{theorem}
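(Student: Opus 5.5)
The plan rests on Theorem \ref{thm: SD code iff resC is SD and torC is F_5^n}, which shows that a self-dual $I_5$-code is completely determined by its residue code. If $\mathcal{C}$ is self-dual of length $n$, then $\mathcal{C} = \mathtt{a}\operatorname{res}(\mathcal{C}) + \mathtt{b}\mathbb{F}_5^n$. Hence $\mathcal{C}$ and $\operatorname{res}(\mathcal{C})$ determine each other, and the statement reduces to checking that this correspondence is compatible with the monomial action. Throughout, "equivalent" for residue codes over $\mathbb{F}_5$ will mean equivalent under the same group of signed permutation matrices, i.e. monomial matrices with entries in $\{1,-1\}$. That is the group used in the paper's definition of monomial equivalence, and it is the only reading under which the statement can hold.

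The first step is a compatibility lemma. For a monomial matrix $M$ with entries $\pm 1$ and any $\mathbf{u},\mathbf{v}\in\mathbb{F}_5^n$, we have
$$(\mathtt{a}\mathbf{u}+\mathtt{b}\mathbf{v})M=\mathtt{a}(\mathbf{u}M)+\mathtt{b}(\mathbf{v}M).$$
This holds because $M$ only permutes coordinates and multiplies some of them by $-1$, and the action of $-1 = 4$ on $I_5$ is the $\mathbb{F}_5$-action defined in the paper. Applying $\pi$ to both sides gives $\pi(\mathbf{c}M) = \pi(\mathbf{c})M$. It follows that $\operatorname{res}(\mathcal{C}M) = \operatorname{res}(\mathcal{C})M$, and likewise $\operatorname{tor}(\mathcal{C}M) = \operatorname{tor}(\mathcal{C})M$. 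We also have $\mathbb{F}_5^n M = \mathbb{F}_5^n$.

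Forward direction: suppose $\mathcal{C}_2 = \mathcal{C}_1 M$. The lemma gives $\operatorname{res}(\mathcal{C}_2) = \operatorname{res}(\mathcal{C}_1)M$ directly, so the residue codes are equivalent via the same $M$.

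Converse: suppose $\operatorname{res}(\mathcal{C}_2) = \operatorname{res}(\mathcal{C}_1)M$. By Theorem \ref{thm: SD code iff resC is SD and torC is F_5^n},
$$\mathcal{C}_1 M = \mathtt{a}\,\operatorname{res}(\mathcal{C}_1)M + \mathtt{b}\,\mathbb{F}_5^n M = \mathtt{a}\,\operatorname{res}(\mathcal{C}_2) + \mathtt{b}\,\mathbb{F}_5^n = \mathcal{C}_2.$$
Alternatively, one can argue that $\mathcal{C}_1 M$ is self-dual, since $M$ preserves the inner product because $MM^T = I$, and that it has residue code $\operatorname{res}(\mathcal{C}_2)$. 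Uniqueness from Theorem \ref{thm: SD code iff resC is SD and torC is F_5^n} then gives $\mathcal{C}_1 M = \mathcal{C}_2$.

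The only delicate point is the equivalence notion on the $\mathbb{F}_5$ side. If residue equivalence were taken with arbitrary nonzero scalars $2,3$, then $M$ would not preserve self-duality of $I_5$-codes in general and would not lie in the allowed group. So I would state explicitly that residue equivalence uses $\pm1$ monomial matrices, matching the paper's definition. The rest is routine verification of the lemma.
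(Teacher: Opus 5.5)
Your proof is correct and is essentially the argument behind Lemma 12 of \cite{MassFormula_Ip}, which the paper cites in place of a proof. A self-dual code is determined by its residue via $\mathcal{C}=\mathtt{a}\operatorname{res}(\mathcal{C})+\mathtt{b}\mathbb{F}_5^n$, and signed permutation matrices commute with this decomposition.

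One correction to your closing remark: your claim that the $\pm1$ reading is the only one under which the statement holds is false. Suppose $\mathcal{C}$ and $\mathcal{C}DP$ are both self-dual over $\mathbb{F}_5$, with $D=\operatorname{diag}(d_i)$ and $P$ a permutation matrix. Then $\mathcal{C}D$ is also self-dual, and comparing $\sum d_i^2x_iy_i=0$ with $\sum x_iy_i=0$ gives $\sum_{i\in S}x_iy_i=0$ for $S=\{i: d_i=\pm2\}$. A dimension count then splits $\mathcal{C}$ into self-dual pieces on $S$ and on its complement. The factor $2$ on $S$ is absorbed by linearity, so $\mathcal{C}D=\mathcal{C}E$ for a $\pm1$ diagonal $E$. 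Hence general monomial equivalence of self-dual residue codes already implies signed-permutation equivalence.
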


\section{Computational Results}


In this section, we present the classification of self-orthogonal, quasi self-dual, and self-dual $I_5$-codes of length $n \leq 4$ using our construction methods. All of the computations were done using \texttt{MAGMA} \cite{MAGMA} and the classification is shown in Table \ref{tbl: inequiv SO codes over I5} below. Codes whose types are marked with $\dagger$  give the corrected versions of codes in Table 3 of \cite{MassFormula_Ip}.  For codes whose types are marked with $\S$, the complete list of codes can be accessed at \url{https://sites.google.com/view/marvinolavides/papers/classification-of-so-codes-over-i5}.

\begingroup
\setlength{\tabcolsep}{0.15pt}
{  \fontsize{8.65pt}{9.0pt}\selectfont
\begin{longtable}{ccccc@{\hspace{10pt}}cccc}
\caption*{Inequivalent self-orthogonal codes of length $n \leq 4$ over $I_5$}  
\label{tbl: inequiv SO codes over I5}  
\endfirsthead
\endhead
\thickhline
{   }    \\[-0.5em]
$\boldsymbol{n}$   
& \textbf{Type}    
& $\begin{array}{c} \textbf{Code}   \end{array}$       
& $\begin{array}{c} \textbf{Number of} \\ \textbf{Distinct Codes}   \end{array}$             
& $\boldsymbol{| \operatorname{Aut} (\mathcal{C}) |}$ 
& $\boldsymbol{d(\mathcal{C})}$
& $\begin{array}{c} \textbf{Weight} \\  \textbf{Distribution} \end{array}$                                                 
& \textbf{SO/QSD/SD}     \\

\phantom{text}  \\[-0.5em]
\thickhline   
\phantom{text}  \\[-0.5em]

$1$             
& $\{ 0 , 1 \}$    
& $( \mathtt{b} )$    
& $1$            
& $2$       
 & $1$                                           
& $[ \langle 0 , 1 \rangle , \langle 1 , 4 \rangle ]$                                                                      
& QSD                \\

\phantom{text}  \\[-0.75em]
\thickhline   
\phantom{text}  \\[-0.5em]

$2$     
&  $\{  0 , 1  \}$
&  $\left( \begin{array}{cc} \mathtt{b}  &  0   \end{array} \right)$         
& $1$ 
&  $4$
& $1$
&   $[ \langle  0 ,  1   \rangle ,   \langle   1  ,  4    \rangle   ]$   
&   SO  \\

\phantom{text}  \\[-0.5em]

{   }   
&  {    }
&  $\left( \begin{array}{cc} \mathtt{b}  &  \alpha   \end{array} \right)$          
&  $2$
&  $4$
& $2$
&   $[ \langle  0 ,  1   \rangle ,   \langle   2  ,  4    \rangle   ]$   
&   SO  \\
{   }    &   {   }    &   where $\alpha \in \{  \mathtt{b} , \mathtt{d}  \}$   \\

\phantom{text}  \\[-0.75em]
\cline{2-8}
\phantom{text}  \\[-0.5em]

{    }     
& $\{ 0 , 2 \}$    
& $\left( \begin{array}{cc} \mathtt{b}  &  0   \\   0   &   \mathtt{b}  \end{array} \right)$    
&  $1$                                                                                    
& $8$    
& $1$                                             
& $[ \langle  0  ,  1  \rangle   ,   \langle  1 ,  8  \rangle  ,  \langle  2  ,   16 \rangle ]$                            
& QSD                \\

\phantom{text}  \\[-0.75em]
\cline{2-8}
\phantom{text}  \\[-0.5em]

 {    }       
&  $\{  1 , 0  \}^\dagger$  
&  $\left( \begin{array}{cc} \mathtt{a}  &  \mathtt{e} \end{array} \right)$  
&   $1$ 
&  $4$           
& $2$                                  
&  $[ \langle 0 , 1 \rangle , \langle 2 , 24 \rangle ]$                                                               
& QSD                \\

\phantom{text}  \\[-0.5em]

{    }                
&    {    }              
& $\left( \begin{array}{cc} 
\mathtt{a}  &  \alpha
\end{array} \right)$     
&  $2$                                                  
& $2$   
& $2$                                              
& $[ \langle 0 , 1 \rangle , \langle 2 , 24 \rangle ]$                                                                     
& QSD                \\
{     }       &     {    }        &    where $\alpha \in \{  \mathtt{f}  ,  \mathtt{h}    \}$    \\

\phantom{text}  \\[-0.75em]
\cline{2-8}
\phantom{text}  \\[-0.5em]

 {    }     
& $\{  1 , 1  \}$  
& $\left(   \begin{array}{cc}  \mathtt{a}  &  \mathtt{e}    \\   0   &   \mathtt{b}  \end{array}   \right)$   
&  $1$             
& $4$      
& $1$                                           
& $[ \langle 0 , 1 \rangle  ,   \langle  1  ,  8  \rangle  ,  \langle  2 ,  116 \rangle  ]$                                
&   SD                 \\

\phantom{text}  \\[-0.75em]
\thickhline
\phantom{text}  \\[-0.5em]

$3$   
& $\{  0 ,  1  \}$ 
& $\left(   \begin{array}{ccc}  \mathtt{b}  & 0   &   0     \end{array}   \right)$           
&  $1$                     
& $16$     
& $1$                                           
& $[ \langle 0 , 1 \rangle  ,   \langle  1  ,  4  \rangle   ]$  
& SO          \\

\phantom{text}  \\[-0.5em]

{  }
& {   }
& $\left(   \begin{array}{ccc}  \mathtt{b}  & 0   &   \alpha     \end{array}   \right)$          
&  $2$                       
& $8$ 
& $2$                                               
& $[ \langle 0 , 1 \rangle  ,   \langle  2  ,  4  \rangle   ]$  
& SO          \\
{     }       &     {    }        &    where $\alpha \in \{  \mathtt{b}  ,  \mathtt{d}    \}$    \\

\phantom{text}  \\[-0.5em]

{  }
& {   }
& $\left(   
\begin{array}{ccc}  
\mathtt{b}  & \mathtt{b}    &   \mathtt{b}     
\end{array}   \right)$   
&  $1$                              
& $12$    
& $3$                                            
& $[ \langle 0 , 1 \rangle  ,   \langle  3  ,  4  \rangle   ]$  
& SO          \\

\phantom{text}  \\[-0.5em]

{  }
& {   }
& $\left(   
\begin{array}{ccc}  
\mathtt{b}  & \mathtt{b}    &   \mathtt{d}     
\end{array}   \right)$ 
&  $1$                                  
& $4$  
& $3$                                              
& $[ \langle 0 , 1 \rangle  ,   \langle  3  ,  4  \rangle   ]$  
& SO          \\

\phantom{text}  \\[-0.75em]
\cline{2-8}
\phantom{text}  \\[-0.5em]

{   }   
&  $\{  0 , 2   \}$
&  {$\left(  \begin{array}{ccc}    
0  &   \mathtt{b}  &    \mathtt{i}   \\
\mathtt{b} & \mathtt{d} & \mathtt{b}
\end{array}  \right)$}
&  $1$
&  $8$
& $1$
&  $[  \langle 0 , 1  \rangle , \langle  1 ,  4    \rangle  ,  \langle   2  ,  4   \rangle  ,  \langle  3 ,  16  \rangle   ]$
&  SO     \\

\phantom{text}  \\[-0.5em]

{   }   
&  
&  {$\left(  \begin{array}{ccc}    
\mathtt{b}  &   \mathtt{n}  &    \mathtt{b}   \\
\mathtt{b} & \mathtt{d} & \mathtt{b}
\end{array}  \right)$}
&  $1$
&  $8$
& $1$
&  $[  \langle 0 , 1  \rangle , \langle  1 ,  4    \rangle  ,  \langle   2  ,  4   \rangle  ,  \langle  3 ,  16  \rangle   ]$
&  SO     \\

\phantom{text}  \\[-0.5em]

{   }   
&  {  }
&  $\left(  \begin{array}{ccc}    
\mathtt{b}  &   0  &    0   \\
0 & \mathtt{b} & 0
\end{array}  \right)$
&  $1$
&  $16$
& $1$
&  $[  \langle 0 , 1  \rangle , \langle  1 ,  8    \rangle  ,  \langle   2  ,  16   \rangle  ]$
&  SO     \\

\phantom{text}  \\[-0.5em]

{   }   
&  {  }
&  {$\left(  \begin{array}{ccc}    
\mathtt{b}  &   0  &    \mathtt{n}   \\
\mathtt{b} & \mathtt{d} & \mathtt{b}
\end{array}  \right)$}
&  $1$
&  $12$
& $2$
&  $[  \langle 0 , 1  \rangle  ,  \langle   2  ,  12   \rangle  ,  \langle  3 ,  12  \rangle   ]$
&  SO     \\

\phantom{text}  \\[-0.5em]

{   }   
&  {  }
&  {$\left(  \begin{array}{ccc}    
\mathtt{b}  &   \mathtt{b}  &    \mathtt{d}   \\
\mathtt{b} & \mathtt{d} & \mathtt{b}
\end{array}  \right)$}
&  $1$
&  $4$
& $2$
&  $[  \langle 0 , 1  \rangle ,   \langle   2  ,  12   \rangle  ,  \langle  3 ,  12  \rangle   ]$
&  SO     \\

\phantom{text}  \\[-0.75em]
\cline{2-8}
\phantom{text}  \\[-0.5em]

{      }     
& $\{  0 ,  3  \}$ 
& $\left(   \begin{array}{ccc}  \mathtt{b}  & 0   &   0    \\   0   &   \mathtt{b}    &   0   \\   0   &   0   &   \mathtt{b}   \end{array}   \right)$      
&  $1$                             
& $48$   
& $1$                                             
& $[ \langle 0 , 1 \rangle  ,   \langle  1  ,  12  \rangle  ,  \langle  2 ,  48 \rangle  ,  \langle  3  ,  64  \rangle ]$  
& QSD                \\

\phantom{text}  \\[-0.75em]
\cline{2-8}
\phantom{text}  \\[-0.5em]

{   }    
&$ \{  1 , 0  \}$   
& $\left(   \begin{array}{ccc}  
\mathtt{a}  & \mathtt{b}   &   \alpha     
\end{array}     \right)$   
&  $2$                                                                                          
& $2$           
& $2$                                      
& $[ \langle 0 , 1 \rangle  ,   \langle  2 ,  4  \rangle ,  \langle  3  ,  20   \rangle   ]$                                
& SO                \\
{  }  &  {  }  &  where $\alpha \in \{  \mathtt{e} , \mathtt{f}     \}$  \\

\phantom{text}  \\[-0.5em]

{   }         
& {  }
& $\left(   \begin{array}{ccc}  
\mathtt{a}  & \alpha   &   \mathtt{b}    
\end{array}     \right)$    
&  $3$                                                                                           
& $2$            
& $2$                                     
& $[ \langle 0 , 1 \rangle  ,   \langle  2 ,  4  \rangle ,  \langle  3  ,  20   \rangle   ]$                                
& SO                \\
{  }  &  {  }  &  where $\alpha \in \{  \mathtt{h} , \mathtt{k} , \mathtt{s}     \}$  \\

\phantom{text}  \\[-0.5em]

{  }              
&  {   }
& $\left(   \begin{array}{ccc}  
\mathtt{a}  &    \mathtt{e}   &  0    
\end{array}     \right)$   
&  $1$                                                                                           
& $8$   
& $2$                                              
& $[ \langle 0 , 1 \rangle  ,   \langle  2  ,  24  \rangle   ]$                                
& SO                \\

\phantom{text}  \\[-0.5em]

{  }              
&  {   }
& $\left(   \begin{array}{ccc}  
\mathtt{a}  &    0   & \alpha
\end{array}     \right)$        
&  $2$                                                                                     
& $4$
& $2$                                                 
& $[ \langle 0 , 1 \rangle  ,   \langle  2  ,  24  \rangle   ]$                                
& SO                \\
{  }  &  {  }  &  where $\alpha \in \{  \mathtt{f} , \mathtt{h}    \}$  \\

\phantom{text}  \\[-0.75em]
\cline{2-8}
\phantom{text}  \\[-0.5em]

{    }				 
&  $\{  1 ,  1  \}^\dagger$   
&  {$ \left( \begin{array}{ccc}   
0  &  \mathtt{a}  & \mathtt{j}    \\   
\mathtt{b}  &  \mathtt{d}  &  \mathtt{b}    
\end{array} \right)$}       
&  $1$                                                          
& $8$    
& $1$                                             
& $[ \langle 0 , 1 \rangle  ,   \langle  1  ,  4  \rangle  ,  \langle  2 ,  24   \rangle  , \langle  3 ,  96   \rangle ]$                              
&      QSD                \\

\phantom{text}  \\[-0.5em]

{    }				
&         
&  {$\left( \begin{array}{ccc}   
\mathtt{a}  &  \mathtt{f}  &  \mathtt{i}    \\   
\mathtt{b}  &  \mathtt{d}  &  \mathtt{b}    
\end{array} \right)$}    
&  $1$                                                             
& $4$     
& $1$                                            
& $[ \langle 0 , 1 \rangle  ,   \langle  1  ,  4  \rangle  ,  \langle  2 ,  24   \rangle  , \langle  3 ,  96   \rangle ]$                              
&      QSD                \\ 

\phantom{text}  \\[-0.5em]

{    }				
&         
&  {$\left( \begin{array}{ccc}   
\mathtt{a}  &  \mathtt{h}  &  \mathtt{b}    \\   
\mathtt{b}  &  \mathtt{d}  &  \mathtt{b}    
\end{array} \right)$}    
&  $1$                                                             
& $4$     
& $1$                                            
& $[ \langle 0 , 1 \rangle  ,   \langle  1  ,  4  \rangle  ,  \langle  2 ,  24   \rangle  , \langle  3 ,  96   \rangle ]$                              
&      QSD                \\ 

\phantom{text}  \\[-0.5em]

{  	  }
&     {      }          
&  $\left( \begin{array}{ccc}   
\mathtt{a}  &  \mathtt{e}   & \mathtt{b}    \\   
0  &  \mathtt{b}  &  0     
\end{array} \right)$          
&  $1$                                                                 
& $2$     
& $1$                                           
& $[ \langle 0 , 1 \rangle  ,  \left\langle  1 , 8  \right\rangle  ,   \langle  2 ,  16   \rangle  , \langle  3 ,  100   \rangle ]$                              
& QSD                \\  

\phantom{text}  \\[-0.5em]

{  	  }
&     {    }         
&  $\left( \begin{array}{ccc}   
\mathtt{a}  &  \mathtt{e}   & 0    \\   
0  &  \mathtt{b}  &  0     
\end{array} \right)$        
&  $1$                                                                  
& $8$         
& $1$                                       
& $[ \langle 0 , 1 \rangle  ,  \left\langle  1 , 8  \right\rangle  ,   \langle  2 ,  116   \rangle   ]$                              
& QSD                \\  

\phantom{text}  \\[-0.5em]

{  	  }
&     {      }          
&  $\left( \begin{array}{ccc}   
\mathtt{a}  &  0   & \mathtt{e}    \\   
0  &  \mathtt{b}  &  \mathtt{b}     
\end{array} \right)$  
&  $1$                                                                        
& $2$    
& $2$                                            
& $[ \langle 0 , 1 \rangle  ,    \langle  2 ,  32   \rangle  , \langle  3 ,  92   \rangle ]$                              
& QSD                \\  

\phantom{text}  \\[-0.5em]

{  	  }
&     {      }          
&  $\left( \begin{array}{ccc}   
\mathtt{a}  &  \mathtt{e}   &  \mathtt{b}    \\   
0  &  \mathtt{b}  &  \alpha  
\end{array} \right)$     
&  $3$                                                                      
& $2$   
& $2$                                             
& $[ \langle 0 , 1 \rangle  ,    \langle  2 ,  32   \rangle  , \langle  3 ,  92   \rangle ]$                              
& QSD                \\  
{   }        &       {     }      &   where $\alpha \in \{ \mathtt{b} , \mathtt{i} , \mathtt{n}    \}$    \\

\phantom{text}  \\[-0.5em]

{  	  }
&     {      }          
&  $\left( \begin{array}{ccc}   
\mathtt{a}  &  \mathtt{e}   &  \mathtt{d}    \\   
0  &  \mathtt{b}  &  \mathtt{b}     
\end{array} \right)$  
&  $1$                                                                         
& $2$      
& $2$                                          
& $[ \langle 0 , 1 \rangle  ,    \langle  2 ,  32   \rangle  , \langle  3 ,  92   \rangle ]$                              
& QSD                \\  

\phantom{text}  \\[-0.75em]
\cline{2-8}
\phantom{text}  \\[-0.5em]

{    }   
&     $\{  1 , 2  \}$        
&  $\left( \begin{array}{ccc}   \mathtt{a}  &  \mathtt{e}  & 0    \\   
0  &  \mathtt{b}  &  0  \\
0 & 0 & \mathtt{b}
\end{array} \right)$  
&  $1$                                                                          
& $8$    
& $1$                                            
& $[ \langle 0 , 1 \rangle  ,   \langle  1  ,  12  \rangle  ,  \langle  2 ,  148   \rangle  , \langle  3 ,  464   \rangle ]$                              
& SO                \\

\phantom{text}  \\[-0.75em]
\thickhline   
\phantom{text}  \\[-0.5em]

$4$   
&  $\{  0 ,  1   \}$  
&    $\left(  \begin{array}{cccc}   \mathtt{b}  &  0 & 0  &  0      \end{array}  \right)$   
&  $1$
&$96$  
& $1$
&   $[  \langle  0 , 1  \rangle ,  \langle  1 , 4  \rangle ]$  
&   SO   \\

\phantom{text}  \\[-0.5em]

{   }
&  {    }
&    $\left(  \begin{array}{cccc}   
\mathtt{b}  &  \alpha & 0  &  0      \end{array}  \right)$   
&  $2$
&$32$  
& $2$
&   $[  \langle  0 , 1  \rangle ,  \langle  2 , 4  \rangle ]$  
&   SO   \\
{  }   &  {   }   &  where $\alpha \in \{ \mathtt{b}   , \mathtt{d}     \}$    \\

\phantom{text}  \\[-0.5em]

{   }
&  {    }  
&    $\left(  \begin{array}{cccc}   \mathtt{b}  &  \mathtt{b}  & \mathtt{b}  &  0      \end{array}  \right)$   
&  $1$
&$24$  
& $3$
&   $[  \langle  0 , 1  \rangle ,  \langle  3 , 4  \rangle ]$  
&   SO   \\

\phantom{text}  \\[-0.5em]

{   }
&  {    }  
&    $\left(  \begin{array}{cccc}   \mathtt{b}  &  \mathtt{b}  & \mathtt{d}  &  0      \end{array}  \right)$   
&  $1$
&$8$  
& $3$
&   $[  \langle  0 , 1  \rangle ,  \langle  3 , 4  \rangle ]$  
&   SO   \\

\phantom{text}  \\[-0.5em]

{   }
&  {    }  
&    $\left(  \begin{array}{cccc}   \mathtt{b}  &  \mathtt{b}  & \mathtt{b}  &  \mathtt{b}      \end{array}  \right)$   
&  $1$
&$48$  
& $4$
&   $[  \langle  0 , 1  \rangle ,  \langle  4 , 4  \rangle ]$  
&   SO   \\

\phantom{text}  \\[-0.5em]

{   }
&  {    }  
&    $\left(  \begin{array}{cccc}   \mathtt{b}  &  \mathtt{b}  & \mathtt{d}  &  \mathtt{d}      \end{array}  \right)$   
&  $1$
&$16$  
& $4$
&   $[  \langle  0 , 1  \rangle ,  \langle  4 , 4  \rangle ]$  
&   SO   \\

\phantom{text}  \\[-0.5em]

{   }
&  {    }  
&    $\left(  \begin{array}{cccc}   \mathtt{b}  &  \mathtt{b}  & \mathtt{b}  &  \mathtt{d}      \end{array}  \right)$   
&  $1$
&$12$ 
& $4$ 
&   $[  \langle  0 , 1  \rangle ,  \langle  4 , 4  \rangle ]$  
&   SO   \\

\phantom{text}  \\[-0.75em]
\cline{2-8}
\phantom{text}  \\[-0.5em]

{  }   
&  $\{ 0  ,  2 \}$   
&    $\left(  \begin{array}{cccc}  
\mathtt{b}  &  0 & 0 &  0  \\ 
0  & \mathtt{b} & \alpha & 0         
\end{array}  \right)$   
&  $2$
&$16$  
& $1$
&   $[  \langle  0 , 1  \rangle ,  \langle  1 , 4  \rangle,  \langle  2 , 4  \rangle ,  \langle   3  ,  16  \rangle  ]$  
&   SO   \\ 
{  }  &  {  }  &  where $\alpha  \in \{ \mathtt{b}  , \mathtt{d}   \}$   \\

\phantom{text}  \\[-0.5em]

{  }  
&  {    }  
&    $\left(  \begin{array}{cccc}  
\mathtt{b}  &  0 & 0 & 0   \\
0  & \mathtt{b} &\mathtt{b} & \mathtt{b}        
\end{array}  \right)$   
&  $1$
&$24$  
& $1$
&   $[  \langle  0 , 1  \rangle ,  \langle  1 , 4  \rangle,  \langle  3 , 4  \rangle ,  \langle   4  ,  16  \rangle  ]$  
&   SO   \\ 

\phantom{text}  \\[-0.5em]

{  }  
&  {    }  
&    $\left(  \begin{array}{cccc}  
\mathtt{b}  &  0 & 0 & 0  \\
0  & \mathtt{b} & \mathtt{b}  & \mathtt{d}        
\end{array}  \right)$   
&  $1$
&$8$  
& $1$
&   $[  \langle  0 , 1  \rangle ,  \langle  1 , 4  \rangle,  \langle  3 , 4  \rangle ,  \langle   4  ,  16  \rangle  ]$  
&   SO   \\ 

\phantom{text}  \\[-0.5em]

{  }     
&  {   }   
&    $\left(  \begin{array}{cccc}  \mathtt{b}  &  0 & 0  &  0  \\ 
0  & \mathtt{b} & 0 & 0         
\end{array}  \right)$
&  $1$   
&$64$  
& $1$
&   $[  \langle  0 , 1  \rangle ,  \langle  1 , 8 \rangle,  \langle  2 , 16  \rangle  ]$  
&   SO   \\

\phantom{text}  \\[-0.5em]

{  }  
&  {    }  
&    $\left(  \begin{array}{cccc}  
\mathtt{b}  & 0 & \mathtt{b} &  0  \\
0  & \mathtt{b} & \alpha & \mathtt{b}         
\end{array}  \right)$   
&  $2$
&$8$  
& $2$
&   $[  \langle  0 , 1  \rangle ,  \langle  2, 4  \rangle,   \langle   3  ,  8  \rangle  ,   \langle  4  ,  12  \rangle  ]$  
&   SO   \\   
{  }  &  {  }  &  where $\alpha  \in \{ \mathtt{b} , \mathtt{d}    \}$   \\

\phantom{text}  \\[-0.5em]

{  }  
&  {    }  
&    $\left(  \begin{array}{cccc}  
\mathtt{b}  & 0   & \mathtt{d} & 0  \\
0  & \mathtt{b} & \mathtt{b} & \alpha
\end{array}  \right)$   
&  $2$
&$4$  
& $2$
&   $[  \langle  0 , 1  \rangle ,  \langle  2, 4  \rangle,   \langle   3  ,  8  \rangle  ,   \langle  4  ,  12  \rangle  ]$  
&   SO   \\   
{  }  &  {  }  &  where $\alpha  \in \{ \mathtt{b}   , \mathtt{d}   \}$   \\

\phantom{text}  \\[-0.5em]

{  }  
&  {    }  
&    $\left(  \begin{array}{cccc}  
\mathtt{b}  & 0  & 0 &  \mathtt{b}  \\ 
0  & \mathtt{b} & \mathtt{d} & \mathtt{b}
\end{array}  \right)$ 
&  $1$  
&$4$  
& $2$
&   $[  \langle  0 , 1  \rangle ,  \langle  2, 4  \rangle,   \langle   3  ,  8  \rangle  ,   \langle  4  ,  12  \rangle  ]$  
&   SO   \\   

\phantom{text}  \\[-0.5em]

{  }  
&  {    }  
&    $\left(  \begin{array}{cccc} 
 \mathtt{b}  &  0  & \alpha &  0    \\ 
0  & \mathtt{b} & 0 & \alpha   
\end{array}  \right)$   
&  $2$
&$32$  
& $2$
&   $[  \langle  0 , 1  \rangle ,  \langle  2, 8  \rangle,   \langle   4  ,  16  \rangle  ]$  
&   SO   \\ 
{  }  &  {  }  &  where $\alpha  \in \{ \mathtt{b} , \mathtt{d}   \}$   \\

\phantom{text}  \\[-0.5em]

{  }  
&  {    }   
&    $\left(  \begin{array}{cccc}  
\mathtt{b} & 0  & \mathtt{b} & 0  \\
0  & \mathtt{b} & 0  &  \mathtt{d}
\end{array}  \right)$   
&  $1$
& $16$  
& $2$
&   $[  \langle  0 , 1  \rangle ,  \langle  2, 8  \rangle,   \langle   4  ,  16  \rangle  ]$  
&   SO   \\ 

\phantom{text}  \\[-0.5em]

{  }  
&  {    }  
&    $\left(  \begin{array}{cccc}  
\mathtt{b}  &  0 & \mathtt{b} &  0    \\ 
0  & \mathtt{b} & \mathtt{b} & 0         
\end{array}  \right)$  
&  $1$ 
&$24$  
& $2$
&   $[  \langle  0 , 1  \rangle ,    \langle   2  ,  12    \rangle  ,   \langle  3  ,  12    \rangle  ]$  
&   SO   \\     

\phantom{text}  \\[-0.5em]

{  }  
&  {    }  
&    $\left(  \begin{array}{cccc}  
\mathtt{b}  & 0 & \mathtt{d} &  \alpha   \\
0  & \mathtt{b} & \mathtt{b} & \alpha         
\end{array}  \right)$
&  $2$   
&$8$  
& $2$
&   $[  \langle  0 , 1  \rangle ,    \langle   2  ,  12    \rangle  ,   \langle  3  ,  12    \rangle  ]$  
&   SO   \\     
{  }  &  {  }  &  where $\alpha  \in \{  0 , \mathtt{b}   \}$   \\

\phantom{text}  \\[-0.5em]

{  }  
&  {    }  
&    $\left(  \begin{array}{cccc}  
\mathtt{b}  & 0  &  \mathtt{d}  &  \mathtt{b}  \\ 
0 & \mathtt{b}  & \mathtt{b}  &  \mathtt{d}
\end{array}  \right)$   
&  $1$
&$8$  
& $3$
&   $[  \langle  0 , 1  \rangle ,    \langle   3  ,  16    \rangle  ,   \langle  4  ,  8  \rangle  ]$  
&   SO   \\   

\phantom{text}  \\[-0.5em]

{  }  
&  {    }  
&    $\left(  \begin{array}{cccc}  
\mathtt{b} & 0 &  \mathtt{d} &   \mathtt{d}  \\ 
0 & \mathtt{b}  & \mathtt{b} & \mathtt{d}
\end{array}  \right)$   
&  $1$
&$4$  
& $3$
&   $[  \langle  0 , 1  \rangle ,    \langle   3  ,  16    \rangle  ,   \langle  4  ,  8  \rangle  ]$  
&   SO   \\   

\phantom{text}  \\[-0.75em]
\cline{2-8}
\phantom{text}  \\[-0.5em]

{    }    
&  $\{ 0  ,  3 \}$    
&    {$\left(  \begin{array}{cccc}  
0  &  \mathtt{b} & \mathtt{b}  & 0  \\
0  & 0 & 0  &  \mathtt{b}  \\
\mathtt{d} & \mathtt{n} & \mathtt{b} & 0
\end{array}  \right)$}   
&  $1$
&   $24$  
& $1$
&   $[  \langle  0 , 1  \rangle ,   \langle  1 , 4  \rangle ,    \langle  2 , 12  \rangle ,  \langle  3 , 60  \rangle ,  \langle  4  ,  48  \rangle  ]$  
&   SO   \\

\phantom{text}  \\[-0.5em]

{  }  
&    {    }  
&    {$\left(  \begin{array}{cccc}  
0  &  \mathtt{b} & \mathtt{b} & \mathtt{i}  \\
\mathtt{b}  & \mathtt{d} &  0  & \mathtt{b}     \\  
\mathtt{d} & \mathtt{n} & \mathtt{b} & 0
\end{array}  \right)$}   
&  $1$
&   $8$ 
& $1$ 
&   $[  \langle  0 , 1  \rangle ,  \langle  1 , 4  \rangle,  \langle  2 , 12  \rangle ,  \langle  3 , 60  \rangle ,  \langle  4  ,  48  \rangle  ]$  
&   SO   \\

\phantom{text}  \\[-0.5em]

{  }  
&    {    }   
&    {$\left(  \begin{array}{cccc}  
0  &  \mathtt{b} & \mathtt{i} & 0   \\
0  & 0 & 0 & \mathtt{b}     \\  
\mathtt{b} & \mathtt{d} & \mathtt{b} & 0
\end{array}  \right)$}   
&  $1$
&   $32$  
& $1$
&   $[  \langle  0 , 1  \rangle ,  \langle  1 , 8  \rangle,  \langle  2 , 20  \rangle ,  \langle  3 , 32  \rangle ,  \langle  4  ,  64  \rangle  ]$  
&   SO   \\   

\phantom{text}  \\[-0.5em]

{  }  
&    {    }   
&    $\left(  \begin{array}{cccc}  
\mathtt{b}  &  0 & 0 & 0   \\
0  & \mathtt{b} & 0 & 0     \\  
0 & 0 & \mathtt{b} & \mathtt{b}
\end{array}  \right)$   
&  $1$
&   $32$  
& $1$
&   $[  \langle  0 , 1  \rangle ,  \langle  1 , 8  \rangle,  \langle  2 , 20  \rangle ,  \langle  3 , 32  \rangle ,  \langle  4  ,  64  \rangle  ]$  
&   SO   \\   

\phantom{text}  \\[-0.5em]

{    }   
&  {    }    
&    {$\left(  \begin{array}{cccc}  
0  &  \mathtt{b} & 0  &  0  \\ 
0  & 0 & 0 & \mathtt{b}     \\  
0 & 0 & \mathtt{b} & 0  
\end{array}  \right)$}   
&  $1$
&   $96$
&   $1$  
&   $[  \langle  0 , 1  \rangle ,  \langle  1 , 12  \rangle,  \langle  2 , 48  \rangle ,  \langle  3 , 64  \rangle   ]$  
&   SO   \\

\phantom{text}  \\[-0.5em]

{  }  
&    {    }  
&    {$\left(  \begin{array}{cccc}  
0  &  \mathtt{b} & \mathtt{i} &  \mathtt{i}  \\
\mathtt{b}  & \mathtt{d} & 0  &  \mathtt{b}  \\
\mathtt{b} & \mathtt{d} & \mathtt{b} & 0
\end{array}  \right)$} 
&  $1$  
&   $48$  
& $2$
&   $[  \langle  0 , 1  \rangle ,    \langle  2 , 24  \rangle ,  \langle  3 , 48  \rangle ,  \langle  4  ,  52  \rangle  ]$  
&   SO   \\

\phantom{text}  \\[-0.5em]

{  }  
&    {    }   
&    $\left(  \begin{array}{cccc}  
\mathtt{b}  &  0 & 0 & \mathtt{b}    \\
0  & \mathtt{b} & 0  &  \mathtt{d}   \\
0 & 0 & \mathtt{b} & \mathtt{d}  
\end{array}  \right)$   
&  $1$
&   $16$  
& $2$
&   $[  \langle  0 , 1  \rangle ,    \langle  2 , 24  \rangle ,  \langle  3 , 48  \rangle ,  \langle  4  ,  52  \rangle  ]$  
&   SO   \\

\phantom{text}  \\[-0.5em]

{  }  
&    {    }   
&    {$\left(  \begin{array}{cccc}  
0  &  \mathtt{b} & \mathtt{b} & \mathtt{b}  \\
\mathtt{d}  & \mathtt{n} &  0  & \mathtt{b}     \\  
\mathtt{d} & \mathtt{n} & \mathtt{b} & 0  
\end{array}  \right)$}   
&  $1$
&   $12$  
& $2$
&   $[  \langle  0 , 1  \rangle ,    \langle  2 , 24  \rangle ,  \langle  3 , 48  \rangle ,  \langle  4  ,  52  \rangle  ]$  
&   SO   \\

\phantom{text}  \\[-0.75em]
\cline{2-8}
\phantom{text}  \\[-0.5em]

{  }    
&  $\{ 0  ,  4 \}$   
&    $\left(  \begin{array}{cccc}  \mathtt{b}  &  0 & 0  &  0  \\ 
0  & \mathtt{b} & 0 & 0  \\ 
0 & 0 & \mathtt{b} & 0  \\  0 & 0 & 0 & \mathtt{b}         \end{array}  \right)$   
&  $1$
& $384$  
& $1$
&   $[  \langle  0 , 1  \rangle ,  \langle  1 , 16 \rangle,  \langle  2 , 96  \rangle,  \langle 3  , 256  \rangle,  \langle  4  , 256   \rangle  ]$  
&   QSD   \\

\phantom{text}  \\[-0.75em]
\cline{2-8}
\phantom{text}  \\[-0.5em]

{  }
&  $\{  1 , 0  \}^\S$   
&   $\left(  \begin{array}{cccc}     
\mathtt{a}  &  \mathtt{a}  &  \mathtt{e}  &  \mathtt{e}
\end{array}  \right)$   
&  $1$
&     $16$   
& $4$  
&   $[   \langle 0 , 1  \rangle ,  \langle  4 , 24 \rangle  ]$    
&   SO  \\  

\phantom{text}  \\[-0.5em]

{  }
&  {   }
&   $\left(  \begin{array}{cccc}     
\mathtt{a}  &  \ovariant  & \mathtt{u}  &  \mathtt{s}
\end{array}  \right)$   
&     $1$ 
&     $8$     
& $4$
&   $[   \langle 0 , 1  \rangle ,  \langle  4 , 24 \rangle  ]$    
&   SO  \\  

\phantom{text}  \\[-0.5em]

{  }
&  {   }
&   $\left(  \begin{array}{cccc}     
\mathtt{a}  &  \ovariant  & \mathtt{x}  &  \mathtt{x}
\end{array}  \right)$   
&     $1$ 
&     $8$     
& $4$
&   $[   \langle 0 , 1  \rangle ,  \langle  4 , 24 \rangle  ]$    
&   SO  \\  

\phantom{text}  \\[-0.75em]
\cline{2-8}
\phantom{text}  \\[-0.5em]

{   }   
&  $\{  1  ,  1  \}^\S$ 
&   $\left(  \begin{array}{cccc}     
\mathtt{a}  &  \mathtt{a}  & \mathtt{w}  &  \mathtt{j}   \\
0  &  \mathtt{b}    & \mathtt{n}  &  \mathtt{n}
\end{array}  \right)$
&     $1$    
&     $2$     
&     $3$   
&   $[   \langle 0 , 1  \rangle ,   \langle  3 , 16  \rangle , \langle  4 , 108  \rangle   ]$  
&   SO  \\   

\phantom{text}  \\[-0.5em]

{   }
&  {    }
&   {$\left(  \begin{array}{cccc}     
0  &  \mathtt{b}  & \mathtt{d}  &  \mathtt{b}   \\
\mathtt{s}  &  \mathtt{q}    & \mathtt{a}  &  \mathtt{f}
\end{array}  \right)$}   
&     $1$ 
&     $2$     
&     $3$ 
&   $[   \langle 0 , 1  \rangle ,   \langle  3 , 16  \rangle , \langle  4 , 108  \rangle   ]$  
&   SO  \\   

\phantom{text}  \\[-0.5em]

{   }
&  {    }
&   $\left(  \begin{array}{cccc}     
\mathtt{a}  &  \mathtt{a}  & \mathtt{f}  &  \mathtt{e}   \\
0  &  \mathtt{b}    & \mathtt{i}  &  \mathtt{b}
\end{array}  \right)$  
&     $1$  
&     $2$ 
&     $3$     
&   $[   \langle 0 , 1  \rangle ,   \langle  3 , 16  \rangle , \langle  4 , 108  \rangle   ]$  
&   SO  \\   

\phantom{text}  \\[-0.75em]
\cline{2-8}
\phantom{text}  \\[-0.5em]

{    } 
&  $\{  1  ,  2  \}^\S$   
&   $\left(  \begin{array}{cccc}    
\mathtt{a}  &  \alpha   & 0  & \mathtt{d}  \\ 
0 & \mathtt{b} & 0 &  \mathtt{i}       \\ 
0 & 0 & \mathtt{b} & \mathtt{b}            
\end{array}  \right)$  
&   $2$   
&     $4$     
&     $2$     
&   $[   \langle 0 , 1  \rangle ,    \langle  2 , 44 \rangle ,  \langle  3 , 208 \rangle ,  \langle  4 ,  372 \rangle  ]$  
&   QSD  \\  
{  }   &   {   }  &  where $\alpha \in \{ \mathtt{e} , \mathtt{j}   \}$  \\

\phantom{text}  \\[-0.5em]

{ }  
&  { }  
&   $\left(  \begin{array}{cccc}    
\mathtt{a}  &  \mathtt{j}   & 0  & \mathtt{d}  \\ 
0 & \mathtt{b} & 0 & \mathtt{i}  \\ 
0 & 0 & \mathtt{b} & \mathtt{d}            
\end{array}  \right)$     
&   $1$
&     $2$     
&     $2$   
&   $[   \langle 0 , 1  \rangle ,    \langle  2 , 44 \rangle ,  \langle  3 , 208 \rangle ,  \langle  4 ,  372 \rangle  ]$  &   QSD  \\ 

\phantom{text}  \\[-0.5em]

{ }  
&  {   }  
&   $\left(  \begin{array}{cccc}    
\mathtt{a}  &  \mathtt{e}   & 0  & \mathtt{n}  \\ 
0 & \mathtt{b} & 0 & \mathtt{d}  \\ 
0 & 0 & \mathtt{b} & \mathtt{i}            
\end{array}  \right)$     
&   $1$
&     $2$     
&     $2$    
&   $[   \langle 0 , 1  \rangle ,    \langle  2 , 44 \rangle ,  \langle  3 , 208 \rangle ,  \langle  4 ,  372 \rangle  ]$  &   QSD  \\ 

\phantom{text}  \\[-0.75em]
\cline{2-8}
\phantom{text}  \\[-0.5em]

{   }   
&  $\{  1  ,  3  \}$   
&   $\left(  \begin{array}{cccc} 
\mathtt{a}  &  \mathtt{a}  &  \mathtt{e}  &  \mathtt{e}  \\
0   &  \mathtt{b}  &  0  &  0   \\
0   &  0  &  \mathtt{b}  & 0   \\
0  & 0 & 0 & \mathtt{b}
\end{array}  \right)$   
&  $1$
&     $16$     
&     $1$   
&   $[   \langle 0 , 1  \rangle ,  \langle  1 ,  16  \rangle  ,  \langle  2 , 96  \rangle ,  \langle  3 , 256 \rangle ,  \langle  4 ,  2756 \rangle  ]$  
&   SO  \\

\phantom{text}  \\[-0.5em]

{   }   
&  {   }  
&   $\left(  \begin{array}{cccc} 
\mathtt{a}  &  \mathtt{e}  &  0  &  0  \\
0   &  \mathtt{b}  &  0  &  0   \\
0   &  0  &  \mathtt{b}  & 0   \\
0  & 0 & 0 & \mathtt{b}
\end{array}  \right)$   
&  $1$
&     $32$  
&     $1$      
&   $[   \langle 0 , 1  \rangle ,  \langle  1 ,  16  \rangle  ,  \langle  2 , 196  \rangle ,  \langle  3 , 1056 \rangle ,  \langle  4 ,  1856 \rangle  ]$  
&   SO  \\

\phantom{text}  \\[-0.75em]
\cline{2-8}
\phantom{text}  \\[-0.5em]

{   }   
&   $\{ 2 , 0  \}^\S$  
&   {$\left(  \begin{array}{cccc}     
0  &  \mathtt{a}  &  0  &  \mathtt{j}   \\
\mathtt{f}  &  \mathtt{t}  &  \mathtt{a}  &  \mathtt{f}
\end{array}  \right)$}
&   $1$   
&     $8$     
&     $2$  
&   $[   \langle 0 , 1  \rangle ,   \langle  2 , 48 \rangle ,   \langle  4 ,  576 \rangle  ]$  
&   QSD  \\   

\phantom{text}  \\[-0.5em]

{   }
&   {     }
&   $\left(  \begin{array}{cccc}     
\mathtt{a}  &  0  &  0  &  \mathtt{x}   \\
0  &  \mathtt{a}  &  \mathtt{w}  &  0
\end{array}  \right)$   
&   $1$
&     $8$     
&     $2$    
&   $[   \langle 0 , 1  \rangle ,   \langle  2 , 48 \rangle ,   \langle  4 ,  576 \rangle  ]$  
&   QSD  \\   

\phantom{text}  \\[-0.5em]

{   }
&   {     }
&   {$\left(  \begin{array}{cccc}     
\mathtt{a}  &  \mathtt{g}  &  \mathtt{f}  &  \mathtt{f}   \\
\mathtt{k}  &  \mathtt{g}  &  \mathtt{a}  &  \mathtt{f}
\end{array}  \right)$}   
&   $1$
&     $4$    
&     $2$   
&   $[   \langle 0 , 1  \rangle ,   \langle  2 , 48 \rangle ,   \langle  4 ,  576 \rangle  ]$  
&   QSD  \\   

\phantom{text}  \\[-0.75em]
\cline{2-8}
\phantom{text}  \\[-0.5em]

{    }   
&   $\{  2  ,  1   \}^\S$ 
&   $\left(  \begin{array}{cccc}     
\mathtt{a}  &  0  &  0  &  \mathtt{x}  \\
0  &  \mathtt{a}  &  \mathtt{e}  &  0   \\
0  &  0  &  \mathtt{b}  &  \mathtt{n}
\end{array}  \right)$  
&   $1$ 
&     $2$     
&     $2$   
&   $[   \langle 0 , 1  \rangle ,    \langle  2 , 64  \rangle ,  \langle  3 , 368  \rangle  ,  \langle  4 ,  2692  \rangle  ]$  
&   SO  \\    

\phantom{text}  \\[-0.5em]

{    }
&   {    }
&   $\left(  \begin{array}{cccc}     
\mathtt{a}  &  0  &  0  &  \alpha  \\
0  &  \mathtt{a}  &  \mathtt{e}  &  \mathtt{n}   \\
0  &  0  &  \mathtt{b}  &  \mathtt{n}
\end{array}  \right)$   
&   $2$
&     $2$     
&     $2$  
&   $[   \langle 0 , 1  \rangle ,    \langle  2 , 64  \rangle ,  \langle  3 , 368  \rangle  ,  \langle  4 ,  2692  \rangle  ]$  
&   SO  \\    
{  }  &  {  }  &  where $\alpha \in \{ \mathtt{h} ,  \mathtt{s}   \}$   \\

\phantom{text}  \\[-0.5em]

{    }
&   {     }
&   $\left(  \begin{array}{cccc}     
\mathtt{a}  &  0  &  0  &  \mathtt{h}  \\
0  &  \mathtt{a}  &  \mathtt{j}  &  0   \\
0  &  0  &  \mathtt{b}  &  \mathtt{d}
\end{array}  \right)$  
&   $1$ 
&     $2$     
&     $2$   
&   $[   \langle 0 , 1  \rangle ,    \langle  2 , 64  \rangle ,  \langle  3 , 368  \rangle  ,  \langle  4 ,  2692  \rangle  ]$  
&   SO  \\    

\phantom{text}  \\[-0.75em]
\cline{2-8}
\phantom{text}  \\[-0.5em]

{    }   
&   $\{  2  ,  2   \}$ 
&   $\left(  \begin{array}{cccc}     
\mathtt{a}  &  0  &  \mathtt{e}  &  0  \\
0  &  \mathtt{a}  &  0  &  \mathtt{e}   \\
0  &  0  &  \mathtt{b}  &  0     \\
0  &  0  &  0  &  \mathtt{b}
\end{array}  \right)$   
&  $1$
&     $32$     
&     $1$   
&   $[   \langle 0 , 1  \rangle ,   \langle  1  ,  16  \rangle  ,  \langle  2 , 296  \rangle ,  \langle  3 , 1856  \rangle  ,  \langle  4 ,  13456   \rangle  ]$  
&   SD \\   

\phantom{text}  \\[-0.75em]
\thickhline

\vspace*{-25pt}

\end{longtable}

}
\endgroup

\section{Conclusion}
In this paper, we have constructed propagation rules for codes over the ring $I_5$ and used them to generate self-orthogonal codes, quasi self-dual codes, and self-dual codes over $I_5$. Using the building-up method together with the mass formula, we have classified all three categories of codes completely in length at most $n=4$ up to the monomial equivalence for all types $\{ k_1 , k_2 \}$. It is also a good problem to study other categories of codes over $I_p$, for instance, cyclic codes over $I_p$.

\section*{Declaration}
\textbf{Conflict of Interest} The authors declare that they have no conflict of interest in the manuscript.

\end{document}